\title{Certified simultaneous isotopic approximation of curves via subdivision\tnoteref{t1}}
\newtheorem{definition}{Definition}
\newtheorem{example}[definition]{Example}
\newtheorem{lemma}[definition]{Lemma}
\newtheorem{proposition}[definition]{Proposition}
\newtheorem{corollary}[definition]{Corollary}
\newtheorem{remark}[definition]{Remark}
\newtheorem{theorem}[definition]{Theorem}
\newcommand{\ZZ}{{\mathbb{Z}}}
\newcommand{\RR}{{\mathbb{R}}}
\newcommand{\CC}{{\mathbb{C}}}
\newcommand{\cA}{{\mathcal{A}}}
\newcommand{\cN}{{\mathcal{N}}}
\newcommand{\cS}{{\mathcal{S}}}
\newcommand{\cV}{{\mathcal{V}}}
\newcommand{\del}{\nabla}
\newcommand{\true}{\textnormal{\textsc{True}}}
\newcommand{\false}{\textnormal{\textsc{False}}}
\newcommand{\Diam}{\operatorname{Diam}}
\newcommand{\dist}{\operatorname{dist}}
\DeclareMathOperator{\area}{Area}
\begin{document}
	
	\author[1]{Michael Burr}
	\ead{burr2@clemson.edu}
	\author[1]{Michael Byrd}
	\ead{mbyrd6@clemson.edu}
	\affiliation[1]{organization={Clemson University},
		addressline={220 Parkway Drive},
		postcode={29634},
		city={Clemson, SC},
		country={USA}}
	
	\begin{abstract}
		We present a certified algorithm based on subdivision for computing an isotopic approximation to any number of curves in the plane.  Our algorithm is based on the certified curve approximation algorithm of Plantinga and Vegter.  The main challenge in this algorithm is to correctly and efficiently identify and isolate all intersections between the curves.  To overcome this challenge, we introduce a new and simple test that guarantees the global correctness of our output.  A main step in our algorithm for approximating any number of curves is to correctly approximate a pair of curves.  In addition to developing the details of this special case, we provide complexity analyses for both the number of steps and the bit-complexity of this algorithm using both worst-case bounds as well as those based on continuous amortization. 
	\end{abstract}
	
	\begin{keyword}
		Curve approximation\sep Plantinga and Vegter approximation algorithm\sep topological correctness\sep certified algorithms\sep interval arithmetic\sep symbolic-numeric algorithms\sep bit-complexity\sep worst-case bounds\sep continuous amortization \MSC[2020]14Q65 \sep 68W30\sep 68Q25 
	\end{keyword}
	
	\maketitle
	
	\section{Introduction}
	In \citet{PV:2004,PV:2007}, the authors introduced a pair of algorithms to construct piecewise-linear approximations to smooth and bounded real curves and surfaces in two and three dimensions, respectively.  They prove that the outputs of their algorithms are both topologically correct and can be refined to be as close to the underlying curve as desired.  Their algorithms are particularly interesting as they are symbolic-numeric algorithms based on subdivision whose predicates are simple, efficient, and easy to implement.  On singular input, however, the Plantinga and Vegter algorithms do not terminate as both of their predicates fail on regions containing singular points.  The current paper presents an algorithm in the spirit of the original Plantinga and Vegter algorithm for correctly approximating a union of smooth curves in the plane with transverse crossings.
	
	We begin by posing the main problem addressed in this paper.  Suppose that $f_1,\dots,f_n\in\ZZ[x,y]$ define $n$ smooth curves $\cV(f_1),\dots,\cV(f_n)$ in the real plane.  In addition, we assume that, pairwise, these curves intersect transversely and that no three of these curves intersect simultaneously.  We construct $n$ piecewise-linear approximations $\cA(f_1),\dots,\cA(f_n)$ such that each $\cA(f_i)$ approximates $\cV(f_i)$ and $(\cA(f_1),\dots,\cA(f_n))$ is topologically equivalent to $(\cV(f_1),\dots,\cV(f_n)).$  In our setting, topologically correct means that there is an ambient isotopy that deforms space while simultaneously taking each $\cV(f_i)$ to $\cA(f_i)$.  In particular, the intersection points of the curves are carried to the crossings of the approximations.  As in the original algorithm of Plantinga and Vegter, the approximations and curves can be made to be as close as desired in Hausdorff distance.
	
	The main challenge in extending the Plantinga and Vegter algorithm is that while the original algorithm can compute topologically correct approximations to any single curve, there is no guarantee that the corresponding ambient isotopies are compatible when applied to multiple curves, see Figure \ref{fig:missing_intersection}.  For example, it is possible for $\cA(f_i)\cap\cA(f_j)$ to have extraneous or missing intersections when compared to $\cV(f_i)\cap\cV(f_j)$.  Even if all of the intersections between individual pairs of curves are computed correctly, the order in which the intersections $\cA(f_j)\cap\cA(f_i)$ and $\cA(f_l)\cap\cA(f_i)$ occur along $\cA(f_i)$ might differ from the order in which the intersections $\cV(f_j)\cap\cV(f_i)$ and $\cV(f_l)\cap\cV(f_i)$ occur along $\cV(f_i)$.
	
	\begin{figure}[htb]
		\centering
		\begin{tabular}{cc}
			\begin{tikzpicture}[scale = 1.25 ] 

\draw[thin, black] (2.0, 2.0) -- (4, 2.0); 
\draw[thin, black] (4, 2.0) -- (4, 4); 
\draw[thin, black] (4, 4) -- (2.0, 4); 
\draw[thin, black] (2.0, 4) -- (2.0, 2.0); 
\draw [] (2.0, 2.0) rectangle (4, 4);

\draw[thin, black] (0, 0) -- (2.0, 0); 
\draw[thin, black] (2.0, 0) -- (2.0, 2.0); 
\draw[thin, black] (2.0, 2.0) -- (0, 2.0); 
\draw[thin, black] (0, 2.0) -- (0, 0); 
\draw [] (0, 0) rectangle (2.0, 2.0);

\draw[thin, black] (0, 2.0) -- (2.0, 2.0); 
\draw[thin, black] (2.0, 2.0) -- (2.0, 4); 
\draw[thin, black] (2.0, 4) -- (0, 4); 
\draw[thin, black] (0, 4) -- (0, 2.0); 
\draw [] (0, 2.0) rectangle (2.0, 4);

\draw[thin, black] (2.0, 0) -- (4, 0); 
\draw[thin, black] (4, 0) -- (4, 2.0); 
\draw[thin, black] (4, 2.0) -- (2.0, 2.0); 
\draw[thin, black] (2.0, 2.0) -- (2.0, 0); 
\draw [] (2.0, 0) rectangle (4, 2.0);

\draw[red, line width = 1.5pt, dashed] (2.0, 3.0) -- (3.0, 4.0); 

\draw[red, line width = 1.5pt, dashed] (2.0, 3.0) -- (0.0, 3.0); 

\draw[blue, line width = 1.5pt, dotted] (2.0, 1.0) -- (0.0, 1.0); 

\draw[blue, line width = 1.5pt, dotted] (2.0, 1.0) -- (4.0, 1.0); 

\draw[color=orange, line width = 1pt, dashed] plot[id=fmissingred, raw gnuplot, smooth] function{
	f(x,y) =  -0.54*x + 0.3*x**2 + 2.143 - y;
	set xrange [0:4];
	set yrange [0:4];
	set view 0,0;
	set isosample 1000,1000;
	set size square;
	set cont base;
	set cntrparam levels incre 0,0.1,0;
	unset surface;
	splot f(x,y)};

	\draw[color=cyan, line width = 1pt, dotted] plot[id=fmissingblue, raw gnuplot, smooth] function{
	g(x,y) =  0.44*x - 0.2*x**2 + 1.858 - y;
	set xrange [0:4];
	set yrange [0:4];
	set view 0,0;
	set isosample 1000,1000;
	set size square;
	set cont base;
	set cntrparam levels incre 0,0.1,0;
	unset surface;
	splot g(x,y)};

\end{tikzpicture}
			&
			\input{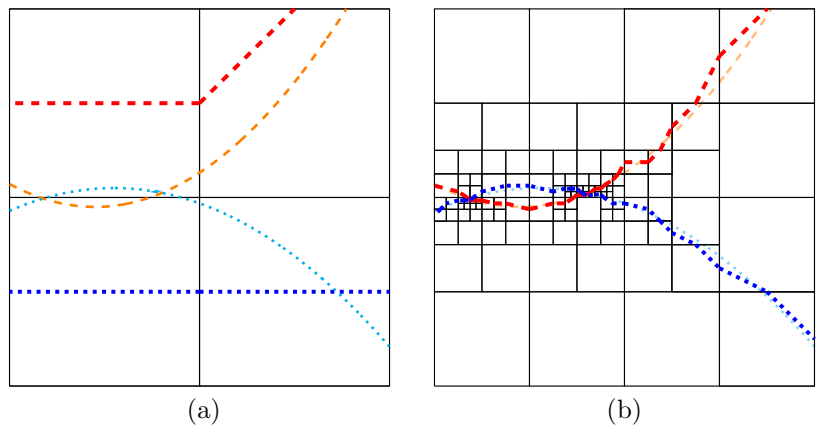}\\
			(a)&(b)
		\end{tabular}
		\caption{Two approximations (thick lines) of a pair of curves (thin lines).  The approximations and curves are paired by color and line style.  A naive approach (a) misses a pair of intersections due to excursions while our approach (b) correctly approximates the curves and their intersections.}
		\label{fig:missing_intersection}
	\end{figure}
	
	\subsection{Previous work}
	
	An extension of the Plantinga and Vegter algorithm was introduced in \citet{BCGY:2012} to handle unbounded and singular input.  This approach can solve the problem currently under consideration by computing an approximation to the variety $\cV(fg)$, but this approach relies on separation bounds between singular points.  These separation bounds are typically so pessimistic that it is questionable whether this algorithm is practical.  On the other hand, the algorithm introduced in \citet{LSVY:2014} also studies the problem considered here, but their algorithm uses more restrictive tests than what we propose, and they may require a significant number of subdivisions to characterize the local behavior of curves within a region.  For instance, their algorithm has more topological requirements on boxes that contain intersections of curves than what is required in our approach.  The algorithm presented in \citet{CWZ:2023} solves the related problem of isolating the points of $\cV(f_1)\cap\cV(f_2)$ without approximating the curves themselves.  Their approach is based on studying the local behavior of the curves of $\cV(f_1)$ and $\cV(f_2)$ and also uses more restrictive tests than what we propose.  Our correctness statements are slightly weaker than the corresponding correctness statements in \citet{LSVY:2014} and \citet{CWZ:2023}, as we do not identify which boxes contain crossings of $\cV(f_1)$ and $\cV(f_2)$.  Our correctness statements, however, are still quite strong and more in-line with the statements appearing in the original work of Plantinga and Vegter.
	
	A preliminary version of the work in this paper appeared in \citet{BurrByrd:2023}.  In that paper, we provided the details and correctness of an algorithm to correctly approximate a pair of smooth curves in the plane, see Section \ref{sec:subdivisionstep}.  In the current paper, we add several complexity analyses of that algorithm, including both counting the number of regions produced by the algorithm as well as the bit-complexity of the algorithm.  Our complexity analyses use both nonamortized bounds as well as amortized bounds based on continuous amortization.  In addition, we extend our algorithm to correctly approximate any number of curves in the plane.
	
	\subsection{Outline}
	In Section \ref{sec:background}, we recall the details of the Plantinga and Vegter algorithm for curve approximation.  In Section \ref{sec:subdivisionstep}, we provide the details of the subdivision step for our algorithm to approximate a pair of curves in the plane.  This special case of approximating a pair of curves is a key step of our main algorithm.  In Section \ref{sec:complexity}, we analyze the complexity of Algorithm \ref{sec:subdivisionstep} for approximating a pair of curves in the plane.  In Section \ref{sec:arbitrarycurves}, we apply the algorithm from Section \ref{sec:subdivisionstep} to develop our main algorithm, which correctly approximates any number of curves.  Finally, we present a gallery of examples in Section \ref{sec:examples}.
	
	\section{Background}\label{sec:background}
	
	The Plantinga and Vegter algorithm for curve approximation is an adaptive subdivision-based algorithm based on the marching cube algorithm \citep{LC:1987}.  The input to the algorithm is a bivariate polynomial $f\in\ZZ[x,y]$ and an input region $R=[a,b]\times[c,d]$ in $\RR^2$ such that $a,b,c,d\in\ZZ$.  The output is a piecewise linear approximation $\cA(f)$, which is guaranteed to be ambient isotopic to the real variety $\cV(f)$.  The algorithm consists of three main steps: 
	(1) the subdivision step, (2) the balancing step, and (3) the approximation step.
	
	In the subdivision step, square regions are considered by the algorithm and either accepted and further processed or rejected and split into four equal-sized subboxes for further consideration.  The acceptance and rejection is based on two predicates called $C_0$ and $C_1$.  These predicates take, as input, a square region $B\subseteq R$ and produce either \true\ or \false.  A square is accepted when either of the predicates returns \true\ and is rejected when both predicates return \false.  The predicates have the following properties: When $C_0(B)=\true$, we conclude that $\cV(f)$ does not intersect $B$.  On the other hand, when $C_1(B)=\true$, there do not exist any pair of points $((x_1,y_1),(x_2,y_2))\in B\times B$ such that $\nabla f(x_1,y_1)$ and $\nabla f(x_2,y_2)$ are perpendicular.
	
	The resulting subdivision consists of a union of boxes, see Figure \ref{fig:missing_intersection}, and we use the word {\em side} to denote one of the four sides of a box (as a quadrilateral) in the subdivision.  We define an {\em edge} of the subdivision to be a side of some box that is not composed of a union of sides of smaller neighboring boxes.
	
	In the balancing step, additional subdivisions are performed until the side length of neighboring boxes differ in length by at most a factor of two.  Therefore, if $B$ is a box of the final subdivision, then the edges of $B$ are either sides of $B$ or half-sides of $B$.  The edges of $B$ are half-sides when $B$'s neighbor in that direction is smaller than $B$.  Finally, in the approximation step $f$ is evaluated on each vertex of every box $B$ which satisfies $C_1(B)$ but not $C_0(B)$.  For each edge in the subdivision where $f$ changes signs, the algorithm adds a vertex on this edge.  Finally, in each box, the vertices are connected in such a way so that the segments do not intersect, and, if there are four vertices on the sides of a box, the two vertices on the same side of the box are not connected.
	
	\subsection{Correctness and complexity}
	
	In the original presentation \citep{PV:2004,PV:2007}, the correctness of the curve approximation algorithm was only proved for smooth and bounded curves.  In \citet{BCGY:2012}, by slightly weakening the correctness statement, the algorithm was extended to unbounded curves.  In addition, in \citet{BCGY:2012}, the algorithm and correctness statement were extended to nonsingular curves, but the practicality of this approach remains in question.  In \citet{LC:2011}, the authors extend the algorithm to non-square regions within the subdivision.
	
	In each of these algorithms, an important feature of their correctness statements is that they guarantee global correctness, not local correctness.  For instance, the correctness statement does not guarantee that the approximation and the variety are isotopic when restricted to a box of the subdivision, only that they are isotopic in the input region $R$ (or a set containing the input region).  The main difference between the variety and its approximation are {\em excursions}. An excursion occurs when the variety briefly enters a neighboring box, but this behavior does not appear in the approximation, see Figure~\ref{fig:missing_intersection} and Definition \ref{def:excursion}.  Instead, the ambient isotopy stretches space so that the approximation is moved into the neighboring box.  
	
	This global correctness without local correctness is a key feature of this family of algorithms.  It often leads to many fewer boxes since these algorithms do not need to resolve the behavior of small excursions.  This key feature makes the problem of approximating a pair of curves given by $f_1,f_2\in\ZZ[x,y]$ more challenging, however, since an excursion may involve an intersection between the varieties $\cV(f_1)$ and $\cV(f_2)$, but the ambient isotopies separate the curves and remove the intersection from the approximations $\cA(f_1)$ and $\cA(f_2)$, see Figure \ref{fig:missing_intersection}(a).
	
	The complexity of the Plantinga and Vegter algorithm was first studied in \citet{BGT:2020} using continuous amortization, see, for example, \citet{BK:2012} and \citet{B:2016}.  The authors found both adaptive and worst-case complexity bounds for the number of regions formed by subdivision as well as the bit-complexity of the algorithm.  In addition, they found examples which exhibited the worst-case exponential complexity.  In \citet{CET:2022}, a smoothed-analysis-based approach was used to show that the average complexity of the algorithm is polynomial.  In \citet{TT:2020}, a condition-number-based approach also showed that the average complexity of the algorithm is polynomial for other classes of random polynomials including some sparse families.
	
	\subsection{Predicate details}
	The two predicates in the Plantinga and Vegter algorithm are typically implemented using interval arithmetic, see, for example, \citep{MKC:2009} for more details.  Interval arithmetic extends the standard arithmetic operations to intervals.  For instance, \begin{align*}
		[a,b]+[c,d]&\vcentcolon=[a+c,b+d],\\
		[a,b]-[c,d]&\vcentcolon=[a-d,b-c], \text{ and}\\
		[a,b][c,d]&\vcentcolon=[\min\{ac,ad,bc,bd\},\max\{ac,ad,bc,bd\}].
	\end{align*}
	These interval operations can be extended to the evaluation of functions, and we use the symbol $\square$ to denote any such extension.  In particular, for a polynomial $f\in\ZZ[x,y]$ and a region $B$, $\square f(B)$ is an interval containing the image $f(B)\vcentcolon=\{f(x,y):(x,y)\in B\}$.  The interval $\square f(B)$ is often larger than $f(B)$, but may be significantly easier to compute.
	
	The $C_0$ test is implemented as $C_0(B)=\true$ if and only if $0\not\in\square f(B)$.  Since $\square f(B)$ is an over-approximation to $f(B)$, the condition $0\not\in\square f(B)$ implies that $0\not\in f(B)$, so the variety $\cV(f)$ cannot intersect $B$.  The $C_1$ test is slightly more complicated, as $C_1(B)=\true$ if and only if $0\not\in\square \langle \nabla f,\nabla f\rangle(B\times B)$, where $\langle\cdot{,}\cdot\rangle$ denotes the standard inner product.  In this formulation, each of the factors of $B\times B$ is the argument to one $\nabla f$.  If $0\not\in\square \langle \nabla f,\nabla f\rangle(B\times B)$, then there cannot be a pair of points $\left((x_1,y_1),(x_2,y_2)\right)\in B\times B$ such that $\langle \nabla f(x_1,y_1),\nabla f(x_2,y_2)\rangle=0$, that is, the gradient vectors cannot be perpendicular.
	
	We call these algorithms symbolic-numeric algorithms for two reasons: The predicates perform exact computations using the coefficients of $f$, that is, not merely treating $f$ as a function.  The computations themselves are performed using arbitrary-precision floating point computations on dyadic points.  In other words, the evaluations are exact, but leverage the speed of floating point calculations.
	
	\subsection{Topological details}\label{sec:topdetails}
	We collect some key facts from \citet{PV:2004,PV:2007} and provide a description of the ambient isotopy in the Plantinga and Vegter algorithm.  These facts are used throughout our correctness proofs.  
	
	\begin{definition}\label{def:excursion}
		Let $B$ be a box of a subdivision and $f\in\ZZ[x,y]$.  An {\em excursion} of $\cV(f)$ is a component of $\cV(f)\cap B$ whose two endpoints are on the same edge of the subdivision, see Figure \ref{fig:missing_intersection}(a).
	\end{definition}
	We note that excursions do not appear in the piecewise-linear approximation $\cA(f)$ as they are deformed into neighboring boxes.
	
	In \citet{PV:2004,PV:2007}, the authors use several topological lemmas to show that the predicates $C_0$ and $C_1$ exert control over the behavior of $\cV(f)$ within a box, see Figure \ref{fig:lemma_schematics}(a).
	
	\begin{lemma}[\citet{PV:2004,PV:2007}]\label{lem:PVcorrectness}
		Suppose that $B$ is a box of a subdivision, and suppose that there are two segments $s_1$ and $s_2$ in $B$ such that 
		\begin{enumerate}
			\item the lines formed from extending $s_1$ and $s_2$ are perpendicular,
			\item the value of $f$ on both endpoints of $s_1$ is the same, and
			\item the value of $f$ on both endpoints of $s_2$ is the same.
		\end{enumerate}
		Then, $C_1(B)=\false$. 
	\end{lemma}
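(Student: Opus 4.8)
The plan is to show the contrapositive fails to be useful directly, so instead I would argue directly that the geometric hypotheses on $s_1$ and $s_2$ force the existence of two points in $B$ at which the gradients of $f$ are perpendicular, which is exactly the condition certified false by $C_1(B)=\true$.

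First I would apply the mean value theorem (in the form appropriate for a segment in the plane, i.e., restricting $f$ to the parametrized line segment) to each of $s_1$ and $s_2$ separately. Since $f$ takes the same value on both endpoints of $s_1$, the restriction of $f$ to $s_1$ has equal values at the parameter endpoints, so there is an interior point $p_1 \in s_1 \subseteq B$ where the directional derivative of $f$ along $s_1$ vanishes; equivalently, $\nabla f(p_1)$ is orthogonal to the direction vector of $s_1$. The same argument applied to $s_2$ yields a point $p_2 \in s_2 \subseteq B$ where $\nabla f(p_2)$ is orthogonal to the direction vector of $s_2$. This is the heart of the argument; everything else is bookkeeping with orthogonality in $\RR^2$.

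Next I would exploit that we are in the plane: the orthogonal complement of a nonzero vector in $\RR^2$ is one-dimensional. Since $\nabla f(p_1)$ is orthogonal to the direction of $s_1$, it must be parallel to the direction of the line obtained by rotating $s_1$ by a right angle; by hypothesis (1), that rotated direction is precisely the direction of (the line through) $s_2$. Hence $\nabla f(p_1)$ is parallel to the direction vector of $s_2$. But $\nabla f(p_2)$ is orthogonal to that same direction vector. Therefore $\nabla f(p_1)$ and $\nabla f(p_2)$ are perpendicular, i.e., $\langle \nabla f(p_1), \nabla f(p_2)\rangle = 0$, with both $p_1, p_2 \in B$. (One should note the degenerate possibility that $\nabla f$ vanishes at $p_1$ or $p_2$; in that case the inner product is trivially zero, so the conclusion still holds — and indeed $0 \in \square\langle\nabla f,\nabla f\rangle(B\times B)$ all the same.)

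Finally, I would invoke the definition of the $C_1$ predicate recalled in the "Predicate details" section: $C_1(B)=\true$ if and only if $0 \notin \square\langle\nabla f,\nabla f\rangle(B\times B)$, and if that interval excludes $0$ then no pair of points in $B$ can have perpendicular gradients. Since we have exhibited exactly such a pair $p_1, p_2$, we must have $0 \in \square\langle\nabla f,\nabla f\rangle(B\times B)$, hence $C_1(B)=\false$. I do not anticipate a serious obstacle here; the only point requiring care is stating the mean value theorem cleanly for the restriction to a segment and handling the vanishing-gradient edge case, but neither is deep.
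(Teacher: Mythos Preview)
Your proof is correct and follows essentially the same approach as the paper, which sketches the argument in one sentence: apply Rolle's theorem (the paper loosely says ``intermediate value theorem'') on each segment to find a point where the gradient is perpendicular to that segment, and then use the perpendicularity of $s_1$ and $s_2$ to conclude that the two gradients are perpendicular. Your write-up simply fills in the details the paper omits, including the harmless degenerate case where a gradient vanishes.
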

	
	\begin{figure}[h]
		\centering
		\begin{tikzpicture}[scale=4]
    \begin{scope}
			\draw[black] (0,0) -- (0,1) -- (1, 1) -- (1, 0) -- (0, 0) ;
			\draw[color=cyan, line width = 1.5] plot[smooth, tension = .6] coordinates{(.1, 0) (.3, .6) (.7, .6) (.8, 0.)}; 
			
			\draw[line width=1pt,black,-stealth](0.436, 0.399)--(0.32, 0.515) node[anchor=south west]{};
			\draw[line width=1pt,black,-stealth](0.59, 0.34)--(0.68, 0.43) node[anchor=south west]{};	
			
			\draw[blue, line width = 1.25] (.12, .093) -- (.667, .622);
			\node[circle, draw = blue, fill = blue, inner sep = 1.1pt, label=below:] at (.12, .093) {};
			\node[circle, draw = blue, fill = blue, inner sep = 1.1pt, label=below:] at (.667, .622) {};

			\draw[red, line width = 1.25](.79, .122) -- (.318, .61);
			
			\node[circle, draw = red, fill = red, inner sep = 1.1pt, label=below:] at (.79, .122) {};
			\node[circle, draw = red, fill = red, inner sep = 1.1pt, label=below:] at (.318, .61) {};
			
			\node[label=below:] at (.33, .23) {$s_1$};
			\node[label=below:] at (.65, .2) {$s_2$};
			\node[label=below:] at (.67, .48) {$\nabla f_1$};
			\node[label=below:] at (.24, .46) {$\nabla f_1$};
   \node[below] at (.5,0) {(a)};
		\end{scope}
    \begin{scope}[shift = {(1.5,0)}]
		\draw[black] (0,0) -- (0,1) -- (1, 1) -- (1, 0) -- (0, 0) ;
		\draw[color=cyan, line width = 1.5] plot[smooth, tension = .6] coordinates{(.1, 0) (.3, .6) (.7, .6) (.8, 0.)}; 
		\draw[color=orange, line width = 1.5] plot[smooth, tension = .6] coordinates{(.2, 0) (.4, .5) (.7, .7) (1, 0.5)};

		\draw[red, line width = 1.25](.298, .287) -- (.77, .678);
        \draw[blue, line width = 1.25](.118, .089) -- (.7135, .5823);
		\node[circle, draw = blue, fill = blue, inner sep = 1.1pt, label=below:] at (.118, .089) {};
		\node[circle, draw = blue, fill = blue, inner sep = 1.1pt, label=below:] at (.7135, .5823) {};

		\node[circle, draw = red, fill = red, inner sep = 1.1pt, label=below:] at (.298, .287) {};
		\node[circle, draw = red, fill = red, inner sep = 1.1pt, label=below:] at (.77, .678) {};
        \draw[line width=1pt,black,-stealth](0.636, 0.567)--(0.55, 0.671) node[anchor=south west]{};
		\draw[line width=1pt,black,-stealth](0.467, 0.378)--(0.35, 0.519) node[anchor=south west]{};	
		\node[label=below:] at (.3, .1) {$f_2$};
		\node[label=below:] at (.85, .2) {$f_1$};
		\node[label=below:] at (.5, .74) {$\nabla f_2$};
		\node[label=below:] at (.28, .57) {$\nabla f_1$};
        \node[label=below:] at (.38, .2) {$s_1$};
        \node[label=below:] at (.8, .6) {$s_2$};
         \node[below] at (.5,0) {(b)};
    \end{scope}
\end{tikzpicture}%
		\caption{Illustrations of (a) Lemma \ref{lem:PVcorrectness} and (b) Lemma \ref{lem:multicrossings}.  In (a), the existence of two perpendicular segments meeting the curve twice implies that $C_1$ fails in this box.  In (b), the existence of two parallel segments each meeting a curve twice implies that $C_1^\times$ fails in this box.}
		\label{fig:lemma_schematics}
	\end{figure}
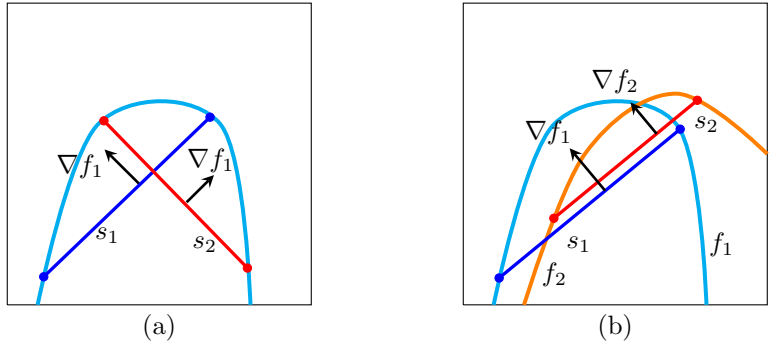
	
	This lemma follows from applying the mean value theorem on segments $s_1$ and $s_2$ to show that each segment contains a point where the gradient at that point is perpendicular to the segment (see also Lemma \ref{lem:multicrossings}).  Lemma \ref{lem:PVcorrectness} leads to several corollaries, three of which we list here:
	
	\begin{corollary}[\citet{PV:2004,PV:2007}]\label{cor:semicircle}
		Suppose that $B$ is a box of a subdivision such that $C_1(B)=\true$.  In addition, let $\gamma_f$ be a component of $\cV(f)\cap B$ which is an excursion on edge $e$ of $B$.  Then, $\gamma_f$ is contained entirely within the semicircle in $B$ whose diameter is $e$.
	\end{corollary}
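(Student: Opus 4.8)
The plan is to argue by contradiction using the preceding lemma. Suppose $\gamma_f$ is an excursion of $\cV(f) \cap B$ on the edge $e$, but $\gamma_f$ is not contained in the closed semicircle $D$ in $B$ whose diameter is $e$. Since $\gamma_f$ is a connected arc with both endpoints on $e$, and these endpoints lie on the diameter of $D$ (hence in $D$), there must be a point $p \in \gamma_f$ that lies strictly outside $D$. I would orient coordinates so that $e$ is horizontal, say $e = [x_0, x_1] \times \{y_0\}$, with $B$ lying in the half-plane $y \geq y_0$ locally near $e$; the semicircle $D$ is then the set of points in $B$ within distance $(x_1-x_0)/2$ of the midpoint $m$ of $e$, on the appropriate side.

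The key geometric step is to extract two perpendicular segments witnessing $C_1(B) = \false$. Since $p$ lies outside the semicircle of diameter $e$, the angle $\angle(a, p, b)$ subtended at $p$ by the two endpoints $a, b$ of $e$ is \emph{acute} (this is the converse of Thales' theorem: a point sees a diameter at a right angle exactly on the circle, at an obtuse angle inside, and at an acute angle outside). An acute angle at $p$ means one can choose a direction at $p$ — namely, a direction perpendicular to one of the segments $\overline{pa}$ or $\overline{pb}$ — along which... Actually, the cleaner route: because $\angle apb$ is acute, the foot of the perpendicular from $p$ to the line through $e$ lies \emph{outside} the segment $e$, OR one of the two segments $\overline{pa}$, $\overline{pb}$ together with a horizontal segment gives the needed perpendicular pair. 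Concretely, I would take $s_1$ to be a suitable sub-segment of $e$ itself (horizontal), on whose two endpoints $f$ takes the same value only if those endpoints lie on $\gamma_f$ — here I instead use that $\gamma_f$ re-enters, so pick $s_1 = \overline{ab} \subseteq e$: but $f(a) = f(b) = 0$ since $a,b \in \cV(f)$. Wait — the endpoints of the excursion arc are the relevant points, call them $a', b' \in e$ with $f(a') = f(b') = 0$, so $s_1 = \overline{a'b'}$ is horizontal with equal $f$-values. For $s_2$, I use the point $p$ outside $D$: since $p$ is outside the semicircle on diameter $\overline{a'b'}$... hmm, $p$ need only be outside the semicircle on diameter $e$, which may be larger. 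Let me instead directly split $\gamma_f$ at $p$ into two sub-arcs and apply a connectedness/extremal argument: among all points of $\gamma_f$, choose $p$ maximizing distance from the line through $e$, or maximizing some linear functional; then build a vertical segment $s_2$ through $p$ down to $e$ and a horizontal segment $s_1$ along $e$, both having equal $f$-values at their endpoints (the endpoints of $s_1$ on $\gamma_f$; the endpoints of $s_2$ being $p \in \cV(f)$ and its projection, which requires more care).

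The main obstacle will be making the segment construction fully rigorous: I need \emph{two} segments, each with matched $f$-values at its endpoints, whose extending lines are perpendicular. The horizontal segment $s_1 \subseteq e$ joining the two endpoints of the excursion has $f = 0$ at both ends automatically. The difficulty is producing a segment $s_2$ perpendicular to $e$ (i.e., vertical) with matched endpoint values that must exist precisely when $\gamma_f$ escapes the semicircle. I expect the right argument is: if $\gamma_f$ leaves the semicircle of diameter $e$, then $\gamma_f$ must cross some vertical line $\ell$ at two points $q_1 \neq q_2$ (by an intermediate-value / Jordan-arc argument, since the arc goes up and comes back but also moves horizontally past the semicircle's edge), and then $s_2 = \overline{q_1 q_2} \subseteq \ell$ has $f(q_1) = f(q_2) = 0$, while $s_2 \perp s_1$. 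Verifying that escaping the semicircle forces such a doubly-covered vertical line — rather than, say, the arc simply going straight up — is the crux; this is exactly where the semicircle (as opposed to a larger region) is the sharp bound, and I would prove it by noting that a point of $\gamma_f$ outside the semicircle has $x$-coordinate either less than both endpoints' or greater than both, or has large $y$; in the horizontal-overshoot case a vertical line through an endpoint of $e$ is crossed twice, and in the vertical case combine with the lemma applied to $s_1 \subseteq e$ and a vertical $s_2$ realizing a repeated value of $f$ along the arc. Finally, invoking the lemma gives $C_1(B) = \false$, contradicting the hypothesis, so $\gamma_f \subseteq D$.
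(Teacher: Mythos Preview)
Your overall strategy---contradiction via the preceding lemma---is right, but the segment pair you ultimately settle on (a horizontal $s_1 \subseteq e$ together with a vertical $s_2$ whose endpoints lie on $\gamma_f$) does not work, and the sketch you give for producing $s_2$ is incorrect. A point of $\gamma_f$ outside the semicircle of diameter $e$ need \emph{not} have $x$-coordinate outside the range of $e$, nor need $\gamma_f$ cross any vertical line twice. For instance, with $e=[0,2]\times\{0\}$ take an excursion from $(0.5,0)$ to $(1.5,0)$ through $(1,1.5)$, parametrized monotonically in $x$: it exits the semicircle (distance $1.5>1$ from the center $(1,0)$) yet no vertical line meets it twice. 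So the vertical chord you want simply need not exist.

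The Thales idea you raised and then abandoned is the one that actually works; you just need to apply it to the \emph{excursion endpoints} rather than the edge endpoints, and pick $p$ on the circle rather than strictly outside it. Let $a',b'\in e$ be the endpoints of $\gamma_f$ and let $D'$ be the closed semicircle in $B$ with diameter $\overline{a'b'}$. If $\gamma_f\not\subseteq D'$, then since $a',b'\in D'$ and $\gamma_f$ is connected, $\gamma_f$ meets the circular arc $\partial D'\setminus e$ at some point $p\neq a',b'$. By Thales' theorem $\angle a'pb'=\pi/2$, so the two chords $s_1=\overline{pa'}$ and $s_2=\overline{pb'}$ are perpendicular; both lie in the convex box $B$, and $f$ vanishes at all three endpoints $p,a',b'$. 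The lemma applies directly and yields $C_1(B)=\false$, a contradiction. Since $\overline{a'b'}\subseteq e$, the semicircle $D'$ is contained in the semicircle of diameter $e$, and the stated conclusion follows. Incidentally, your concern that ``$p$ need only be outside the semicircle on diameter $e$, which may be larger'' had the containment backwards: the semicircle on $e$ is the larger one, so escaping it forces escaping the smaller semicircle on $\overline{a'b'}$, which is exactly what the Thales step needs.
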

	
	\begin{corollary}[\citet{PV:2004,PV:2007}]\label{cor:sides}
		Suppose that $B$ is a box of a subdivision.
		If $\cV(f)$ intersects two adjacent sides of $B$ and, on each of these sides, intersects twice, then $C_1(B)=\false.$
	\end{corollary}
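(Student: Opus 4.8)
The plan is to apply the lemma stated above (the one comparing two segments with perpendicular extensions on which $f$ takes equal endpoint values), taking the two segments to be subsegments of the two adjacent sides carrying the double intersections.

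First I would fix notation. Let $e_1$ and $e_2$ be the two adjacent sides of $B$ meeting at a common corner. By hypothesis $\cV(f)$ meets $e_1$ in at least two distinct points, say $p_1$ and $q_1$; let $s_1$ be the segment joining $p_1$ and $q_1$. Since $B$ is closed and $e_1$ is one of its sides, we have $s_1\subseteq e_1\subseteq B$, and because $p_1,q_1\in\cV(f)$ we get $f(p_1)=f(q_1)=0$. The same construction on $e_2$, using two distinct intersection points $p_2,q_2$, produces a segment $s_2\subseteq e_2\subseteq B$ with $f(p_2)=f(q_2)=0$. The double-intersection hypothesis guarantees $p_i\neq q_i$, so each $s_i$ is a genuine nondegenerate segment.

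Next I would verify the three hypotheses of the lemma. Conditions (2) and (3) hold because $f$ vanishes on both endpoints of $s_1$ and on both endpoints of $s_2$. For condition (1), I would note that $s_1$ lies on the line through $e_1$ and $s_2$ lies on the line through $e_2$; since $e_1$ and $e_2$ are adjacent sides of an axis-aligned rectangle, these lines are perpendicular, so the lines obtained by extending $s_1$ and $s_2$ are perpendicular as well. The lemma then gives $C_1(B)=\false$ at once.

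I do not anticipate any real obstacle: the only points needing a word of care are that the boxes are closed (so their sides are contained in $B$), that ``intersects a side twice'' is read as two distinct points (so the $s_i$ are nondegenerate), and that adjacent sides of a box are perpendicular. As an alternative one could bypass the lemma and argue directly by the intermediate value theorem applied to $\nabla f$ along $s_1$ and along $s_2$, obtaining interior points whose gradients are parallel to $e_2$ and $e_1$ respectively and hence perpendicular to each other, contradicting $C_1(B)=\true$; but invoking the preceding lemma is cleaner.
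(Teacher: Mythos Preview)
Your proposal is correct and matches the paper's approach: the paper presents this result as an immediate corollary of the preceding lemma on perpendicular segments with equal endpoint values, and your argument is precisely that derivation. There is nothing to add.
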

	
	\begin{corollary}[\citet{PV:2004,PV:2007}]\label{cor:components}
		Suppose that $B$ is a box of a subdivision such that $C_1(B)=\true$.  There is at most one component of $\cV(f)\cap B$ that extends between opposite sides of $B$.
	\end{corollary}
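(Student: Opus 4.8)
The plan is to prove the contrapositive: if $\cV(f)\cap B$ has two distinct components $\gamma_1$ and $\gamma_2$, each with one endpoint on the northern edge of $B$ and one on the southern edge, then $C_1(B)=\false$. To do this I will exhibit two segments in $B$ whose extensions are perpendicular and on each of which $f$ takes equal values at its two endpoints, and then apply the lemma above. Write $n_i,s_i$ for the endpoints of $\gamma_i$ on the northern and southern edges; these four points are distinct (the two on a common edge being distinct because $\gamma_1\cap\gamma_2=\emptyset$), and $f$ vanishes on all of $\gamma_1\cup\gamma_2$. One of the two perpendicular segments will always be $\overline{n_1n_2}$, which lies along the horizontal northern edge with $f=0$ at its endpoints; the work is to produce a second, suitably directed segment, and I split into cases on the shape of $\gamma_1,\gamma_2$.

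If one of the components, say $\gamma_1$, is not injective under projection to the $x$-axis, it contains two distinct points with a common $x$-coordinate, and the vertical segment between them (with $f=0$ at both endpoints), together with $\overline{n_1n_2}$, finishes the proof. So I may assume $\gamma_1$ and $\gamma_2$ are graphs over the $x$-axis, $\gamma_i=\{(x,\phi_i(x)):a_i\le x\le b_i\}$. Because the endpoints of $\gamma_i$ lie on the horizontal northern and southern edges, hence not on the eastern or western edges, $\gamma_i$ attains its extreme $x$-coordinates exactly at these endpoints, so the $x$-coordinates of $n_i$ and $s_i$ are $a_i$ and $b_i$ in some order. If $[a_1,b_1]\cap[a_2,b_2]\neq\emptyset$, pick $x_0$ there: the distinct (by disjointness), vertically aligned points $(x_0,\phi_1(x_0))\in\gamma_1$ and $(x_0,\phi_2(x_0))\in\gamma_2$ span a vertical segment which, with $\overline{n_1n_2}$, again finishes the proof.

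The essential case is $[a_1,b_1]\cap[a_2,b_2]=\emptyset$, say $b_1<a_2$, so $\gamma_1$ lies strictly left of $\gamma_2$; now there need be no vertical $f$-level segment, so I would use the whole family of chords from $\gamma_1$ to $\gamma_2$, each carrying $f=0$ at both endpoints. From $b_1<a_2$, the $x$-coordinates of $n_1,s_1$ are $\le b_1$ and those of $n_2,s_2$ are $\ge a_2$; since $B$ is a square, the positive horizontal displacement in each of $\overline{n_1s_2}$ and $\overline{s_1n_2}$ is at most the side length of $B$, which equals the vertical distance between the northern and southern edges. Hence $\overline{n_1s_2}$ is directed at least $\pi/4$ below horizontal and $\overline{s_1n_2}$ at least $\pi/4$ above horizontal. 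Let $A\subseteq S^1$ be the set of directions of chords from $\gamma_1$ to $\gamma_2$: the map $(u,v)\mapsto(v-u)/\|v-u\|$ is continuous on the connected set $\gamma_1\times\gamma_2$ (well defined because $\gamma_1\cap\gamma_2=\emptyset$ forces $u\neq v$), so $A$ is connected, and it contains the directions of $\overline{n_1n_2}$, $\overline{n_1s_2}$, and $\overline{s_1n_2}$ described above. A connected subset of $S^1$ containing a horizontal direction and directions at least $\pi/4$ to either side of it contains a closed arc of length $\ge\pi/2$, hence two perpendicular directions; realizing these by two chords from $\gamma_1$ to $\gamma_2$ gives the required perpendicular pair, and $C_1(B)=\false$.

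The main obstacle is this last case: once the obvious vertical segment is gone, a perpendicular pair has to be squeezed out of the two-parameter family of chords between $\gamma_1$ and $\gamma_2$ by connectedness of its direction set, and this is where squareness of $B$ is essential (keeping the horizontal spread of the four boundary points below the vertical spread). A smaller point is the claim that a north--south-spanning graph attains its extreme $x$-coordinates at its endpoints, which uses that $\gamma_i$ is a properly embedded arc with boundary on the top and bottom edges only; since $\cV(f)$ is smooth in $B$, $\cV(f)\cap B$ is a $1$-manifold with boundary in $\partial B$, and if needed I would reduce to the case that $\cV(f)$ meets $\partial B$ transversally, the general case following by an arbitrarily small perturbation of the box.
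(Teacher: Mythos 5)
Your argument is correct. The paper gives no proof of this corollary---it is quoted from Plantinga and Vegter and merely noted to follow from the perpendicular-segments lemma---and your derivation from that lemma, with the rotating-chord connectedness argument supplying the second perpendicular segment when the two components are horizontally disjoint graphs, is a sound reconstruction of the intended one. One small remark: your closing appeal to transversality and a perturbation of the box is both unnecessary and slightly off (perturbing $B$ changes the predicate $C_1(B)$ you are trying to falsify); nothing in your argument actually uses that $\gamma_i$ is a properly embedded arc or that $n_i,s_i$ are its endpoints, only that each $\gamma_i$ is a compact connected subset of $\cV(f)\cap B$ meeting both horizontal edges and that $\gamma_1\cap\gamma_2=\emptyset$.
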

	
	Finally, we briefly describe how the ambient isotopy deforms the variety $\cV(f)$ to $\cA(f)$ as a two-step procedure.  The first step of the ambient isotopy is to remove all excursions by deforming space so that the excursions are moved into neighboring boxes.  Briefly, we let $\widetilde{\cV}(f)$ be the result of applying the first step of the ambient isotopy to $\cV(f)$.  For every box $B$ of the subdivision, $\widetilde{\cV}(f)\cap B$ and $\cA(f)\cap B$ are ambient isotopic {\em within} $B$.  In particular, this means that they have the same number of components within $B$ and the components intersect the same edges of $B$.  The second step of the ambient isotopy simultaneously deforms $\widetilde{\cV}(f)\cap B$ to $\cA(f)\cap B$ within each box $B$ by sliding the points on the boundary of $B$ to their appropriate places and straightening the curves within each box $B$.  By careful consideration of these steps, we observe that the ambient isotopies derived from the Plantinga and Vegter algorithm move points at most one box away as the only points that move between boxes are those near excursions, but, by Corollary \ref{cor:semicircle}, these points are never further than one box away.
	
	\begin{definition}
		Let $S$ be a union of boxes from a subdivision and $\gamma_f:[0,1]\rightarrow\cV(f)\cap S$ a curve in the variety of $f$.  The \emph{extension of $\gamma_f$ without excursions} is denoted by $\overline{\gamma}_f$, and it is the component of $\cV(f)\cap S$ containing $\gamma_f$.  The \emph{extension of $\gamma_f$ with excursions} is the curve $\widetilde{\gamma}_f$, which is formed by following $\gamma_f$ forward and backwards until either 
		(1) the curve becomes a closed loop or (2) the curve reaches the first and last intersections of the curve with the boundary of $S$ before passing through a box not in $S$.  See Figure \ref{fig:extensions} for details.
	\end{definition}
	
	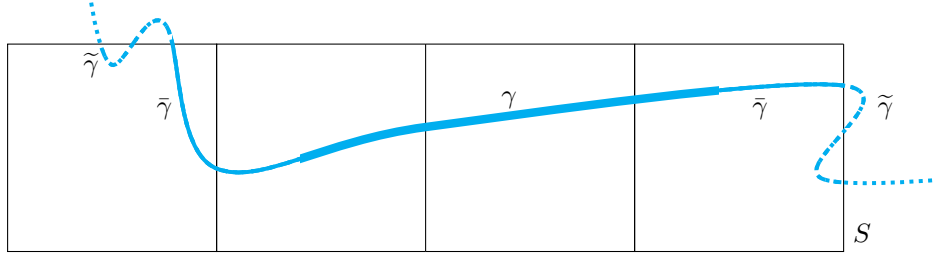
\begin{figure}
		\centering
		\begin{tikzpicture}[scale=2.75]
			\hspace*{4em}\begin{scope}[shift = {(3,0)}]
				\draw[black] (0,0) -- (1,0) -- (1,1) -- (0, 1);
				\draw[black] (-2,0) -- (-1,0) -- (-1,1) -- (-2, 1) -- cycle;
				\draw[black]  (0,0) -- (0,1);
				\draw[black] (1, 1) -- (1,0);
				\draw[color = black]   (1, 0) -- (2, 0) -- (2,1);
				\draw[color = black] (2, 1) -- (1, 1)   (0, 1) -- (-1, 1) -- (-1, 0) -- (0,0);
				\node[above right] at (2,0) {$S$};
				
				\begin{scope}
					\clip(-.6, 0) rectangle (1.4, 1.2);
					\draw[color=cyan, line width = 3.2] plot[smooth, tension = .6] coordinates{(-1.6, 1.2)(-1.5, .9) (-1.25, 1.1) (-1, .4) (0, 0.6) (2, 0.8) (1.9, 0.35) (3, .4)}; 
				\end{scope}
				
				\begin{scope}
					\clip(-1.21, 0) rectangle (1, 1.2);
					\draw[color=cyan, line width = 1.5] plot[smooth, tension = .6] coordinates{(-1.6, 1.2)(-1.5, .9) (-1.25, 1.1) (-1, .4) (0, 0.6) (2, 0.8) (1.9, 0.35) (3, .4)}; 
				\end{scope}
				
				\begin{scope}
					\clip(0, .65) rectangle (2, 1);
					\draw[color=cyan, line width = 1.5] plot[smooth, tension = .6] coordinates{(-1.6, 1.2)(-1.5, .9) (-1.25, 1.1) (-1, .4) (0, 0.6) (2, 0.8) (1.9, 0.35) (3, .4)}; 
				\end{scope}
				\begin{scope}
					\clip(-1.55, 0) rectangle (2, 1.2);
					\draw[color=cyan, line width = 1.5, dashed] plot[smooth, tension = .6] coordinates{(-1.6, 1.2)(-1.5, .9) (-1.25, 1.1) (-1, .4) (0, 0.6) (2, 0.8) (1.9, 0.35) (3, .4)};
				\end{scope}		
				\begin{scope}
					\clip(2, .6) rectangle (3, 1);
					\draw[color=cyan, line width = 1.5, dashed] plot[smooth, tension = .6] coordinates{(-1.6, 1.2)(-1.5, .9) (-1.25, 1.1) (-1, .4) (0, 0.6) (2, 0.8) (1.9, 0.35) (3, .4)};
				\end{scope}
				\begin{scope}
					\clip(-2, 0) rectangle (2.5, 1.2);
					\draw[color=cyan, line width = 1.5, dotted] plot[smooth, tension = .6] coordinates{(-1.6, 1.2)(-1.5, .9) (-1.25, 1.1) (-1, .4) (0, 0.6) (2, 0.8) (1.9, 0.35) (3, .4)};
				\end{scope}
				\node[label=:$\gamma$] at (.4,0.6) {};
				\node[label=:$\bar{\gamma}$] at (1.6,0.55) {};
				\node[label=:$\bar{\gamma}$] at (-1.25,0.55) {};
				\node[label=:$\widetilde{\gamma}$] at (2.2,0.55) {};
				\node[label=:$\widetilde{\gamma}$] at (-1.6,0.75) {};
			\end{scope}
		\end{tikzpicture}
		\caption{An illustration of the two types of extensions of a curve $\gamma$, denoted by the thickened curve, with a region $S$.  The first extension is $\overline{\gamma}$, which is denoted by the thin curve.  $\overline{\gamma}$ extends $\gamma$ to the first time that the curve containing $\gamma$ leaves the region of interest.  The second extension is $\widetilde{\gamma}$, which is denoted by the dashes.  $\widetilde{\gamma}$ includes all excursions until the curve containing $\gamma$ leaves the region of interest and does not immediately return.  The part of the curve denoted by dots is beyond the extensions $\overline{\gamma}$ and $\widetilde{\gamma}$.}
		\label{fig:extensions}
	\end{figure}
	
	The extension $\widetilde{\gamma}_f$ can be constructed by iteratively adding excursions and curve components at the ends of the path until the curve leaves $S$ and passes through other boxes of the subdivision.  The ambient isotopy described above directly leads to the following correctness statements:
	
	\begin{lemma}[{\citet{PV:2004,PV:2007}}]\label{lem:excursions}
		Let $B$ be a box from the output of the Plantinga and Vegter algorithm and suppose that $\cV(f)$ intersects an edge of $B$ twice.  Then there is a path in $\cV(f)$ connecting these two points which is a sequence of excursions.
	\end{lemma}
	
	\begin{lemma}[{\citet{PV:2004,PV:2007}}]\label{lem:approximation}
		Let $S$ be a union of boxes from the output of the Plantinga and Vegter algorithm and $\gamma_f:[0,1]\rightarrow\cV(f)\cap S$ a curve in the variety of $f$.  Suppose that $\gamma_f$ deforms to $\alpha_f\subseteq\cA(f)$ within $S$.  Let $\widetilde{\gamma}_f$ be the extension of this path with excursions.  Let $\overline{\alpha}_f$ be the component of $\cA(f)\cap S$ containing $\alpha_f$.  Either $\widetilde{\gamma}_f$ and $\overline{\alpha}_f$ are topological circles within $S$ or the endpoints of $\widetilde{\gamma}_f$ and $\overline{\alpha}_f$ are on the same edges of the subdivision.
	\end{lemma}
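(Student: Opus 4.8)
The plan is to follow $\gamma_f$ through the two stages of the Plantinga--Vegter ambient isotopy recalled in Section \ref{sec:topdetails}. Write $H$ for that isotopy; the hypothesis that $\gamma_f$ deforms to $\alpha_f$ within $S$ means that $\alpha_f = H_1(\gamma_f)$ and that the whole trajectory of $\gamma_f$ stays inside $S$. Recall that $H$ factors as a first stage that removes every excursion, carrying $\cV(f)$ to the intermediate set $\widetilde{\cV}(f)$ while only moving points that lie on a removed excursion (and, by Corollary \ref{cor:semicircle}, moving them at most one box, across the edge supporting that excursion), followed by a second stage that acts box by box, carrying $\widetilde{\cV}(f)\cap B$ to $\cA(f)\cap B$ for every box $B$ and keeping each boundary point on the edge of $B$ on which it sits. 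I will track $\widetilde{\gamma}_f$ through these two stages and compare the result with $\overline{\alpha}_f$.

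First I would analyze the first stage. By construction $\widetilde{\gamma}_f$ is obtained from $\gamma_f$ by repeatedly adjoining excursions (and the intervening curve components) and terminating only at the points where $\cV(f)$ leaves $S$ by passing through a box not in $S$; every excursion so adjoined lies within the one-box neighborhood of $S$. The first stage removes exactly these excursions, pushing the ones that bulge out of $S$ back into $S$ and the ones interior to $S$ across their edges, so it carries $\widetilde{\gamma}_f$ onto a single connected arc (or loop) $\gamma'_f\subseteq\widetilde{\cV}(f)\cap S$; here the hypothesis that $\gamma_f$'s trajectory stays in $S$ is what rules out the stray configuration in which removing an adjoined excursion would carry part of $\widetilde{\gamma}_f$ out of $S$. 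When $\widetilde{\gamma}_f$ is not a loop, its two terminal points are points where $\cV(f)$ crosses $\partial S$ into a foreign box; by the maximality in the definition of $\widetilde{\gamma}_f$ the curve is, near each such point, not an excursion on that edge on either side of the crossing, and by Corollary \ref{cor:semicircle} no excursion on a neighboring edge can reach an interior point of that edge, so the first stage fixes these two points. Hence $\gamma'_f$ has the same pair of endpoints as $\widetilde{\gamma}_f$, and $\gamma'_f$ is a topological circle in $S$ exactly when $\widetilde{\gamma}_f$ is.

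Then I would run $\gamma'_f$ through the second stage. Since that stage is a box-wise deformation that keeps each boundary point on the edge it lies on, splicing its restrictions over the boxes met by $\gamma'_f$ deforms $\gamma'_f$, within $S$, to a component of $\cA(f)\cap S$ whose endpoints lie on the same edges of the subdivision as those of $\gamma'_f$ (and which is a topological circle exactly when $\gamma'_f$ is). That component contains $H_1(\gamma_f)=\alpha_f$, so it is precisely $\overline{\alpha}_f$. Composing the two stages then yields the dichotomy: $\widetilde{\gamma}_f$ and $\overline{\alpha}_f$ are simultaneously topological circles within $S$, or else their endpoints lie on the same edges of the subdivision. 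I expect the main obstacle to be the bookkeeping in the first stage, namely verifying precisely that every excursion absorbed into $\widetilde{\gamma}_f$ is one that the first stage removes, that removing them does not carry any part of $\widetilde{\gamma}_f$ out of $S$, that the terminal $\partial S$-crossings of $\widetilde{\gamma}_f$ really are unmoved --- including the degenerate case where such a crossing occurs at a corner of the subdivision, where two edges meet --- and that the image $\gamma'_f$ is a single connected piece rather than several.
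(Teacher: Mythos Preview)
The paper states Lemma \ref{lem:approximation} without proof, so there is no argument to compare against; your proposal is in fact supplying what the paper omits. Your approach---tracking $\widetilde{\gamma}_f$ through the two stages of the Plantinga--Vegter isotopy described in Section \ref{sec:topdetails}, using that the first stage only moves points lying on excursions and the second stage preserves edge-membership of boundary points---is the natural one and is essentially correct.

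The bookkeeping caveats you flag at the end are the right ones, and none of them hides a genuine obstruction. The key point that makes the first stage work is exactly what you identify: the terminal points of $\widetilde{\gamma}_f$ are, by the stopping rule in the definition of the extension with excursions, places where $\cV(f)$ genuinely passes into a box outside $S$ (not as an excursion on the shared edge), so the excursion-removal step leaves them in place; and every excursion that was adjoined to form $\widetilde{\gamma}_f$ is either internal to $S$ or bulges one box outside $S$ and is pushed back in, so the image $\gamma'_f$ lands in $\widetilde{\cV}(f)\cap S$ as a single arc. One small refinement worth making explicit: connectedness of $\gamma'_f$ follows because the first stage is an ambient isotopy (a homeomorphism at each time), so it cannot disconnect $\widetilde{\gamma}_f$. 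With that said, your outline would serve as the missing proof.
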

	
	\section{Approximating a pair of curves}\label{sec:subdivisionstep}
	In our algorithm in Section \ref{sec:arbitrarycurves} for approximating any number of curves, a main step in the procedure is to compute an approximation of a pair of curves.  In this section, we provide an algorithm for this special case.  Of the three main steps of the Plantinga and Vegter approximation algorithm, the most significant changes occur in the subdivision step, and we focus our attention on this step.  Much of the material in this section was initially presented in \citet{BurrByrd:2023}.
	
	Given $f_1,f_2\in\ZZ[x,y]$, a first attempt to approximate $(\cV(f_1),\cV(f_2))$ may be to simultaneously run the standard Plantinga and Vegter algorithm on $f_1$ and $f_2$ using a common refinement of the region $R$.  This approach leads to three different types of potential errors in the approximations:
	
	\begin{enumerate}
		\item Missing intersections: intersections of $\cV(f_1)$ and $\cV(f_2)$ which do not correspond to intersections of $\cA(f_1)$ and $\cA(f_2)$, see Figure \ref{fig:missing_intersection}.
		\item Extra intersections: intersections of $\cA(f_1)$ and $\cA(f_2)$ which do not correspond to intersections of $\cV(f_1)$ and $\cV(f_2)$, see Figure \ref{fig:tangential_intersection_nonintersect}(a).
		\item Shared edges: some of the edges of the approximations $\cA(f_1)$ and $\cA(f_2)$ may be shared between the two approximations, see Figure \ref{fig:tangential_intersection_nonintersect}(b).
	\end{enumerate}
	\begin{figure}[h]
		\centering
		\input{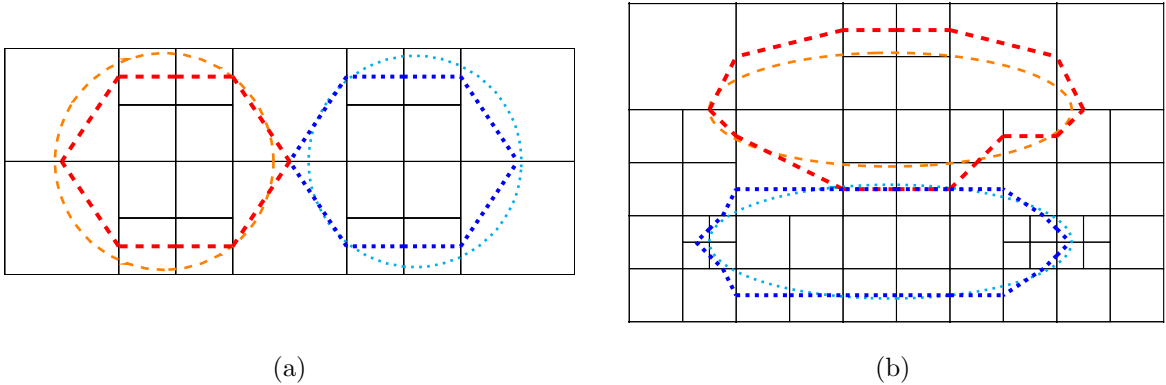}
		\caption{Two approximations (thick lines) of a pair of curves (thin lines).  The approximations and curves are paired by color and line style.  The naive approach to approximating the pairs of curves include (a) an extra intersection or (b) several shared edges of the approximation.}
		\label{fig:tangential_intersection_nonintersect}
	\end{figure}
	
	Our main tool to avoid all three of these errors is a new predicate, which we call $C_1^\times$.  On a square $B$, if $C_1^\times(B)=\true$, then there do not exist any pair of points $((x_1,y_1),(x_2,y_2))\in B\times B$ such that $\nabla f_1(x_1,y_1)$ and $\nabla f_2(x_2,y_2)$ are parallel.  In the plane, we may implement this test using the cross product, that is, $C_1^\times(B)=\true$ if and only if $0\not\in\square (\nabla f_1\times\nabla f_2)(B\times B)$, where each of the factors of $B\times B$ is an argument to one of the gradients.  As an initial illustration of the utility of this new $C_1^\times$ predicate, we provide the following motivating result, see Figure \ref{fig:lemma_schematics}(b).
	\begin{lemma}\label{lem:multicrossings}
		Let $B$ be a rectangle where $s_1$ and $s_2$ are parallel line segments in $B$.  Suppose that $f_1$ attains the same value on both endpoints of $s_1$ and $f_2$ attains the same value on both endpoints of $s_2$.  Then $C_1^\times(B)=\false$.
	\end{lemma}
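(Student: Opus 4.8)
The plan is to adapt the argument behind the Plantinga--Vegter lemma quoted in Section~\ref{sec:topdetails}, replacing the dot product $\langle\nabla f,\nabla f\rangle$ by the cross product $\nabla f\times\nabla g$ and exploiting a feature special to the plane. First I would extract honest witnessing pairs from the hypothesis: since $f$ attains the same value twice on $s$, there are distinct points $p_1,p_2\in s$ with $f(p_1)=f(p_2)$, and since $g$ attains the same value twice on $s$, there are distinct points $p_3,p_4\in s$ with $g(p_3)=g(p_4)$. Let $v$ be a direction vector of the line containing $s$; then both $p_2-p_1$ and $p_4-p_3$ are nonzero scalar multiples of $v$.

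Next I would apply the one-variable mean value (Rolle's) theorem along $s$. Parametrizing the sub-segment from $p_1$ to $p_2$ by $t\mapsto p_1+t(p_2-p_1)$ for $t\in[0,1]$ and composing with $f$ yields a differentiable function taking equal values at the endpoints, so there is an interior point $q_1$ of this sub-segment with $\langle\nabla f(q_1),p_2-p_1\rangle=0$, hence $\nabla f(q_1)\perp v$. The identical argument applied to $g$ on the sub-segment from $p_3$ to $p_4$ produces a point $q_2$ with $\nabla g(q_2)\perp v$. Note that $q_1,q_2\in s\subseteq B$.

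The key step is the dimension-two observation: in $\RR^2$ the orthogonal complement of the nonzero vector $v$ is one-dimensional, so the vectors $\nabla f(q_1)$ and $\nabla g(q_2)$, both lying in $v^\perp$, are linearly dependent. Consequently their scalar cross product vanishes, $\nabla f(q_1)\times\nabla g(q_2)=0$, and this holds whether or not either gradient is the zero vector. Since $(q_1,q_2)\in B\times B$, we conclude $0\in(\nabla f\times\nabla g)(B\times B)$, and therefore $0\in\square(\nabla f\times\nabla g)(B\times B)$ because $\square$ produces an over-approximation of the image. By the definition of the predicate, this gives $C_1^\times(B)=\false$.

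I do not expect a serious obstacle here; the only points needing care are (i) reading ``attains the same value twice'' as occurring at two genuinely distinct points, so that Rolle's theorem applies on a nondegenerate sub-segment, and (ii) recording that the planar fact ``two vectors orthogonal to a common nonzero vector are parallel'' is precisely what singles out the cross product, rather than a dot product, as the quantity to control. In higher dimensions this last step would fail and a different argument would be required, but in $\RR^2$ the above is complete.
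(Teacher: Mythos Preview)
Your proof is correct and follows essentially the same approach as the paper's: both apply Rolle's theorem along $s$ to produce a point where $\nabla f$ is perpendicular to the direction of $s$ and another where $\nabla g$ is, then use that in $\RR^2$ two vectors orthogonal to a common nonzero direction are parallel to conclude $C_1^\times(B)=\false$. Your write-up is slightly more explicit about the dimension-two step and the interval over-approximation, but the argument is the same.
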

	\begin{proof}
		Suppose that $(x_1,y_1),(x_2,y_2)\in s_1$ such that $f_1(x_1,y_1)=f_1(x_2,y_2)$.  By applying Rolle's theorem to 
		$f_1(tx_1+(1-t)x_2,ty_1+(1-t)y_2)$, there is some $t_1\in (0,1)$ such that 
		$$\frac{d}{dt}f_1(tx_1+(1-t)x_2,ty_1+(1-t)y_2)\Big|_{t=t_1}=0.$$
		This equation, however, can be rewritten as the following dot product:
		$$\nabla f_1(t_1x_1+(1-t_1)x_2,t_1y_1+(1-t_1)y_2)\cdot (x_2-x_1,y_2-y_1)=0.$$
		In other words, there is some point in $B$ where $\nabla f_1$ is perpendicular to $s_1$.  Repeating this argument for $f_2$ gives that there is some point in $B$ where $\nabla f_2$ is perpendicular to $s_2$.  Since $s_1$ and $s_2$ are parallel, the gradients of $f_1$ and $f_2$ are parallel for some pair of points in the box $B$, and $C_1^\times(B)=\false$.
	\end{proof}
	
	This lemma leads to the following special case when $s_1=s_2$.
	
	\begin{corollary}\label{cor:multiintersection}
		Let $B$ be a rectangle and assume that $\cV(f_1)$ and $\cV(f_2)$ intersect more than once in $B$.  Then $C_1^\times(B)=\false$. 
	\end{corollary}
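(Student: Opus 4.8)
The plan is to reduce the statement directly to Lemma \ref{lem:multicrossings} by producing a single segment in $B$ on which \emph{both} $f$ and $g$ repeat a value. Suppose $\cV(f)$ and $\cV(g)$ meet at two distinct points $p_1,p_2\in B$. Since $B$ is a rectangle, it is convex, so the line segment $s$ with endpoints $p_1$ and $p_2$ is contained in $B$. Because $p_1,p_2\in\cV(f)$ we have $f(p_1)=0=f(p_2)$, so $f$ attains the same value (namely $0$) twice on $s$; identically, $g(p_1)=0=g(p_2)$, so $g$ attains the same value twice on $s$ as well.

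At this point the hypotheses of Lemma \ref{lem:multicrossings} are satisfied for the rectangle $B$ and the segment $s$, so that lemma immediately gives $C_1^\times(B)=\false$, which is the assertion of the corollary. The only point that needs a word of care is nondegeneracy: the argument in Lemma \ref{lem:multicrossings} runs Rolle's theorem on $t\mapsto f(tx_1+(1-t)x_2,ty_1+(1-t)y_2)$ over the open interval $(0,1)$, which is meaningful only when $s$ is a genuine, nonconstant segment; this is guaranteed here since the phrase ``intersect more than once'' forces $p_1\neq p_2$.

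I do not expect any real obstacle: the corollary is simply the special case of Lemma \ref{lem:multicrossings} in which the common repeated value of $f$ and the common repeated value of $g$ are both taken to be $0$, witnessed by the chord between two intersection points. If one wanted to be slightly more careful about exactly which two intersection points to use when there are more than two, any choice of two distinct ones works, so no selection argument is needed.
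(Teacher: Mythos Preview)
Your proposal is correct and matches the paper's intended argument: the paper presents the corollary explicitly as a ``special case'' of Lemma~\ref{lem:multicrossings} without further proof, and your reduction---taking the chord between two distinct intersection points and noting that both $f$ and $g$ vanish at its endpoints---is precisely that specialization. Your remark about nondegeneracy of the segment is a nice touch but not something the paper bothers to spell out.
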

	
	\begin{remark}
		The use of the $C_1^\times$ test explains some of the conditions that we require on our curves.  In particular, we note that if an intersection of $\cV(f_1)$ and $\cV(f_2)$ was not a transverse crossing, then the gradients would agree at the intersection and $C_1^\times$ would be \false.
	\end{remark}
	
	We now present the algorithm for the subdivision step of the pairwise curve approximation algorithm.  For simplicity, we focus on the case where the input region $R$ is a square and leave the details for the general rectangular case to Section \ref{sec:simul_approx}.  Since there are two polynomials $f_1,f_2\in\ZZ[x,y]$, we write $C_0^{f_1}$ and $C_1^{f_1}$ for the standard tests from the Plantinga and Vegter algorithm for the function $f_1$.  Similarly, we define $C_0^{f_2}$ and $C_1^{f_2}$ for $f_2$.  In addition, we need the notion of a neighborhood of a box:
	
	\begin{definition}
		Let $R=[a,b]\times[c,d]$ be a square region and $\cS$ a partition of $R$ into squares.  For any square $B\in\cS$, the {\em neighborhood} of $B$ in $\cS$ is denoted by $\cN(B)$ and consists of $B$ along with all of the other squares in $\cS$ that have a positive-length intersection with $B$, that is, squares that only meet $B$ at its corners are not in $\cN(B)$.  More generally, we define $\cN_1(B)\vcentcolon=\cN(B)$ and $\cN_i(B)$ to be the union of all the neighborhoods of boxes in $\cN_{i-1}(B)$. 
	\end{definition}
	In other words, $\cN_i(B)$ consists of all the boxes which are at most $i$ boxes away from $B$.  We write $C_1^\times(\cN_i(B))=\true$ to denote that $C_1^\times$ holds in the smallest rectangle containing $\cN_i(B)$.
	
	\algrenewcommand\algorithmicrequire{\textbf{Input}:}
	\algrenewcommand\algorithmicensure{\textbf{Output}:}
	
	\begin{algorithm}[htb]
		\caption{Subdivision step}
		\label{alg:subdivision}
		\begin{algorithmic}[1]
			\Require polynomials $f_1,f_2\in\ZZ[x,y]$, with $\cV(f_1)$ and $\cV(f_2)$ smooth curves such that $\cV(f_1)$ and $\cV(f_2)$ intersect transversely, and a square region $R$ with integral corners such that $\cV(f_1)$ and $\cV(f_2)$ do not intersect on the boundary of $R$.
			\Ensure a partition of $R$ for further processing
			\State {initialize queue $Q$ to contain $R$}
			\While {$Q$ is not empty}
			\State {pop square $B$ from $Q$}
			\State {{\bf accept} $B$ if any of the following holds:}
			\Statex {\hspace*{{\algorithmicindent+\algorithmicindent}} $\bullet\;C_0^{f_1}(B)=C_0^{f_2}(B)=\true$}
			\Statex {\hspace*{{\algorithmicindent+\algorithmicindent}} $\bullet\;C_0^{f_1}(B)=C_1^{f_2}(B)=\true$}
			\Statex {\hspace*{{\algorithmicindent+\algorithmicindent}} $\bullet\;C_1^{f_1}(B)=C_0^{f_2}(B)=\true$}
			\Statex {\hspace*{{\algorithmicindent+\algorithmicindent}} $\bullet\;C_1^{f_1}(B)=C_1^{f_2}(B)=C_1^\times(\cN_2(B)\cap13B)=\true$}
			\State {{\bf reject} $B$ if none of the previous hold:}
			\State {\hspace*{{\algorithmicindent}}subdivide $B$ into four equal-sized boxes $B_1,\dots,B_4$}
			\State {\hspace*{{\algorithmicindent}}push $B_1,\dots,B_4$ into $Q$}
			\EndWhile
			\State\Return {accepted boxes}
		\end{algorithmic}
	\end{algorithm}
	
	We incorporate our new test into Algorithm \ref{alg:subdivision}, which, in turn, replaces the subdivision step of the standard Plantinga and Vegter algorithm.  In particular, the new subdivision step is similar to naively running the Plantinga and Vegter algorithm simultaneously on $f_1$ and $f_2$, except that when $C_1^{f_1}(B)$ and $C_1^{f_2}(B)$ both hold, we add the condition that $C_1^\times(\cN_2(B)\cap 13B)$ holds, where $13B$ denotes the box which is $13$-times larger than $B$, but with the same center.
	
	\begin{remark}
		After the balancing step of the Plantinga and Vegter algorithm, the neighbors of $B$ are at most twice the size of $B$ and their neighbors are at most four times the size of $B$.  This means that after balancing, $\cN_2(B)\subseteq 13B$.  To allow for the case where boxes are not balanced during the subdivision step, we use $\cN_2(B)\cap 13B$ in Step 4 of Algorithm \ref{alg:subdivision} as this region contains $\cN_2(B)$ of the balanced subdivision.
	\end{remark}
	
	Suppose that after this new subdivision step, the balancing and approximation steps of the standard Plantinga and Vegter algorithm are performed, resulting in approximations $\cA(f_1)$ and $\cA(f_2)$.  In the remainder of this section, we present the theoretical properties of the crossings of these approximations as well as the correctness statement that these approximations have the same topology as the underlying pair of curves.
	
	\subsection{Transversal crossing of approximations}
	For the approximations $\cA(f_1)$ and $\cA(f_2)$, we call a crossing {\em transversal} if the approximations cross within the interior of a box.  For a box whose neighbors are not further subdivided, this only happens when one approximation has an edge from the north side of a box to the south side, while the other approximation extends from the east side of the box to the west side.  We show that every transversal crossing of $\cA(f_1)$ and $\cA(f_2)$ corresponds to a unique crossing of the varieties $\cV(f_1)$ and $\cV(f_2)$.
	
	\begin{proposition}\label{prop:plus_sign}
		Let $f_1,f_2\in\ZZ[x,y]$, and suppose that $B$ is a box such that $C_1^{f_1}(B)=C_1^{f_2}(B)=C_1^\times(\cN(B))=\true.$  If $\cA(f_1)$ and $\cA(f_2)$ intersect transversely in $B$, then $\cV(f_1)$ and $\cV(f_2)$ intersect exactly once and transversely in $\cN(B)$.
	\end{proposition}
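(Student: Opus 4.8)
The conclusion packages three claims about $\cV(f)$ and $\cV(g)$ inside $\cN(B)$: they meet at least once, at most once, and any intersection is transverse. My plan is to settle the last two from $C_1^\times(\cN(B))$ alone and to put the real work into producing a single honest intersection.

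For the easy half, write $\widehat R$ for the smallest rectangle containing $\cN(B)$, so that $C_1^\times(\cN(B))=\true$ is by definition $C_1^\times(\widehat R)=\true$; this unwinds to the statement that no $p_1,p_2\in\widehat R$ have $\nabla f(p_1)$ parallel to $\nabla g(p_2)$ and that neither gradient vanishes on $\widehat R$. Taking $p_1=p_2=p$ at an intersection point $p\in\cN(B)\subseteq\widehat R$ forces $\cV(f)$ and $\cV(g)$ to cross transversely there, and Corollary \ref{cor:multiintersection} applied to the rectangle $\widehat R$ gives that they meet at most once in $\widehat R$, hence at most once in $\cN(B)$.

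For existence of an intersection I would argue directly inside $B$. By hypothesis the transversal crossing of $\cA(f)$ and $\cA(g)$ in $B$ means $\cA(f)\cap B$ contains an edge from the north side of $B$ to the south side, and $\cA(g)\cap B$ an edge from the east side to the west side. First I would promote these combinatorial facts to topology: using $C_1^f(B)=\true$, Corollary \ref{cor:semicircle} (which confines every excursion to a semicircle, so no excursion can reach across $B$), and Lemma \ref{lem:approximation} (which matches the endpoints of a component of the approximation with those of the corresponding curve of the variety extended with excursions), I would extract a simple arc $\beta_f\subseteq\cV(f)\cap B$ whose two endpoints lie in the relative interiors of the north and south sides of $B$ and which is otherwise contained in the open square; symmetrically $C_1^g(B)=\true$ yields a simple arc $\beta_g\subseteq\cV(g)\cap B$ crossing $B$ from east to west in the same clean way. (A routine general-position assumption, as in Plantinga--Vegter, keeps the endpoints off the corners of $B$ and the arcs transverse to the sides.) Then a Jordan-curve argument finishes it: completing $\beta_f$ to a Jordan curve with the ``western'' boundary path of $B$ (the boundary arc from the south endpoint of $\beta_f$ around through the west side to its north endpoint) shows that $\beta_f$ separates $B$ into two regions, one containing the west side of $B$ and the other the east side; since $\beta_f$ meets neither of those sides, the two endpoints of $\beta_g$ lie in different regions, so the connected set $\beta_g$ must meet $\beta_f$. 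Any point of $\beta_f\cap\beta_g$ lies in $\cV(f)\cap\cV(g)\cap B\subseteq\cV(f)\cap\cV(g)\cap\cN(B)$, giving the intersection.

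The step I expect to fight with is the promotion of ``$\cA(f)\cap B$ has a north--south edge'' to ``$\cV(f)\cap B$ contains a clean north-to-south arc''. Because Plantinga--Vegter correctness is only global, $\cV(f)$ may differ from $\cA(f)$ inside $B$ precisely by excursions that drift between $B$ and its neighbors; the content is that $C_1^f(B)=\true$ pins $\cV(f)\cap B$ down to a single crossing arc together with excursions trapped in the edge-semicircles (Corollary \ref{cor:semicircle}), so that those excursions cannot reroute the crossing out through the east or west side and cannot disconnect it. Lemma \ref{lem:approximation}, applied with $S=\cN(B)$ (or with $S=B$), is the tool that makes this bookkeeping work, and keeping track of which edges the endpoints land on is the one genuinely delicate point in the write-up.
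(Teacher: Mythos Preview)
Your handling of ``at most once'' and ``transversality'' via $C_1^\times(\cN(B))$ and Corollary~\ref{cor:multiintersection} is correct and matches the paper. The gap is in the existence step: the arc $\beta_f\subseteq\cV(f)\cap B$ running from the north side to the south side need not exist, and indeed the intersection of $\cV(f)$ and $\cV(g)$ need not lie in $B$ at all---the proposition only promises it in $\cN(B)$. Take $B=[0,1]^2$ with unit-square neighbors, $f(x,y)=-x+\tfrac{9}{10}+\tfrac{4}{5}\,y(1-y)$, and $g(x,y)=y-\tfrac12$. Then $\nabla f=(-1,\tfrac{4}{5}(1-2y))$ has second component in $[-\tfrac45,\tfrac45]$, so any two gradients of $f$ on $B$ have positive inner product and $C_1^f(B)=\true$; trivially $C_1^g(B)=\true$; and $\nabla f\times\nabla g\equiv -1$, so $C_1^\times(\cN(B))=\true$. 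The corner signs of $f$ force a north--south segment in $\cA(f)$ and those of $g$ an east--west segment in $\cA(g)$, so the approximations cross transversally in $B$. But $\cV(f)$ is the parabola $x=\tfrac{9}{10}+\tfrac{4}{5}\,y(1-y)$, whose intersection with $B$ consists of \emph{two} arcs, one from $N$ to $E$ and one from $E$ to $S$, joined by an excursion into $B_E$; there is no north--south arc of $\cV(f)$ inside $B$, and the unique intersection $\cV(f)\cap\cV(g)=\{(\tfrac{11}{10},\tfrac12)\}$ sits in $B_E$, not in $B$. Corollary~\ref{cor:semicircle} does not rescue this: it constrains only those components of $\cV(f)\cap B$ whose two endpoints lie on the \emph{same} edge, and neither arc here is of that type.

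This is precisely why the paper runs the argument in $\cN(B)$ rather than in $B$. It takes the crossing component $\gamma_f$ of $\cV(f)\cap\cN(B)$ and spends the bulk of the proof showing---using Corollaries~\ref{cor:sides} and~\ref{cor:components} and the absence of closed loops---that the endpoints of $\gamma_f$ must lie on the external boundaries of $B_N$ and $B_S$, not on those of $B_E$ or $B_W$. Only then does the Jordan-curve separation go through, in $\cN(B)$: $\gamma_f$ separates the external boundary of $B_E$ from that of $B_W$, and since $\gamma_g$ joins those two pieces it must meet $\gamma_f$. Your Jordan-curve endgame is the right idea, but it has to be staged in $\cN(B)$, and the genuinely delicate step is locating the endpoints of $\gamma_f$ on $\partial\cN(B)$, not pinning the curve inside $B$.
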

	\begin{proof}
		By Corollary \ref{cor:multiintersection}, the number of intersections of $\cV(f_1)$ and $\cV(f_2)$ is at most one in $\cN(B)$.  Moreover, any such intersection must be transversal by assumption.
		
		We let $\alpha_{f_1}$ and $\alpha_{f_2}$ be the two components of $\cA(f_1)\cap B$ and $\cA(f_2)\cap B$ which intersect in $B$.  Then, we let $\gamma_{f_1}$ be any component of $\cV(f_1)\cap B$ which deforms to a subset of $\alpha_{f_1}$ under the isotopy from the Plantinga and Vegter algorithm.  In addition, we let $\widetilde{\gamma}_{f_1}$ be the extension of $\gamma_{f_1}$ with excursions in $B$.  We define $\gamma_{f_2}$ and $\widetilde{\gamma}_{f_2}$ similarly.
		
		Let $B_D$ be the subset of $\cN(B)$ which consists of the union of $B$ with external semicircles on each edge of $B$.  Moreover, we let $\gamma_{f_1,D}$ be the component of $\cV(f_1)\cap B_D$ which contains $\gamma_{f_1}$.  By Corollary~\ref{cor:semicircle}, it follows that $\gamma_{f_1,D}\supseteq \widetilde{\gamma}_{f_1}$.  We define $\gamma_{f_2,D}$ similarly.  See Figure \ref{fig:plus_sign_schematic} for additional details.
		\begin{figure}
			\centering
			\definecolor{green1}{RGB}{0, 102, 0}
\begin{tikzpicture}[scale = 2]
  \node at (0.8,0.7) {$\gamma_{f_1}$};
  \node at (0.2,.3) {$\gamma_{f_2}$};
  \draw[black, very thick] (0,0) -- (1,0) -- (1,1) -- (0,1) -- (0,0);
  \node[label=below:$B$] (node name) at (0.15, 1) {};
  \node[label=below:$B_E$] (node name) at (1.2, 1) {};
  \node[label=below:$B_N$] (node name) at (0.2, 2.1) {};
  \node[label=below:$B_W$] (node name) at (-0.75, 1) {};
  \node[label=below:$B_S$] (node name) at (0.2, -0.55) {};
  \draw[black, very thick] (0,0) -- (1,0) -- (1,-1) -- (0,-1) -- (0,0);
  \draw[black, very thick] (0,0) -- (2,0) -- (2,1) -- (0,1) -- (0,0);
  \draw[black, very thick] (0,0) -- (1,0) -- (1,2) -- (0,2) -- (0,0);
  \draw[black, very thick] (0,0) -- (-1,0) -- (-1,1) -- (0,1) -- (0,0);
  
  \begin{scope}
    \clip (0, 0) rectangle (1, 1);
    \draw[color=blue, line width = 1pt] plot[smooth, tension = .65] coordinates{(0.4, 2) (.45, 1) (.7, 1.) (0.5, 0)  (0.47, -0.61) (0.45, -1)  };
  \end{scope}

  \begin{scope}
    \clip (0.5, 1) rectangle (0.8, 1.1);
    \draw[color=blue, line width = 1pt] plot[smooth, tension = .65] coordinates{(0.4, 2) (.45, 1) (.7, 1.) (0.5, 0)  (0.47, -0.61) (0.45, -1)  };
  \end{scope}

  \begin{scope}
    \clip (0, 1) rectangle (0.57, 2);
    \draw[color=blue, line width = 1pt, dashed] plot[smooth, tension = .65] coordinates{(0.4, 2) (.45, 1) (.7, 1.) (0.5, 0)  (0.47, -0.61) (0.45, -1)  };
  \end{scope}

  \begin{scope}
    \clip (0, -1) rectangle (1, 0);
    \draw[color=blue, line width = 1pt, dashed] plot[smooth, tension = .65] coordinates{(0.4, 2) (.45, 1) (.7, 1.) (0.5, 0)  (0.47, -0.61) (0.45, -1)  };
  \end{scope}

\begin{scope}
  \clip (0, -1) rectangle (1, 1);
  \draw[color=red, line width = 1pt] plot[smooth, tension = .6] coordinates{(-1, .4) (0, .2) (.3, 0) (0.5, -0.1) (.65, 0) (1, .2) ( 2, .34) }; 
\end{scope}

\begin{scope}
  \clip (1, 0) rectangle (2, 1);
  \draw[color=red, line width = 1pt, dashed] plot[smooth, tension = .6] coordinates{(-1, .4) (0, .2) (.3, 0) (0.5, -0.1) (.65, 0) (1, .2) ( 2, .34) }; 
\end{scope}

\begin{scope}
  \clip (-1, 0) rectangle (0, 1);
  \draw[color=red, line width = 1pt, dashed] plot[smooth, tension = .6] coordinates{(-1, .4) (0, .2) (.3, 0) (0.5, -0.1) (.65, 0) (1, .2) ( 2, .34) }; 
\end{scope}

  \begin{scope}
    \clip (-1,-1) rectangle (1,0);
    \draw[dashed] (0.5, 0) circle(0.5);
  \end{scope}

  \begin{scope}
    \clip (1,0) rectangle (2,2);
    \draw[dashed] (1, 0.5) circle(0.5);
  \end{scope}

  \begin{scope}
    \clip (0,1) rectangle (1,2);
    \draw[dashed] (0.5, 1) circle(0.5);
  \end{scope}

  \begin{scope}
    \clip (-1,0) rectangle (0,1);
    \draw[dashed] (0, 0.5) circle(0.5);
  \end{scope}
\end{tikzpicture}

%
%
%
%
			\caption{The neighborhood of the box $B$ with two crossing components.  The curves intersect in the neighborhood of $B$ while the approximations (not shown) intersect in $B$.}
			\label{fig:plus_sign_schematic}
		\end{figure}
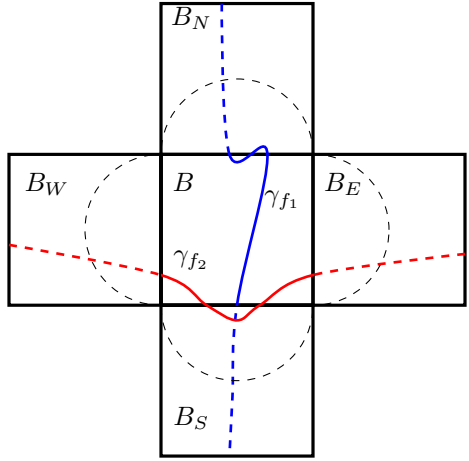
		
		We now restrict our attention to $f_1$ since the case for $f_2$ is similar.  By Lemma \ref{lem:approximation}, the endpoints of $\alpha_{f_1}$ and $\widetilde{\gamma}_{f_1}$ lie on the same edges of the subdivision, which we denote by $e_1$ and $e_2$.  Since the Plantinga and Vegter algorithm places a vertex on each of these edges, the number of intersections, counted with multiplicity, between $\cV(f_1)$ and each $e_i$ is odd.  Moreover, by Lemma \ref{lem:excursions}, all intersections between $\cV(f_1)$ and each $e_i$ are included in $\widetilde{\gamma}_{f_1}$.  Hence, $\gamma_{f_1,D}\setminus\widetilde{\gamma}_{f_1}$ consists of two curves contained in the semicircles attached to $e_1$ and $e_2$ and extending from the diameter to the circular boundary of the semicircles.
		
		Hence, $\gamma_{f_1,D}$ is a curve in $\cV(f_1)\cap B_D$ with endpoints on the two semicircles attached to $e_1$ and $e_2$.  By applying the same argument to $\gamma_{f_2,D}$, we see that $\gamma_{f_2,D}$ is a curve in $\cV(f_2)\cap B_D$ with endpoints on two other semicircles of $B_D$.  Since $\alpha_{f_1}$ separates the endpoints of $\alpha_{f_2}$ on the boundary of $B$, we conclude that the endpoints of $\gamma_{f_1,D}$ separate the endpoints of $\gamma_{f_2,D}$ on the boundary of $B_D$.  This implies the existence of a crossing in $B_D\subseteq\cN(B)$.
	\end{proof}
	
	By investigating the proof of Proposition \ref{prop:plus_sign} in detail, we find that each intersection of $\cA(f_1)$ and $\cA(f_2)$ corresponds to an intersection between $\gamma_{f_1,D}$ and $\gamma_{f_2,D}$ within $B_D$.  By construction, $\gamma_{f_1,D}$ and $\gamma_{f_2,D}$ are components of $V$ which cross $B$, perhaps with excursions.  We claim that these two crossing components uniquely determine $B$.  In particular, since $\cA(f_1)$ and $\cA(f_2)$ intersect in the interior of $B$, the ends of $\gamma_{f_1,D}$ and $\gamma_{f_2,D}$ are in different boxes of the subdivision from each other.  Hence, there is no other box of the subdivision that contains a portion of both $\gamma_{f_1,D}$ and $\gamma_{f_2,D}$ as crossing components.  Therefore, no two crossings of $\cA(f_1)$ and $\cA(f_2)$ can correspond to the same intersection of $\cV(f_1)$ and $\cV(f_2)$ because at least one of $\gamma_{f_1,D}$ and $\gamma_{f_2,D}$ changes when considering a different crossing.  Therefore, there is an injective map between transversal intersections of $\cA(f_1)$ and $\cA(f_2)$ and transversal intersections of $\cV(f_1)$ and $\cV(f_2)$.
	
	\subsection{Missing intersections}
	We begin by noting that excursions are the only reason that the approximations $\cA(f_1)$ and $\cA(f_2)$ can miss an intersection of $\cV(f_1)$ and $\cV(f_2)$.
	
	\begin{lemma}\label{lem:missingintersections}
		Suppose that $B$ is a box such that $C_1^{f_1}(B)=C_1^{f_2}(B)=C_1^\times(B)=\true$.  Suppose, in addition, that there are no excursions either entering or exiting $B$.  If the approximations $\cA(f_1)$ and $\cA(f_2)$ do not intersect in $B$, including on the boundary of $B$, then $\cV(f_1)$ and $\cV(f_2)$ do not intersect in $B$.
	\end{lemma}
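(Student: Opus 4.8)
The plan is to argue the contrapositive by contradiction: assume $\cV(f)$ and $\cV(g)$ do meet somewhere in $B$, and derive that $\cA(f)$ and $\cA(g)$ must also meet in $B$. By Corollary~\ref{cor:multiintersection} together with $C_1^\times(B)=\true$, the varieties meet in exactly one point $p\in B$, and since $C_1^\times(B)=\true$ also forbids $\nabla f$ and $\nabla g$ from being parallel anywhere on $B$, this crossing is transversal. If $p$ lies on an edge of $B$, then that edge carries a coincident root of $f$ and of $g$ and a short separate argument shows $\cA(f)$ and $\cA(g)$ are forced to meet on that edge; so we may assume $p$ lies in the interior of $B$.

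Next I would extract the local structure from the hypotheses. Since no excursion enters or exits $B$, the excursion-removing first step of the Plantinga--Vegter ambient isotopy is the identity on $B$; hence, by the box-by-box description in Section~\ref{sec:topdetails}, $\cV(f)\cap B$ and $\cA(f)\cap B$ are ambient isotopic \emph{within} $B$ by an isotopy that merely slides each point of $\cV(f)\cap\partial B$ along the edge containing it to the corresponding vertex of $\cA(f)$, and similarly for $g$; in particular this isotopy fixes each edge of $B$ setwise, preserves the cyclic order of boundary points on $\partial B$, and preserves the pairing of edges by arcs. Using $C_1^f(B)=\true$ I would also note that $\cV(f)\cap B$ is a disjoint union of arcs, each running between two \emph{distinct} edges of $B$ (there are no closed loops, since a loop bounds a disk on which $\nabla f$ vanishes at an interior extremum, contradicting $C_1^f(B)=\true$, and no excursions by hypothesis), that each edge of $B$ meets $\cV(f)$ at most once, and that $\cA(f)\cap B$ shares these properties; the same holds with $g$ in place of $f$. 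Finally, $\cA(f)\cap B$ and $\cA(g)\cap B$ are each non-self-crossing, since the approximation step connects vertices without creating crossings.

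The contradiction is then a linking argument in the disk $B$. Let $\alpha_f$ be the arc of $\cV(f)\cap B$ through $p$ and $\alpha_g$ the arc of $\cV(g)\cap B$ through $p$. Because $p$ is the unique intersection point, $\alpha_f\cap\alpha_g=\{p\}$ and the crossing there is transversal, so $\alpha_f$ separates $B$ into two regions with one endpoint of $\alpha_g$ in each; equivalently, the two endpoints of $\alpha_f$ separate the two endpoints of $\alpha_g$ along $\partial B$. Carrying the two arcs through their respective isotopies, the corresponding components $\alpha_f'\subseteq\cA(f)\cap B$ and $\alpha_g'\subseteq\cA(g)\cap B$ have endpoints on the same four edges; if, in addition, the cyclic order of these four endpoints on $\partial B$ is unchanged, then $\alpha_f'$ still separates the endpoints of $\alpha_g'$, and since two properly embedded arcs in a disk whose endpoints link on the boundary must intersect, $\cA(f)$ and $\cA(g)$ meet in $\overline{B}$, contradicting the hypothesis.

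The step I expect to be the main obstacle is exactly the parenthetical ``if the cyclic order is unchanged'': the isotopies realizing $\cV(f)\cap B\simeq\cA(f)\cap B$ and $\cV(g)\cap B\simeq\cA(g)\cap B$ are a priori different homeomorphisms of $B$, and the cyclic order of the four endpoints of $\alpha_f,\alpha_g$ can fail to be preserved only if a single edge $e$ of $B$ carries one endpoint of $\alpha_f$ and one endpoint of $\alpha_g$ while the approximation places the $\cA(f)$-vertex and $\cA(g)$-vertex on $e$ in the order opposite to that of the crossings $\cV(f)\cap e$ and $\cV(g)\cap e$. Showing this cannot happen --- that on every edge the two approximation vertices inherit the order of the two variety crossings --- is precisely what the ``shared edge'' handling in the construction of the approximations must guarantee, and it is where the careful bookkeeping of the proof belongs.
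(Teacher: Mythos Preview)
Your linking argument is the natural geometric picture, but as you yourself flag, it stalls at exactly the point where the work lies: the two ambient isotopies are different, so there is no a priori reason the cyclic order of the four endpoints on $\partial B$ survives. Your proposed fix --- that ``the construction of the approximations must guarantee'' the order of vertices on a shared edge matches the order of the variety crossings --- is not something the Plantinga--Vegter construction provides (vertices are placed at midpoints, not at the actual roots), so this is a genuine gap, not just missing bookkeeping.

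The paper avoids the cyclic-order problem altogether by replacing the linking argument with a sign-parity argument along $\gamma_f$. Let $e_1,e_2$ be the edges carrying the endpoints of the $\cA(f)$-arc $\alpha_f$. Since $\cA(f)$ and $\cA(g)$ do not meet even on $\partial B$, the edges $e_1,e_2$ carry no $\cA(g)$-vertex, so $g$ has the same sign at both endpoints of each $e_i$; combined with the no-excursion hypothesis, $\cV(g)$ does not cross $e_1$ or $e_2$ at all. Hence the sign of $g$ at the two endpoints of $\gamma_f$ equals the (constant) sign of $g$ on $e_1$ and on $e_2$ respectively, and a short argument shows these two signs agree. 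But then $g|_{\gamma_f}$ has the same sign at both ends, forcing an even number of zeros of $g$ on $\gamma_f$; since there is at least one, there are at least two, contradicting Corollary~\ref{cor:multiintersection}.

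If you want to rescue your approach, the missing observation is this: if some edge $e$ carried both an $\cA(f)$-vertex and an $\cA(g)$-vertex, those vertices coincide (both at the midpoint of $e$), so $\cA(f)$ and $\cA(g)$ would meet on $\partial B$, contrary to hypothesis. Thus the four endpoints lie on four \emph{distinct} edges, the isotopies only slide endpoints within their own edges, and the cyclic order of the four edges --- hence of the four endpoints --- is trivially preserved. With that one sentence your linking argument closes; but the paper's sign argument is cleaner because it never needs to talk about cyclic order at all.
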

	\begin{proof}
		By the properties of the first step of the Plantinga and Vegter algorithm, see Section \ref{sec:topdetails}, since there are no excursions, the approximations $\cA(f_1)\cap B$ and $\cA(f_2)\cap B$ are each ambient isotopic to $\cV(f_1)\cap B$ and $\cV(f_2)\cap B$ {\em within the box $B$}, respectively.  In particular, this implies that the number of components of $\cA(f_1)$ and $\cV(f_1)$ agree within $B$, and, similarly for $\cA(f_2)$ and $\cV(f_2)$.  We recall, however, that these isotopies are not necessarily the same.  
		
		Suppose that $\cV(f_1)$ and $\cV(f_2)$ intersect in $B$, but $\cA(f_1)$ and $\cA(f_2)$ do not intersect.  Let $\gamma_{f_1}$ be a component of $\cV(f_1)\cap B$ that intersects $\cV(f_2)\cap B$.  Let $\alpha_{f_1}$ be the corresponding subset of $\cA(f_1)$ to which $\gamma_{f_1}$ deforms under the ambient isotopy.  Let $e_1$ and $e_2$ be the edges of the subdivision that contain the endpoints of $\alpha_{f_1}$.  By Lemma \ref{lem:approximation}, $\gamma_{f_1}$ must also begin and end on these edges.  On the other hand, since $\cA(f_2)$ does not have vertices on these edges, it follows that on each edge $e_i$, the value of $f_2$ is the same at both endpoints of $e_i$.  Since there are no excursions, it follows that $\cV(f_2)$ does not intersect this edge as $\cV(f_2)$ would need to intersect this edge twice to maintain the sign properties of the endpoints and this would be an excursion by Lemma \ref{lem:excursions}.  
		
		Since $\cA(f_1)$ and $\cA(f_2)$ do not intersect, the signs of $f_2$ at both endpoints of $\alpha_{f_1}$ must be the same.  Since we showed that the sign of $f_2$ is constant on the edges containing the endpoints of $f_1$, the signs of $f_2$ on both endpoints of $\gamma_{f_1}$ must be the same.  This implies $\cV(f_2)$ must intersect $\gamma_{f_1}$ an even number of times as each intersection changes the sign of the restriction $f_2|_{\gamma_{f_1}}$.  Since $\cV(f_2)$ and $\gamma_{f_1}$ intersect at least once, they must intersect at least twice, but this is impossible by Corollary \ref{cor:multiintersection}.
	\end{proof}
	
	Lemma \ref{lem:missingintersections} implies that missing intersections must involve at least one excursion.  Our plan is to show that any missing intersection must induce a pair of intersections in the neighborhood $\cN_2(B)$, which is not possible since $C_1^\times(\cN_2(B))=\true$.
	
	\begin{proposition}\label{prop:nomissing}
		Suppose that $B$ is a box such that $C_1^{f_1}(B)=C_1^{f_2}(B)=C_1^\times(\cN_2(B))=\true$.  Suppose that $\cA(f_1)$ and $\cA(f_2)$ do not intersect in $\cN(B)$, including on the boundary of $\cN(B)$, then $\cV(f_1)$ and $\cV(f_2)$ do not intersect in $B$.  
	\end{proposition}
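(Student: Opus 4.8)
The plan is to argue by contradiction, in the spirit of Lemma~\ref{lem:missingintersections} but now tracking the curves through their excursions. Suppose $\cV(f)$ and $\cV(g)$ meet at a point $p\in B$. Since $C_1^\times(\cN_2(B))=\true$ forces $C_1^\times$ on every rectangle inside $\cN_2(B)$, Corollary~\ref{cor:multiintersection} shows that $p$ is the \emph{only} point of $\cV(f)\cap\cV(g)$ in $\cN_2(B)$, and that this crossing is transverse (a tangency would make $\nabla f\times\nabla g$ vanish at $p\in\cN_2(B)$). By the extremal-point argument used in the proof of Proposition~\ref{prop:plus_sign}, no closed component of $\cV(f)$ or of $\cV(g)$ can lie inside $\cN_2(B)$ (such a loop would enclose a zero of a gradient), and, after pulling a loop back through the Plantinga--Vegter ambient isotopy (which displaces points by at most one box), neither can a closed component of $\cA(f)$ or of $\cA(g)$ inside $\cN(B)$. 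Finally, since $\cA(f)\cap\cA(g)\cap\cN(B)=\emptyset$ implies in particular $\cA(f)\cap\cA(g)\cap B=\emptyset$, Lemma~\ref{lem:missingintersections} already handles the case with no excursions; so we may assume there is an excursion of $\cV(f)$ or $\cV(g)$ entering or leaving $B$, and this is exactly the feature the argument must now accommodate.

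Next I would introduce four arcs. Let $\gamma_f$ be the extension-with-excursions, inside $S=\cN(B)$ (or $\cN_2(B)$; see below), of a short sub-arc of $\cV(f)$ through $p$, and let $\bar\alpha_f$ be the component of $\cA(f)\cap S$ to which it corresponds under the Plantinga--Vegter isotopy. By the no-loops observation and Lemma~\ref{lem:approximation}, $\gamma_f$ and $\bar\alpha_f$ are arcs whose endpoints lie on a common pair of subdivision edges $e_1,e_2\subseteq\partial S$. Define $\gamma_g,\bar\alpha_g$ and edges $e_1',e_2'$ symmetrically. Because $\gamma_f\subseteq\cV(f)$ and $\gamma_g\subseteq\cV(g)$ lie in $\cN_2(B)$ and $p\in\gamma_f\cap\gamma_g$, we get $\gamma_f\cap\gamma_g=\{p\}$, a transverse crossing; moreover $g|_{\gamma_f}$ vanishes only at $p$ (another zero would be a second point of $\cV(f)\cap\cV(g)$ in $\cN_2(B)$) and changes sign there by transversality, so $g$ is nonzero and of opposite signs at the two endpoints of $\gamma_f$, and symmetrically $f|_{\gamma_g}$ changes sign only at $p$.

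The heart of the proof is then a planar separation fact: $\gamma_f$ is an arc in the simply connected region $S$, hence splits $S$ into two pieces, and since $\gamma_g$ meets $\gamma_f$ exactly once and transversally, the two endpoints of $\gamma_g$ lie in different pieces. Therefore the endpoints of $\gamma_f$ separate the endpoints of $\gamma_g$ along $\partial S$, i.e., the four edges occur in the cyclic order $e_1,e_1',e_2,e_2'$ around $\partial S$. Because $\bar\alpha_f$ ends on $e_1,e_2$ and $\bar\alpha_g$ ends on $e_1',e_2'$, the same interleaving holds for the two approximation arcs, so $\bar\alpha_f$ separates the endpoints of $\bar\alpha_g$ inside $S$, and the arcs must cross there; running this with $S=\cN(B)$ contradicts $\cA(f)\cap\cA(g)\cap\cN(B)=\emptyset$, which is the desired contradiction.

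I expect two points to be the real work. First, the degenerate configuration in which two of the edges $e_1,e_2,e_1',e_2'$ coincide (or several of the eight endpoints share one edge): then the ``cyclic order of edges'' deduction must be replaced by a sign count on the shared edge, mimicking the argument in the proof of Lemma~\ref{lem:missingintersections} --- one uses that $\cA(g)$ carries no vertex on that edge, so $g$ is sign-constant at its endpoints, and that any zero of $g$ there would belong to an excursion, to pin down the sign of $g$ at the endpoint of $\gamma_f$ and restore the interleaving. Second, and more seriously, one must justify working inside $\cN(B)$ when an excursion can push the relevant piece of $\cV(f)$ or $\cV(g)$ (and hence force the isotopy's support) temporarily out of $\cN(B)$; the honest remedy is to build everything with $S=\cN_2(B)$, obtain the forced crossing of $\bar\alpha_f$ and $\bar\alpha_g$ inside $\cN_2(B)$, and then either argue that this crossing must actually occur in $\cN(B)$ (using that $p\in B$ and the isotopy moves points at most one box) or feed the transversal $\cA$-crossing through Proposition~\ref{prop:plus_sign} to produce a second point of $\cV(f)\cap\cV(g)$ in $\cN_2(B)$, contradicting $C_1^\times(\cN_2(B))$. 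Controlling where the excursions travel --- and thus which neighborhood is really needed --- is the technical crux.
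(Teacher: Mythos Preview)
Your strategy differs from the paper's in a real way: you track both $\gamma_f$ and $\gamma_g$ symmetrically and try to force an interleaving of their boundary edges via a Jordan-curve separation, whereas the paper tracks only $\widetilde\gamma_f$ and argues through the sign of $g$ along it, finishing with a parity/matching count for $\cA(g)$ on $\partial\cN(B)$.

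Your separation step has a gap you have not closed. For $\gamma_f$ to split $S$ you need $\gamma_f\subseteq S$ with endpoints on $\partial S$. With $S=\cN(B)$ the excursions of $\widetilde\gamma_f$ may exit into $\cN_2(B)\setminus\cN(B)$, so $\widetilde\gamma_f$ need not separate $\cN(B)$; with $S=\cN_2(B)$ the endpoints of $\widetilde\gamma_f$ (which by Lemma~\ref{lem:approximation} lie on the same edges as those of $\alpha_f\subseteq\cN(B)$) sit on $\partial\cN(B)$, hence in the interior of $\cN_2(B)$, and again there is no separation. Your proposed remedies do not land: a crossing of the approximations in $\cN_2(B)$ does not contradict the hypothesis, which only concerns $\cN(B)$; and feeding such a crossing through Proposition~\ref{prop:plus_sign} would require $C_1^\times$ on $\cN(B')$ for the box $B'$ carrying it, which can reach outside $\cN_2(B)$. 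The ``degenerate edge'' case you flag is not a side issue either---once you handle it by signs you are essentially rebuilding the paper's argument.

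The paper sidesteps all of this by never asking $\widetilde\gamma_f$ to separate anything; it only uses its endpoints $p_1\in e_1$, $p_2\in e_2$ on $\partial\cN(B)$. The technical heart is showing that $g(p_i)$ agrees in sign with $g$ at the endpoints of $e_i$: if not, an excursion of $\cV(g)$ on $e_i$ straddles $p_i$, and the semicircle bound (Corollary~\ref{cor:semicircle}) forces $\widetilde\gamma_f$ to meet that excursion away from $B$, giving a second point of $\cV(f)\cap\cV(g)$ in $\cN_2(B)$ and contradicting $C_1^\times(\cN_2(B))$. With the sign claim in hand, removing $e_1,e_2$ from $\partial\cN(B)$ leaves two boundary arcs whose endpoint $g$-signs differ, so each carries an odd number of $\cA(g)$-vertices; since $\cA(g)$ forms a perfect matching inside $\cN(B)$, some edge of $\cA(g)$ must cross $\alpha_f$. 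The point is that this final step uses only $\alpha_f$ and $\cA(g)$, both of which genuinely live in $\cN(B)$---that is how the contradiction lands in $\cN(B)$ rather than merely in $\cN_2(B)$.
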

	\begin{proof}
		Suppose that $\cV(f_1)$ and $\cV(f_2)$ intersect in $B$. Let $\gamma_{f_1}$ be the component of $\cV(f_1)$ in $\cV(f_1)\cap B$ that includes this intersection.  Let $\widetilde{\gamma}_{f_1}$ be the extension of $\gamma_{f_1}$ into $\cN(B)$ including all excursions into $\cN_2(B)$, see Figure \ref{fig:missing_intersection_schematic}.
		
		Let $\alpha_{f_1}$ be the component of $\cA(f_1)\cap\cN(B)$ which contains the image of $\gamma_{f_1}$ under the ambient isotopy.  We note that the image of $\gamma_{f_1}$ is in $\cN(B)$ since the Plantinga and Vegter algorithm does not deform the curve further than one box away.  Let $e_1$ and $e_2$ be the edges of the subdivision that contain the endpoints of $\alpha_{f_1}$.  By Lemma~\ref{lem:approximation}, the endpoints of $\widetilde{\gamma}_{f_1}$ are also on the edges $e_1$ and $e_2$.  Let $p_1$ be the endpoint of $\widetilde{\gamma}_{f_1}$ on $e_1$ and $p_2$ be the endpoint of $\widetilde{\gamma}_{f_1}$ on $e_2$.
		
		Consider the restriction $f_2|_{\widetilde{\gamma}_{f_1}}$.  The signs of this function at the two endpoints must be opposite because each intersection between $\cV(f_2)$ and $\widetilde{\gamma}_{f_1}$ changes the sign of $f_2|_{\widetilde{\gamma}_{f_1}}$.  Having the same sign at both endpoints would require two intersections for the two sign changes, but this is impossible by Corollary \ref{cor:multiintersection} and that $C_1^\times(\cN_2(B))=\true$.
		
		Now, we prove that the sign of $f_2$ at $p_1$ agrees with its signs at the endpoints of $e_1$.  Since $\cA(f_2)$ does not intersect $\cA(f_1)$ and $\cA(f_1)$ has a vertex on $e_1$, this implies that the signs of $f_2$ on the endpoints of $e_1$ are the same.  Hence, any intersection of $\cV(f_2)$ with $e_1$ must be an excursion.  Suppose, for contradiction, that the sign of $f_2$ at $p_1$ does not match the sign of $f_2$ at the endpoints of $e_1$.  Then, there are an odd number of intersections from $\cV(f_2)\cap e_1$ on either side of $p_1$.  Therefore, there is at least one pair of points of $\cV(f_2)\cap e_1$ on either side of $p_1$ which are connected by an excursion.
		
		We show that this excursion implies that $\cV(f_1)$ and $\cV(f_2)$ intersect at least once more in $\cN_2(B)$, but such an intersection is not possible since $C_1^\times(\cN_2(B))=\true$ and Corollary \ref{cor:multiintersection}.  Since $\cV(f_2)$ has an excursion on edge $e_1$, $\cV(f_1)$ cannot have an excursion on this edge by Lemma \ref{lem:multicrossings}.  
		
		The excursion of $\cV(f_2)$ can be internal or external to $\cN(B)$, but the proofs are symmetric.  Therefore, we focus on the external case.  Since $\cV(f_1)$ does not have an excursion on $e_1$, there is a subset $\gamma_{f_1,D}$ of $\cV(f_1)$ which is a curve contained within the external semicircle and connects $p_1$ to the curved edge of the semicircle.  By Lemma \ref{lem:excursions}, the excursion of $\cV(f_2)$ separates the endpoints of $\gamma_{f_1,D}$.  Since $\cV(f_1)$ cannot have an excursion on $e_1$ and the excursion of $\cV(f_2)$ must remain within the semicircle on $e_1$ by Corollary \ref{cor:semicircle}, it must be that $\gamma_{f_1,D}$ intersects $\cV(f_2)$, implying that $\cV(f_1)$ and $\cV(f_2)$ intersect an additional time in $\cN_2(B)$, which is not possible.  For the argument in the internal case, we replace the external semicircle with the internal semicircle.
		
		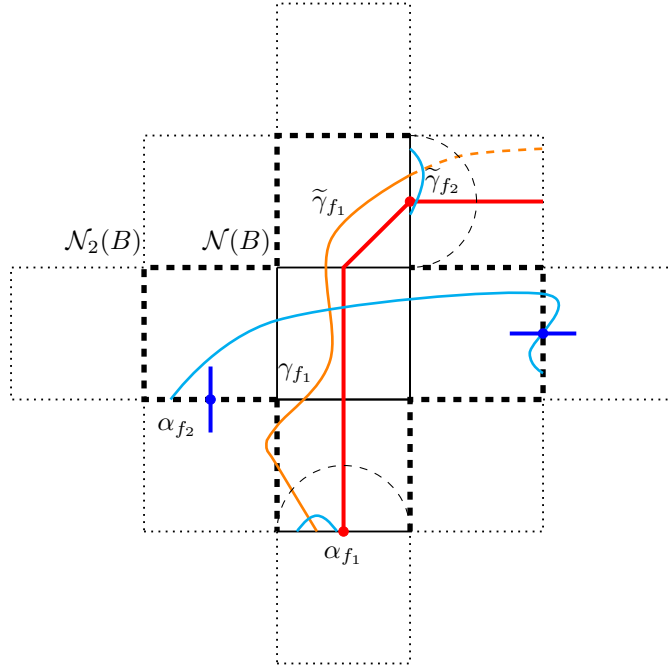
\begin{figure}
			\centering
			\begin{tikzpicture}[scale=1.75]
	\draw[black, line width = 0.7, dotted] (2,1) -- (2,2) -- (1, 2) ;
	\draw[black, line width = 0.7, dotted] (1,-1) -- (2,-1) -- (2,0);
	\draw[black, line width = 0.7, dotted]  (0,2) -- (-1, 2) -- (-1,1);
	\draw[black, line width = 0.7, dotted] (-1,-1) -- (0,-1)  (-1, 0) -- (-1,-1);
	
	\draw[black, line width = 0.7, dotted] (0,-2) -- (1,-2) -- (1, -1)  (0, -1) -- (0,-2);
	\draw[black, line width = 0.7, dotted] (1,2) -- (1,3) -- (0, 3) -- (0,2);
	\draw[black, line width = 0.7] (0,-1) -- (1,-1) (1,0) -- (0, 0)  (0,-1);
	\draw[black, line width = 0.7] (0,0) -- (1,0) -- (1,1) -- (0, 1);
	\draw[black, line width = 0.7] (0,1)  (1,1) -- (1,2)  (0, 2)  (0,1);

	\draw[black, line width = 0.7, dotted]  (-1, 0) -- (-2, 0) -- (-2, 1) -- (-1, 1);
	\draw[black, line width = 0.7, dotted] (2,0) -- (3,0) -- (3,1) -- (2, 1);
	\draw[black, line width = 0.7]  (0,0) -- (0,1);
	\draw[black, line width = 0.7] (1, 1) -- (1,0);
	\draw[color = black, line width = 2, dashed] (0, 0) -- (0, -1)  (1, -1) -- (1, 0) -- (2, 0) -- (2,1);
	\draw[color = black, line width = 2, dashed] (2, 1) -- (1, 1)  (1, 2) -- (0, 2) -- (0, 1) -- (-1, 1) -- (-1, 0) -- (0,0);

    \draw[color = blue, line width = 1.5] (-0.5, -0.25) -- (-0.5, 0.25);
    \draw[color = blue, line width = 1.5] (1.75, 0.5) -- (2.25, 0.5);
	\draw[color=orange, line width = 1] plot[smooth, tension = .6] coordinates{(1, 1.7) (0.4, 1.2)  (.4, 0.3) (0, -0.2) (-0.08, -0.35) (0, -0.5) (0.3, -1)}; 
	
	\draw[color=orange, line width = 1, dashed] plot[smooth, tension = .6] coordinates{(1, 1.7) (1.4, 1.85) (2, 1.9)}; 
	
	\draw[color = red, line width = 1.5] (0.5, -1) -- (0.5, 0) -- (0.5, 1) -- (1, 1.5) -- (2, 1.5) ;
	
	\draw[color=cyan, line width = 1] plot[smooth, tension = .6] coordinates{(-0.8, 0) (0, 0.6) (2, 0.8) (1.9, 0.35) (2, 0.2)}; 
	\begin{scope}
		\clip (1,1) rectangle (2,2);
		\draw[dashed] (1, 1.5) circle(0.5);
	\end{scope}
	
	\begin{scope}
		\clip (0,-1) rectangle (1,0);
		\draw[dashed] (0.5, -1) circle(0.5);
	\end{scope}

	\node[circle, draw = red, fill = red, inner sep = 1.25pt] at (1, 1.5) {};
	\node[circle, draw = red, fill = red, inner sep = 1.25pt, label=below:$\alpha_{f_1}$] at (0.5, -1) {};
    \node[label=below:$\alpha_{f_2}$] at (-0.75, 0) {};
	\node[circle, draw = blue, fill = blue, inner sep = 1.25pt, label=below:] at (-0.5, 0) {};
	\node[circle, draw = blue, fill = blue, inner sep = 1.25pt, label=right:] at (2, 0.5) {};
	\node[label=left:$\gamma_{f_1}$] at (.42,0.2) {};
	\node[label=above:$\widetilde{\gamma}_{f_1}$] at (0.4,1.25) {};
	\node[label=below:$\widetilde{\gamma}_{f_2}$] at (1.25,1.9) {};
	
	\draw[color=cyan, line width = 1] plot[smooth, tension = .9] coordinates{(0.15, -1) (0.3, -0.88) (0.45, -1)}; 
	\draw[color=cyan, line width = 1] plot[smooth, tension = .9] coordinates{(1, 1.4) (1.1, 1.7) (1, 1.9) };
	
	\node[label=above:$\mathcal{N}_2(B)$] at (-1.3,0.95) {};
	\node[label=above:$\mathcal{N}(B)$] at (-0.3,0.95) {};
\end{tikzpicture}
			\caption{An illustration of the setup described in the proof of Proposition \ref{prop:nomissing}.  The two intersections between $\widetilde{\gamma}_{f_1}$ and excursions of $\widetilde{\gamma}_{f_2}$ are the impossible cases from the proof.  The thick dashed lines on the boundary of $\cN(B)$ illustrate the two components of $\partial\cN(B)$ split by the endpoints of $\widetilde{\gamma}_{f_1}$.}
			\label{fig:missing_intersection_schematic}
		\end{figure}
		
		We have shown that the sign of $f_2$ at $p_1$ agrees with the sign of $f_2$ at the endpoints of $e_1$.  Similarly, the sign of $p_2$ agrees with the sign of $f_2$ at the endpoints of $e_2$.  Moreover, since the signs of $f_2$ at $p_1$ and $p_2$ differ, the sign of $f_2$ at the endpoints of $e_1$ differ from the sign of $f_2$ at the endpoints of $e_2$.  We also observe that removing edges $e_1$ and $e_2$ from the boundary $\partial\cN(B)$ splits the boundary into two components.  Since the signs of $f_2$ are different at the endpoints of the two components, there must be an odd number of vertices of $\cA(f_2)$ on each of these two components.  However, since $\cA(f_2)$ forms a perfect matching on its vertices in $\cN(B)$, one of the edges of $\cA(f_2)$ must intersect $\alpha_{f_1}$, but this is not possible.
	\end{proof}
	
	Therefore, by Proposition \ref{prop:plus_sign}, every transversal intersection of the approximations $\cA(f_1)$ and $\cA(f_2)$ corresponds to a unique intersection of the varieties $\cV(f_1)$ and $\cV(f_2)$.  On the other hand, Proposition~\ref{prop:nomissing} implies that every intersection of the varieties $\cV(f_1)$ and $\cV(f_2)$ corresponds to an intersection of the approximations $\cA(f_1)$ and $\cA(f_2)$.  However, this intersection does not need to be transversal, see Figure \ref{fig:two_ellipse_tangential_intersect_before}.
	
	\begin{figure}[htb]
		\centering
		\input{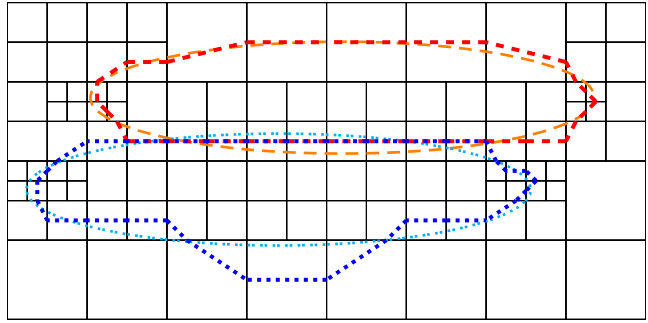}
		\caption{Approximations from the standard Plantinga and Vegter algorithm which share a segment while the curves intersect in two places.  The approximations and the curves are paired by color and line style.}
		\label{fig:two_ellipse_tangential_intersect_before}
	\end{figure}
	
	\subsection{Shared edges in approximations}
	For the approximations $\cA(f_1)$ and $\cA(f_2)$, we call an intersection {\em non-transversal} if the approximations meet on the boundary of a box or the approximations coincide along shared segments.  
	
	\begin{definition}\label{def:snake}
		A contiguous sequence of boxes $S$ with shared segments or vertices of $\cA(f_1)$ and $\cA(f_2)$ is called a {\em snake} provided that the union of boxes in $S$ is topologically a disk, see Figure \ref{fig:snake_schematic}.  The boxes where the approximations separate are called the {\em heads} of the snake.  We call the non-head boxes the {\em interior} of the snake.  A {\em neighborhood} $\cN(S)$ of a snake is the union of all the neighborhoods of boxes in the snake along with the neighborhoods of the heads of the snake.  The neighborhood $\cN_i(S)$ is defined as $\cN_i(S):=\cN(\cN_{i-1}(S))$.
	\end{definition}
	\begin{figure}[hbt]
		\centering
		\definecolor{purple1}{RGB}{147, 112, 219}
\begin{tikzpicture}[scale=1.5]

	\draw[black] (-1,0) -- (0,0) -- (0,1) -- (-1, 1) -- (-1,0);
	\draw[black] (0,0) -- (1,0) -- (1,1) -- (0, 1) -- (0,0);
	\draw[black] (0,1) -- (1,1) -- (1,2) -- (0, 2) -- (0,1);
	\draw[black] (1,1) -- (2,1) -- (2,2) -- (1, 2) -- (1,1);
	\draw[black] (2,1) -- (3,1) -- (3,2) -- (2, 2) -- (2,1);
	\draw[black] (3,1) -- (4,1) -- (4,2) -- (3, 2) -- (3,1);
	\draw[black] (3,0) -- (4,0) -- (4,1) -- (3, 1) -- (3,0);
	\draw[black] (4,0) -- (5,0) -- (5,1) -- (4, 1) -- (4,0);
	\begin{scope}
    \clip (-1, 0) rectangle (5, 2);
	\draw[color=orange, line width = .7] plot[smooth, tension = .6] coordinates{(-1, -0.4) (1, 1.4) (3.1, 1.2) (5, -0.4) }; 
	
	\draw[color=cyan, line width = .7] plot[smooth, tension = .5] coordinates{(-2, 0.2) (0.2, 0.5)  (1, 1.2) (2, 1.3)  (3.7, 0.85)(6, 0.45) }; 
    \end{scope}
	\draw[red, line width = 1.5pt, dashed] (-0.5, 0) -- (0, 0.5) (4, 0.5) -- (4.5, 0);
	\draw[blue, line width = 1.5pt, dotted] (-1, 0.5) -- (0, 0.5) (4, 0.5) -- (5, 0.5);
	
	\draw[purple1, line width = 2pt] (0, 0.5) -- (0.5, 1) -- (1, 1.5) -- (2, 1.5) -- (3, 1.5) -- (3.5, 1) -- (4, 0.5);

    \node[circle, draw = blue, fill = blue, inner sep = 1.25pt, label=left:$p_1$] at (-1, 0.5) {};
    \node[circle, draw = blue, fill = blue, inner sep = 1.25pt, label=right:$p_2$] at (5, 0.5) {};
    \node[circle, draw = red, fill = red, inner sep = 1.25pt, label=below:$q_1$] at (-0.5, 0) {};
    \node[circle, draw = red, fill = red, inner sep = 1.25pt, label=below:$q_2$] at (4.5, 0) {};
\end{tikzpicture}
		\caption{Example of a snake where the approximations share segments.  The shared segments are illustrated by a thicker line in purple.}
		\label{fig:snake_schematic}
	\end{figure}
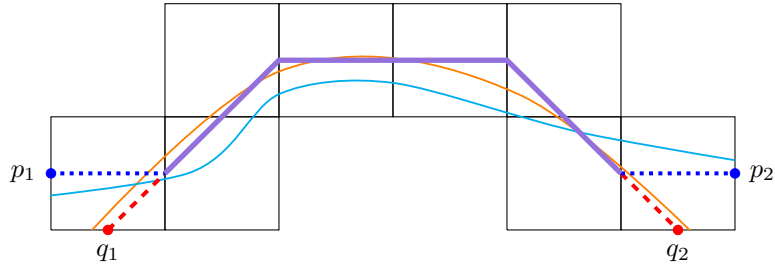
	
	The definition of a snake and the properties of boxes accepted by the approximation algorithm lead to many strong properties about the topological behavior of snakes.
	
	\begin{lemma}\label{lem:onecomponent}
		Suppose that $S$ is a snake and for every box $B\in S$, $C_1^\times(B)$ and $C_1^{f_1}(B)$ hold.  Then, every box in the interior of the snake has exactly one component of the approximation.
	\end{lemma}
	\begin{proof}
		Suppose that $B$ is an interior box of the snake with two components of the approximation.  The two components cannot intersect the same side of $B$ because this would satisfy the conditions of Lemma \ref{lem:multicrossings}, which is not possible.  Therefore, all four sides of the box must have a point of the approximation.  This leads to an alternating sign pattern on the vertices of the box, which violates $C_1^{f_1}$ as shown in \citet{PV:2004,PV:2007}.
	\end{proof}
	
	\begin{lemma}\label{lem:snakeneighbors:samesize}
		Suppose that $S$ is a snake and for every box $B\in S$, $C_1^\times(\cN(B))$ holds.  Let $B_1$ and $B_2$ be in the interior of $S$ such that $B_1$ and $B_2$ share an edge.  If $B_1$ and $B_2$ are the same size, then the shared approximation passes through that shared edge.
	\end{lemma}
	\begin{proof}
		We first observe that if two sides of $B_1$ and $B_2$ are collinear, then the approximation cannot pass through both sides, as this would satisfy the conditions of Lemma \ref{lem:multicrossings}, which is not possible.  Wlog, we consider the case where $B_1$ is above $B_2$.  Up to symmetry, there are two cases to consider.  If the curve approximation passes from the west to east sides of $B_2$, see Figure \ref{fig:snakeneighbors}(a), then there are not enough sides of $B_1$ for the curve approximation to pass through.  So, the approximation of the snake does not pass through $B_1$, so $B_1$ is not part of the snake, which is a contradiction.
		
		Suppose now that the approximation passes through the south side of $B_2$, and, up to symmetry, the west side of $B_2$, see Figure \ref{fig:snakeneighbors}(b).  This leaves two sides of $B_1$ for which the approximation can pass through without violating the observation above: the north and east sides of $B_1$.  From the curves of the approximation in these two boxes, we can infer the sign pattern of $f_1$ and $f_2$ at the corners of the boxes.  For instance, if $f_1$ has a positive value at the southwest corner of $B_2$, then it has a negative value at the northwest corner of $B_1$ and the southeast corner of $B_2$.  In addition, it has a positive value at the northeast corner of $B_1$.  This sign pattern is not allowed since, by the intermediate value theorem, there are a pair of points on the east and west sides of $B_1\cup B_2$ where $f_1$ has the same value.  By a similar argument, the same result holds for $f_2$.  These conditions satisfy the hypotheses of Lemma \ref{lem:multicrossings}, and so this configuration is impossible.  
	\end{proof}
	
	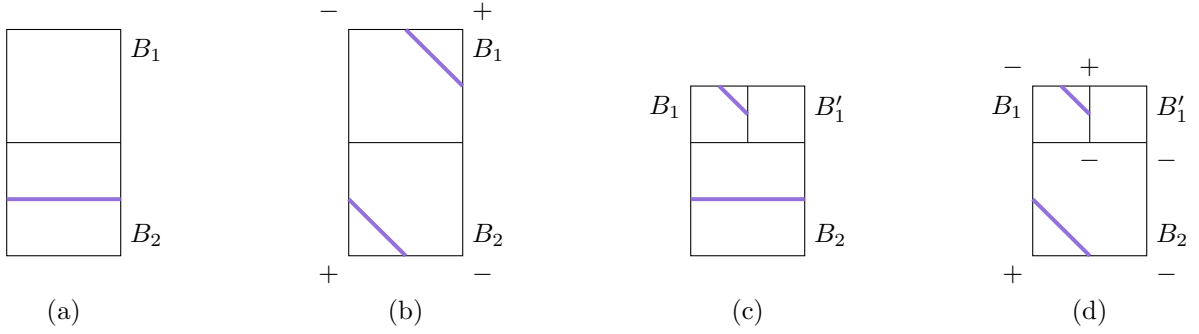
\begin{figure}
		\begin{center}
			\begin{tikzpicture}[scale=1.5]
				\definecolor{purple1}{RGB}{147, 112, 219}
				\draw (0,0) -- (1,0) -- (1,2) -- (0,2) -- cycle;
				\draw (0,1) -- (1,1);
				\draw[color = purple1,line width = 1.5pt](0,.5) -- (1,.5);
				\node[below right] at (1,2) {$B_1$};
				\node[above right] at (1,0) {$B_2$};
				\node at (.5,-.5) {(a)};
				\begin{scope}[shift = {(3,0)}]
					\draw (0,0) -- (1,0) -- (1,2) -- (0,2) -- cycle;
					\draw (0,1) -- (1,1);
					\draw[color = purple1,line width = 1.5pt](0,.5) -- (.5,0);
					\draw[color = purple1,line width = 1.5pt](1,1.5) -- (.5,2);
					\node[below right] at (1,2) {$B_1$};
					\node[above right] at (1,0) {$B_2$};
					\node[below left] at (0,0) {$+$};
					\node[below right] at (1,0) {$-$};
					\node[above left] at (0,2) {$-$};
					\node[above right] at (1,2) {$+$};
					\node at (.5,-.5) {(b)};
				\end{scope}
				\begin{scope}[shift = {(6,0)}]
					\draw (0,0) -- (1,0) -- (1,1.5) -- (0,1.5) -- cycle;
					\draw (0,1) -- (1,1);
					\draw (.5,1) -- (.5,1.5);
					\draw[color = purple1,line width = 1.5pt](0,.5) -- (1,.5);
					\draw[color = purple1,line width = 1.5pt](.25,1.5) -- (.5,1.25);
					\node[below right] at (1,1.5) {$B_1'$};
					\node[below left] at (0,1.5) {$B_1$};
					\node[above right] at (1,0) {$B_2$};
					\node at (.5,-.5) {(c)};
				\end{scope}
				\begin{scope}[shift = {(9,0)}]
					\draw (0,0) -- (1,0) -- (1,1.5) -- (0,1.5) -- cycle;
					\draw (0,1) -- (1,1);
					\draw (.5,1) -- (.5,1.5);
					\draw[color = purple1,line width = 1.5pt](0,.5) -- (.5,0);
					\draw[color = purple1,line width = 1.5pt](.25,1.5) -- (.5,1.25);
					\node[below right] at (1,1.5) {$B_1'$};
					\node[below left] at (0,1.5) {$B_1$};
					\node[above right] at (1,0) {$B_2$};
					\node at (.5,-.5) {(d)};
					\node[above left] at (0,1.5) {$-$};
					\node[above] at (.5,1.5) {$+$};
					\node[below left] at (0,0) {$+$};
					\node[below right] at (1,0) {$-$};
					\node[below right] at (1,1) {$-$};
					\node[below] at (.5,1) {$-$};
				\end{scope}
			\end{tikzpicture}
		\end{center}
		\caption{Case-by-case analysis for Lemmas \ref{lem:snakeneighbors:samesize} (a,b) and \ref{lem:snakeneighbors:differentsize} (c,d).  In every case, condition $C_1^\times$ would be violated, so every one of these examples is not possible.}
		\label{fig:snakeneighbors}
	\end{figure}
	
	\begin{lemma}\label{lem:snakeneighbors:differentsize}
		Suppose that $S$ is a snake and for every box $B\in S$, $C_1^\times(\cN(B))$ holds.  Let $B_1$ and $B_2$ be in the interior of $S$ such that $B_1$ and $B_2$ share an edge.  If $B_1$ is half the size of $B_2$, then the shared approximation passes through that shared edge or the shared edge of the common neighbor of $B_1$ and $B_2$.
	\end{lemma}
	
	\begin{proof}
		Up to symmetry, suppose that $B_1$ is above $B_2$ and that the west sides of $B_1$ and $B_2$ are aligned.  Let $B'_1$ be the complementary box, which is the same size as $B_1$ and is along the same edge of $B_2$.  We note that by the observation at the beginning of Lemma \ref{lem:snakeneighbors:samesize}, the approximations cannot pass through the same side of a box twice when the neighbor has been subdivided.
		
		Up to symmetry, there are two cases to consider.  If the curve approximation passes from the west to east sides of $B_2$, then the curve must pass from the north to east sides of $B_1$, but then, there are not enough edges in $B_1'$ to continue the approximation by the observation above.
		
		Now, assume that the approximation passes through the south side of $B_2$, and, up to symmetry, the west side of $B_2$.  This leaves two sides for $B_1$ to continue the approximation, its north and east sides.  By looking at the sign patterns that these intersections imply, we are in a nearly identical case as in the second part of Lemma \ref{lem:snakeneighbors:samesize}, and this is not possible by Lemma \ref{lem:multicrossings}.
	\end{proof} 
	
	Together, Lemmas \ref{lem:snakeneighbors:samesize} and \ref{lem:snakeneighbors:differentsize} imply that three of the sides of a head of a snake have no adjacent snake boxes.  Since a snake must be topologically a disk, if there were another neighbor to the head of the snake, then all of the boxes inside the loop would need to contain the approximation.  However, an interior box $B$ of the snake cannot be completely surrounded by other interior boxes of the snake since Lemmas \ref{lem:snakeneighbors:samesize} and \ref{lem:snakeneighbors:differentsize} would imply that there are two components in $B$, but this violates Lemma \ref{lem:onecomponent}.  Therefore, the only possibility would be for the snake to curl around the head.  This, however, leads to two crossings on the collinear sides, which is also not possible, see Figure \ref{fig:adjacenttosnakehead}.  We collect this result in the following lemma.
	
	\begin{lemma}
		Suppose that $S$ is a snake and for every box $B\in S$, $C_1^\times(\cN(B))$ holds.  Three of the sides of each head of the snake are not adjacent to any other boxes of the snake.
	\end{lemma}
	
	\begin{figure}[hbt]
		\begin{center}
			\begin{tikzpicture}[scale=1.5]
				\definecolor{purple1}{RGB}{147, 112, 219}
				\draw (0,0) -- (1,0) -- (1,1) -- (0,1) -- cycle;
				\draw (1,0) -- (1.5,0) -- (1.5,.5) -- (1,.5);
				\draw (1.5,.5) -- (1.5,1) -- (1,1);
				\draw (1,1) -- (1,2) -- (2,2) -- (2,1) -- (1.5,1);
				\draw (0,1) -- (0,2) -- (1,2);
				\node[above left] at (0,0) {$H$};
				\draw[color=blue,line width = 1.5pt, dotted] (0,.5) -- (1,.25); 
				\draw[color=purple1,line width = 1.5pt](1,.25) -- (1.25,.5) -- (1.25,1) -- (1,1.5) -- (.5,2);
				\draw[color=red,line width = 1.5pt, dashed](.5,0) -- (1,.25);
				I	\end{tikzpicture}
		\end{center}
		\caption{It is impossible for an interior box of a snake to be adjacent to the head of a snake.  The box labeled $H$ is the head of the snake and the vertical line to the right of the head of the snake contains two intersections with the approximations, which is not possible.}
		\label{fig:adjacenttosnakehead}
	\end{figure}
	
	\begin{proposition}\label{prop:snake}
		Suppose that $S$ is a snake and for every box $B\in S$, $C_1^\times(\cN(B))$ holds.  There is at most one crossing in $\cN(S)$ corresponding to the snake.
	\end{proposition}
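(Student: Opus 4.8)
The plan is to argue by contradiction: suppose $\cV(f)$ and $\cV(g)$ admit two distinct crossings $p_1,p_2$ corresponding to the snake $S$, and exhibit a point at which $\nabla f$ and $\nabla g$ are parallel that lies in the smallest rectangle containing $\cN(B)$ for some $B\in S$, contradicting the hypothesis. First, as in the proofs of Propositions~\ref{prop:plus_sign} and~\ref{prop:nomissing}, I would pass from the approximations to the varieties. Let $\overline{\alpha}_f\subseteq\cA(f)$ and $\overline{\alpha}_g\subseteq\cA(g)$ be the (unique) components carrying the shared segments of the snake, and let $\widetilde{\gamma}_f\subseteq\cV(f)$ and $\widetilde{\gamma}_g\subseteq\cV(g)$ be the corresponding curves-with-excursions furnished by Lemma~\ref{lem:approximation}. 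Since an excursion reaches at most one box away (Corollary~\ref{cor:semicircle}), the portions of $\widetilde{\gamma}_f$ and $\widetilde{\gamma}_g$ traversing the snake remain inside $\bigcup_{B\in S}\cN(B)$; and because $p_1$ and $p_2$ correspond to the snake, both are points of $\widetilde{\gamma}_f\cap\widetilde{\gamma}_g$ lying on these traversing portions.

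Next I would run a Rolle-type argument along the curve $\cV(f)$, in the spirit of Lemma~\ref{lem:multicrossings} but with a segment replaced by an arc of $\widetilde{\gamma}_f$. On the sub-arc of $\widetilde{\gamma}_f$ joining $p_1$ to $p_2$ through the snake, choose two consecutive points $p,p'$ of $\widetilde{\gamma}_f\cap\cV(g)$ and let $\delta_f$ be the sub-arc between them; then $g$ vanishes at $p$ and $p'$ and is nonzero on the interior of $\delta_f$, so $g|_{\delta_f}$ attains an interior extremum at some $q$. At $q$ the directional derivative of $g$ along $\cV(f)$ vanishes, so $\nabla g(q)$ is normal to $\cV(f)$; since $\cV(f)$ is smooth this forces $\nabla f(q)$ and $\nabla g(q)$ to be parallel, i.e.\ $\nabla f(q)\times\nabla g(q)=0$. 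Finally, $q\in\delta_f$ lies on the snake-traversing part of $\widetilde{\gamma}_f$, so the box containing $q$ is a box of $S$ or is adjacent to one; in either case $q$ lies in the smallest rectangle containing $\cN(B)$ for that $B\in S$, whence $C_1^\times(\cN(B))=\false$, a contradiction. Thus at most one crossing corresponds to the snake.

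I expect the main obstacle to be the confinement bookkeeping in the last step: one has to use the meaning of ``corresponding to the snake'' to pin $p_1,p_2$, and hence $\delta_f$ and the extremum $q$, to the part of the curves running alongside the shared segments --- not out along a tail past a head of the snake, and not merely in $\cN(S)$ through the neighborhood of a head, since in those cases the relevant box need not lie in $S$. Tied to this is the degenerate configuration in which the snake closes up and $\widetilde{\gamma}_f$ is a topological circle, which I would handle by running the same extremum argument on an innermost arc of $\widetilde{\gamma}_f$ between consecutive intersections with $\cV(g)$.
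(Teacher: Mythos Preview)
Your argument is correct and reaches the same contradiction as the paper, but by a different and somewhat more elementary path. The paper works along the $g$-curve: it orients $\overline{\gamma}_f$ so that $\nabla f$ points to its left, observes that $\overline{\gamma}_f$ separates $\cN(S)$ into a positive and a negative side, and hence that consecutive crossings of $\overline{\gamma}_g$ through $\overline{\gamma}_f$ alternate between clockwise and counterclockwise turns of the tangent; this forces $\nabla f\times\nabla g$ to take opposite signs at two such crossings, and the intermediate value theorem applied to $\nabla f(\overline{\gamma}_g)\times\nabla g(\overline{\gamma}_g)$ yields the parallel-gradient point on $\overline{\gamma}_g$. You instead work along the $f$-curve and apply Rolle to $g|_{\delta_f}$ between two consecutive zeros, which locates the parallel-gradient point on $\cV(f)$ without any separation or orientation bookkeeping. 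Your route is shorter and sidesteps the need to argue that $\overline{\gamma}_f$ actually cuts $\cN(S)$ into two pieces; the paper's route, on the other hand, makes the sign structure of $\nabla f\times\nabla g$ at crossings explicit, which feeds directly into the orientation test of Lemma~\ref{lem:snakecrossings} that follows. The confinement issue you flag --- making sure the extremum $q$ lands in $\cN(B)$ for some $B\in S$ rather than past a head --- is real and is glossed over in the paper as well; your handling of it (the box containing $q$ is in $S$ or adjacent to one) is adequate given that the crossings are assumed to correspond to the snake.
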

	\begin{proof}
		Let $\alpha_{f_1}=\alpha_{f_2}$ be the two subsets of the approximation in $S$ which share vertices or segments.  Let $\gamma_{f_1}$ be the subset of $\cV(f_1)\cap\cN(S)$ that deforms to $\alpha_{f_1}$ under the ambient isotopy.  Let $\overline{\gamma}_{f_1}$ be the extension $\gamma_{f_1}$ in $\cN_2(S)$, but without excursions.  We define $\gamma_{f_2}$ and $\overline{\gamma}_{f_2}$ analogously, see Figure \ref{fig:snake_proof_schematic.txt}.
		
		We consider both $\overline{\gamma}_{f_1}$ and $\overline{\gamma}_{f_2}$ as paths and fix an orientation for each path.  Walking along the path $\overline{\gamma}_{f_1}$, the sign of $\nabla f_1\times \nabla f_2$ at intersections is completely determined by whether $\overline{\gamma}_{f_1}$ is passing from the negative side of $\overline{\gamma}_{f_2}$ to the positive side or vice versa.
		
		Suppose that $\overline{\gamma}_{f_1}$ and $\overline{\gamma}_{f_2}$ intersect multiple times in $\cN(S)$.  We show that this violates $C_1^\times(\cN(B))$ for some $B\in S$, see Figure \ref{fig:snake_proof_schematic.txt} for details.  Walking along $\overline{\gamma}_{f_1}$, we observe that the sign of $f_2|_{\overline{\gamma}_{f_1}}$ changes precisely when the curve $\overline{\gamma}_{f_1}$ crosses $\overline{\gamma}_{f_2}$.  In particular, the type of crossing, that is, from negative to positive or from positive to negative, alternate.  Therefore, since there are at least two crossings, there are two crossings where the sign of $\nabla f_1\times \nabla f_2$ differs.
		
		Finally, we apply the function $\nabla f_1\times \nabla f_2$ to the path $\overline{\gamma}_{f_1}$.  In other words, we look at $\nabla f(\overline{\gamma}_{f_1})\times \nabla g(\overline{\gamma}_{f_1})$.  We know that the value of this continuous function has different signs along the path.  Hence, by the intermediate value theorem, there is some point on this curve where this function vanishes, that is, the gradients of $f_1$ and $f_2$ are parallel, but this is not possible as it would contradict $C_1^\times(\cN(B))=\true$ for all $B\in S$.
	\end{proof}
	
	\begin{figure}[htb]
		\centering
		\definecolor{purple1}{RGB}{147, 112, 219}
	  \definecolor{green1}{RGB}{0, 102, 0}
	  \begin{tikzpicture}[scale = 1.75]
	  	\draw[black] (-2,0) -- (-1,0) -- (-1,1) -- (-2, 1) -- (-2,0);
	  	\draw[black] (-1,0) -- (0,0) -- (0,1) -- (-1, 1) -- (-1,0);
	  	\draw[black] (0,0) -- (1,0) -- (1,1) -- (0, 1) -- (0,0);
	  	\draw[black] (0,1) -- (1,1) -- (1,2) -- (0, 2) -- (0,1);
	  	\draw[black] (1,1) -- (2,1) -- (2,2) -- (1, 2) -- (1,1);
	  	\draw[black] (2,1) -- (3,1) -- (3,2) -- (2, 2) -- (2,1);
	  	\draw[black] (2,0) -- (3,0) -- (3,1) -- (2, 1) -- (2,0);
	  	\draw[black] (3,0) -- (4,0) -- (4,1) -- (3, 1) -- (3,0);

	  	\node[label=above:$\gamma_{f_2}$] at (2.52,0.1) {};
	  	\node[label=below:$\gamma_{f_1}$] at (1.57,1.45) {};
	  	
	  	\node[label=below:$\gamma'_{f_2}$] at (-1.25,1.1) {};
	  	\node[label=below:$\gamma'_{f_1}$] at (-0.75,0.45) {};
	  	\node[label=below:$\gamma'_{f_1}$] at (0.6,0.8) {};
	  	
	  	\node[label=below:$\del {f_2}$] at (0.3,1.85) {};
	  	\node[label=above:$\gamma'_{f_2}$] at (1.4,1.4) {};
	  	\node[label=above:$\overline{\gamma}_{f_1}$] at (3.52,0.25) {};
	  	\begin{scope}
	  		\clip (-2, 0) rectangle (4, 2);
	  		\draw[color=orange, line width = .75] plot[smooth, tension = .6] coordinates{(-1.8, -1) (-0.8, 1) (-0.2, 1) (0.3, 0.85) (1, 1.4) (2.1, 1.2) (3.8, -1) }; 
	  		
	  		\draw[color=cyan, line width = .75] plot[smooth, tension = .6] coordinates{(-3, 0.2) (0.2, 0.5)  (1, 1.2) (1.5, 1.35)  (2.4, 1.25) (2.9, 0.9) (3.8, 0.9) (3.4, 0.7) (5, 0.65) (5, .35) (4.7, 0) }; 
	  	\end{scope}
  	
  		\begin{scope}
  			\clip (-2.43, -0.5) rectangle (4.47, 2);
  			\draw[color=orange, line width = .75, dashed] plot[smooth, tension = .6] coordinates{(-1.8, -1) (-0.8, 1) (-0.2, 1) (0.3, 0.85) (1, 1.4) (2.1, 1.2) (3.8, -1) }; 
  			
  			\draw[color=cyan, line width = .75, dashed] plot[smooth, tension = .6] coordinates{(-3, 0.2) (0.2, 0.5)  (1, 1.2) (1.5, 1.35)  (2.4, 1.25) (2.9, 0.9) (3.8, 0.9) (3.4, 0.7) (5, 0.65) (5, .35) (4.7, 0) }; 
  		\end{scope}
	  	
	  	\begin{scope}[shift = {(0.01,0)}]
	  		\draw[line width=1pt,black,-stealth](-1.2, 0.325)--(-0.6, 0.37) node[anchor=south west]{};
	  		\draw[line width=1pt,black,-stealth](-1.2, 0.325)--(-0.9, 0.9) node[anchor=south west]{};
	  		\draw[red, line width = 0.8pt] ($(-1.2, 0.325)+({0.4*cos(5)},{0.4*sin(5)})$) arc (5:28:0.4);
	  		\draw[red, line width = 0.8pt, <-] ($(-1.2, 0.325)+({0.4*cos(25)},{0.4*sin(25)})$) arc (25:63:0.4);
	  		
	  		\draw[line width=1pt,black,-stealth](1.91, 1.337)--(2.5, 1.28) node[anchor=south west]{};
	  		\draw[line width=1pt,black,-stealth](1.91, 1.337)--(2.42, 0.94) node[anchor=south west]{};
	  		\draw[red, line width = 0.8pt, ->] ($(1.91, 1.337)+({0.35*cos(-40)},{0.35*sin(-40)})$) arc (-40:-5:0.35);
	  		
	  	\end{scope}
	  	
	  	\draw[line width=1pt,black,-stealth](0.75, 1.27)--(1.32, 1.63) node[anchor=south west]{};
	  	\draw[line width=1pt,black,-stealth](0.75, 1.27)--(0.39, 1.84) node[anchor=south west]{};
	  	
	  	\draw[line width=1pt,black,-stealth](0.35, 0.535)--(0.96, 0.85) node[anchor=south west]{};

	  	\node[circle, draw = blue, fill = blue, inner sep = 1.25pt, label=right:$p_1$] at (-2, 0.5) {};
	  	\node[circle, draw = blue, fill = blue, inner sep = 1.25pt, label=right:$p_2$] at (4, 0.5) {};
	  	\node[circle, draw = red, fill = red, inner sep = 1.25pt, label=below:$q_1\quad$] at (-1.5, 0) {};
	  	\node[circle, draw = red, fill = red, inner sep = 1.25pt, label=below:$q_2$] at (3.5, 0) {};
	  	
	  	\begin{scope}
	  		\clip (-3,0) rectangle (-2,1);
	  		\draw[dashed] (-2, 0.5) circle(0.5);
	  	\end{scope}
	  	\begin{scope}
	  		\clip (-2,1) rectangle (-1,2);
	  		\draw[dashed] (-1.5, 1) circle(0.5);
	  	\end{scope}
	  	\begin{scope}
	  		\clip (-1,1) rectangle (0,2);
	  		\draw[dashed] (-0.5, 1) circle(0.5);
	  	\end{scope}
	  	\begin{scope}
	  		\clip (0,2) rectangle (1,3);
	  		\draw[dashed] (0.5, 2) circle(0.5);
	  	\end{scope}
	  	\begin{scope}
	  		\clip (1,2) rectangle (2,3);
	  		\draw[dashed] (1.5, 2) circle(0.5);
	  	\end{scope}
	  	\begin{scope}
	  		\clip (2,2) rectangle (3,3);
	  		\draw[dashed] (2.5, 2) circle(0.5);
	  	\end{scope}
	  	\begin{scope}
	  		\clip (-2,-1) rectangle (-1,0);
	  		\draw[dashed] (-1.5, 0) circle(0.5);
	  	\end{scope}
	  	\begin{scope}
	  		\clip (-1,-1) rectangle (0,0);
	  		\draw[dashed] (-0.5, 0) circle(0.5);
	  	\end{scope}
	  	\begin{scope}
	  		\clip (0,-1) rectangle (1,0);
	  		\draw[dashed] (0.5, 0) circle(0.5);
	  	\end{scope}
	  	\begin{scope}
	  		\clip (1,0) rectangle (2,1);
	  		\draw[dashed] (1.5, 1) circle(0.5);
	  	\end{scope}
	  	\begin{scope}
	  		\clip (2,-1) rectangle (3,0);
	  		\draw[dashed] (2.5, 0) circle(0.5);
	  	\end{scope}
	  	\begin{scope}
	  		\clip (3,-1) rectangle (4,0);
	  		\draw[dashed] (3.5, 0) circle(0.5);
	  	\end{scope}
	  	\begin{scope}
	  		\clip (4,0) rectangle (5,1);
	  		\draw[dashed] (4, 0.5) circle(0.5);
	  	\end{scope}
	  	\begin{scope}
	  		\clip (3,1) rectangle (4,2);
	  		\draw[dashed] (3.5, 1) circle(0.5);
	  	\end{scope}
	  	\begin{scope}
	  		\clip (3,1) rectangle (4,2);
	  		\draw[dashed] (3, 1.5) circle(0.5);
	  	\end{scope}
	  	\begin{scope}
	  		\clip (1,0) rectangle (2,1);
	  		\draw[dashed] (1, 0.5) circle(0.5);
	  	\end{scope}
	  	\begin{scope}
	  		\clip (1,0) rectangle (2,1);
	  		\draw[dashed] (2, 0.5) circle(0.5);
	  	\end{scope}
	  	\begin{scope}
	  		\clip (-1,1) rectangle (0,2);
	  		\draw[dashed] (0, 1.5) circle(0.5);
	  	\end{scope}
	  \end{tikzpicture}
		\caption{An illustration of the details of the proof of Proposition \ref{prop:snake}.  At two consecutive intersections, the turn directions of the gradients differ.  Therefore, the gradients of $f_1$ and $f_2$ must be parallel somewhere along the path between the intersections.}
		\label{fig:snake_proof_schematic.txt}
	\end{figure}
	
	We have shown that every snake corresponds to at most one intersection of $\cV(f_1)$ and $\cV(f_2)$.  It remains to discuss how to decide if a snake corresponds to an intersection.  We begin by defining the orientation of points with respect to the snake.  Let $S$ be a snake and let $B$ be a head of the snake.  Suppose that $p$ and $q$ are two points on the external boundary $\partial B$ of $B$, that is, the three sides of $B$ that do not include the snake.  We say that 
	$q$ is \emph{clockwise} from $p$ with respect to the snake if walking around $\partial B$ in a clockwise direction starting at the snake reaches $p$ before $q$.  In this case, $p$ is \emph{counterclockwise} from $q$ with respect to the snake, see Figures \ref{fig:snake_schematic} and \ref{fig:snake_proof_schematic.txt}.
	
	\begin{lemma}\label{lem:snakecrossings}
		Let $S$ be a snake and let $\alpha_{f_1}$ and $\alpha_{f_2}$ be the two components of $\cA(f_1)\cap S$ and $\cA(f_2)\cap S$, respectively, that include the shared edges of the snake.  Let $B_1$ and $B_2$ be the two heads of $S$.  Let $p_1$ and $q_1$ be the ends of $\alpha_{f_1}$ and $\alpha_{f_2}$, respectively, in $B_1$, and define $p_2$ and $q_2$ similarly.  The snake corresponds to an intersection if and only if the orientations from $p_1$ to $q_1$ and $p_2$ to $q_2$ are the same.
	\end{lemma}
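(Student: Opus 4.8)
The plan is to reduce the biconditional to a parity count of how many times the variety component of $g$ crosses that of $f$ near the snake, and then to read off that parity from the orientations at the two heads. Let $\gamma_f$ and $\gamma_g$ be the components of $\cV(f)\cap\cN(S)$ and $\cV(g)\cap\cN(S)$ that $\alpha_f$ and $\alpha_g$ deform to under the ambient isotopies, and let $\overline{\gamma}_f$ and $\overline{\gamma}_g$ be their excursion-free extensions in $\cN_2(S)$, exactly as in the proof of Proposition~\ref{prop:snake}. By Lemma~\ref{lem:approximation} applied at the heads $B_1$ and $B_2$, the ends of $\overline{\gamma}_f$ lie on the same edges of the subdivision as the ends $p_1,p_2$ of $\alpha_f$, and the ends of $\overline{\gamma}_g$ lie on the same edges as $q_1,q_2$; after a small deformation we may therefore treat $\overline{\gamma}_f$ as an arc from $p_1$ to $p_2$ and $\overline{\gamma}_g$ as an arc from $q_1$ to $q_2$ in $\cN_2(S)$. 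By Proposition~\ref{prop:snake}, $\overline{\gamma}_f$ and $\overline{\gamma}_g$ cross at most once in $\cN(S)$, so the snake corresponds to an intersection if and only if $\overline{\gamma}_f$ and $\overline{\gamma}_g$ cross an \emph{odd} number of times there.

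The next step is a Jordan-curve parity argument: I would show that the number of crossings of $\overline{\gamma}_g$ with $\overline{\gamma}_f$ inside $\cN(S)$ is odd exactly when the ends $q_1$ and $q_2$ of $\overline{\gamma}_g$ lie in different components of $\cN(S)\setminus\overline{\gamma}_f$. The arc $\overline{\gamma}_f$ together with the two subarcs of $\partial\cN(S)$ it cuts off bounds two regions, and each transversal crossing of $\overline{\gamma}_f$ by $\overline{\gamma}_g$ toggles which region $\overline{\gamma}_g$ occupies. The delicate point is that $\overline{\gamma}_g$ may briefly leave $\cN(S)$ into $\cN_2(S)$ along an excursion; here I would argue, as in the proof of Proposition~\ref{prop:nomissing} using Corollary~\ref{cor:semicircle}, Lemma~\ref{lem:multicrossings}, and $C_1^\times(\cN(B))$ for $B\in S$, that such an excursion re-enters $\cN(S)$ on the same side of $\overline{\gamma}_f$ and meets $\overline{\gamma}_f$ nowhere while outside, so it does not affect the parity.

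I would then localize at each head. Near $B_1$ the shared strand of the approximations enters $\cN(B_1)$, reaches the vertex where $\alpha_f$ and $\alpha_g$ separate, and then $\alpha_f$ continues out to $p_1$ while $\alpha_g$ continues out to $q_1$; $\overline{\gamma}_f$ follows the same pattern, so $\overline{\gamma}_f$ together with the snake cuts $\cN(B_1)$ into the piece met by walking clockwise along $\partial\cN(B_1)\setminus S$ from the snake and the piece met walking counterclockwise. By the definition of the orientation, $q_1$ lies in the clockwise piece precisely when $q_1$ is clockwise from $p_1$ with respect to the snake, and this clockwise piece is one definite global side of the oriented arc $\overline{\gamma}_f$. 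The same analysis at $B_2$ expresses on which side of $\overline{\gamma}_f$ the end $q_2$ lies in terms of whether $q_2$ is clockwise from $p_2$. The main obstacle is the orientation bookkeeping that links the two heads: because $B_1$ and $B_2$ sit at opposite ends of the snake, the boundary walk defining ``clockwise'' circulates $\cN(B_1)$ and $\cN(B_2)$ in opposite senses relative to a fixed transverse normal of $\overline{\gamma}_f$, so the clockwise piece at $B_1$ and the clockwise piece at $B_2$ are on \emph{opposite} global sides of $\overline{\gamma}_f$. I would establish this with an explicit picture (orient $\overline{\gamma}_f$ from $p_1$ to $p_2$, fix the ``left'' normal along it, and check the two circulation senses directly).

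Assembling the pieces: write $A$ for ``$q_1$ is clockwise from $p_1$'' and $B$ for ``$q_2$ is clockwise from $p_2$''. By the head analysis, $q_1$ lies on, say, the left side of $\overline{\gamma}_f$ iff $A$ holds, while $q_2$ lies on the left side iff $B$ fails. Hence $q_1$ and $q_2$ lie on opposite sides of $\overline{\gamma}_f$ iff ($A\iff B$), i.e.\ iff the orientations from $p_1$ to $q_1$ and from $p_2$ to $q_2$ agree; combined with the first two steps this gives that the snake corresponds to an intersection iff the orientations agree. The remaining low-risk items are the degenerate cases in which $\overline{\gamma}_f$ or $\overline{\gamma}_g$ is a closed loop in $\cN_2(S)$ (which forces $\cA(f)$ or $\cA(g)$ to be a head-free loop and is either excluded by the definition of a snake or handled by the same separation argument with the loop in place of the arc) and the routine verification that Lemma~\ref{lem:approximation} places $p_i$ and $q_i$ on the edges the local head picture requires.
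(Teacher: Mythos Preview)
Your proposal is correct and follows essentially the same approach as the paper: both pass from $\alpha_f,\alpha_g$ to the excursion-free variety arcs $\overline{\gamma}_f,\overline{\gamma}_g$, argue that the clockwise/counterclockwise relation of endpoints is preserved, and then reduce to a topological statement about two arcs in a disk with endpoints on its boundary circle. The paper packages the last step more succinctly by observing that $\partial\cN(S)$ is a topological circle and $\overline{\gamma}_f,\overline{\gamma}_g$ are chords, so intersection is equivalent to interleaving of endpoints, which is exactly your ``opposite sides'' orientation bookkeeping stated in one line; your side-tracking and Jordan-parity argument is the same content made explicit, and your worry about $\overline{\gamma}_g$ leaving $\cN(S)$ is unnecessary since $\overline{\gamma}_g$ is by definition a component of $\cV(g)\cap\cN(S)$.
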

	
	\begin{proof}
		Let $\gamma_{f_1}$ be the preimage of $\alpha_{f_1}$ under the Plantinga and Vegter isotopy.  In the Plantinga and Vegter isotopy, there is some flexibility in how the isotopy is done.  In particular, when removing excursions on an edge $e$, there is at least one point of $\cV(f_1)\cap e$ which can remain fixed throughout the isotopy, for instance, one of the extremal points of $\cV(f_1)\cap e$.  By appealing to this flexibility, we may assume that $\gamma_{f_1}$ ends on the same edges of the subdivision as $p_1$ and $p_2$.  Note that $\gamma_{f_1}$ is contained in $\cN(S)$, but might not be a component of either $\cV(f_1)\cap \cN(S)$ or $\cV(f_1)\cap S$ since there is no guarantee that the endpoints of $\gamma_{f_1}$ correspond to either the first or last excursion.
		
		Let $\widetilde{\gamma}_{f_1}$ be the extension of $\gamma_{f_1}$ into $\cN(S)$ with excursions.  We define $\gamma_{f_2}$ and $\widetilde{\gamma}_{f_2}$ similarly.  Then the endpoints of $\widetilde{\gamma}_{f_1}$ and $\widetilde{\gamma}_{f_2}$ are on the same edges of the subdivision as $\alpha_{f_1}$ and $\alpha_{f_2}$.  Since the endpoints of $\alpha_{f_1}$ and $\alpha_{f_2}$ are on different edges of the subdivision, the endpoints of $\alpha_{f_1}$ and $\alpha_{f_2}$ have the same clockwise or counterclockwise relationship as the endpoints of $\widetilde{\gamma}_{f_1}$ and $\widetilde{\gamma}_{f_2}$. 
		
		Let $S_D$ be the subset of $\cN(S)$ consisting of the union of $S$ with external semicircles on the external edges of $S$.  If necessary, we slightly shrink $S_D$ in the following argument so that $S$ and $S_D$ have the same topology.  Let $\gamma_{f_1,D}$ be the extension of $\widetilde{\gamma}_{f_1}$ into $S_D$ until the curve reaches the circular boundaries of the semicircles attached to the edges containing $p_1$ and $p_2$.  By Corollary \ref{cor:semicircle}, $\gamma_{f_1,D}\supseteq\widetilde{\gamma}_{f_1}$.  We define $\gamma_{f_2,D}$ similarly.
		
		We observe that external excursions from non-head boxes of $S$ cannot include intersections as follows:  Suppose, for contradiction, that an external excursion contains an intersection.  If both $\cV(f_1)$ and $\cV(f_2)$ had an excursion on the same external edge, then this would violate Lemma \ref{lem:multicrossings}.  Therefore, if $\cV(f_1)$ has an excursion on edge $e$, then $f_2|_e$ does not change signs.  Therefore, by restricting $f_2$ to the excursion of $\cV(f_1)$, there must be an even number of intersections because each intersection corresponds to a sign change of $f_2$.  This contradicts Proposition \ref{prop:snake}, so the number of intersections is $0$.  Since there are no intersections in $S_D\setminus S$ except perhaps between the curves of the heads of the snake, we may slightly shrink $S_D$ without changing the topology of $S_D$, if necessary, so (1) that the endpoints of $\gamma_{f_1,D}$ and $\gamma_{f_2,D}$ are on the boundary of $S_D$ and (2) each region of $S_D\setminus S$ contains excursions from a single box of $S$.
		
		By the condition in Definition \ref{def:snake}, $S$ and $S_D$ are topologically disks and their boundaries are topological circles.  Moreover, $\gamma_{f_1,D}$ and $\gamma_{f_2,D}$ form curves across these disks.  The paths $\gamma_{f_1,D}$ and $\gamma_{f_2,D}$ intersect if and only if their endpoints interweave along the boundary of the circle.  Interweaving is equivalent to having the same clockwise or counterclockwise order of their endpoints around the circle.  Since their endpoints are on the curved semicircular regions attached to the edges of $S$, the endpoints of $\alpha_{f_1}$ and $\alpha_{f_2}$ are ordered in the same way as the endpoints of $\gamma_{f_1,D}$ and $\gamma_{f_2,D}$.
	\end{proof}
	
	\subsection{Topological correctness}\label{sec:simul_approx}
	
	We provide a complete algorithm for approximating a pair of curves and prove that the output of this algorithm is topologically correct.
	
	\begin{algorithm}[hbt]
		\caption{Simultaneous approximation algorithm for two curves}
		\label{alg:full}
		\begin{algorithmic}[1]
			\Require polynomials $f_1,f_2\in\ZZ[x,y]$, with $\cV(f_1)$ and $\cV(f_2)$ smooth curves such that $\cV(f_1)$ and $\cV(f_2)$ intersect transversely, and a square region $R$ with integral corners such that $\cV(f_1)$ and $\cV(f_2)$ do not intersect on the boundary of $R$.
			\Ensure approximations $\cA(f_1)$ and $\cA(f_2)$ such that $(\cA(f_1),\cA(f_2))$ approximates $(\cV(f_1),\cV(f_2))$.
			\State Subdivide $R$ using the new subdivision step, Algorithm \ref{alg:subdivision}.
			\State Further subdivide boxes $B$ intersecting $\partial R$ until either $C_0^{f_1}(B)=\true$ or $C_0^{f_2}(B)=\true$.
			\State Further subdivide boxes until the side lengths of neighboring boxes differ by at most a factor of two.
			\State Compute the Plantinga and Vegter curve approximation.
			\State Subdivide boxes which have shared approximations, but fail the condition of Definition \ref{def:snake}.
			\State Recompute the Plantinga and Vegter curve approximations, if necessary.
			\State For any snakes, apply Lemma \ref{lem:snakecrossings}.
			\renewcommand{\algorithmicif}{\hspace{\algorithmicindent}\textbf{if}}
			\If{there is no crossing}
			\renewcommand{\algorithmicif}{\textbf{if}}
			\State \hspace{\algorithmicindent}slightly separate the edges of the snake so that the common edges do not overlap.
			\renewcommand{\algorithmicelse}{\hspace{\algorithmicindent}\textbf{else}}
			\Else\hspace{\algorithmicindent}
			\renewcommand{\algorithmicelse}{\textbf{else}}
			\State \hspace{\algorithmicindent}slightly separate the ends of the snake and add an explicit crossing in the middle of the snake.
			\EndIf
			\State\Return {the approximations}
		\end{algorithmic}
	\end{algorithm}
	
	\begin{theorem}\label{thm:twocurves}
		Suppose that $f_1,f_2\in\ZZ[x,y]$, $R=[a,b]\times[c,d]\subseteq\RR^2$ a rectangle such that $a,b,c,d\in\ZZ$.  Suppose also that $\cV(f_1)$ and $\cV(f_2)$ are nonsingular within $R$, do not have a common intersection on the boundary of $R$, and intersect transversely within $R$.  Let $\cA(f_1)$ and $\cA(f_2)$ be the output of Algorithm \ref{alg:full}.  There exists a set $U$ containing $R$ and an ambient isotopy defined on $U$ that simultaneously takes $\cV(f_1)$ to $\cA(f_1)$ and $\cV(f_2)$ to $\cA(f_2)$, respectively.
	\end{theorem}
	\begin{proof}
		Through a rescaling, we reduce to the case where $R$ is a square.  Second, we use the techniques for unbounded curves from \citet{BCGY:2012} for the boundary boxes, as long as, in each boundary box, either $C_0^{f_1}(B)=\true$ or $C_0^{f_2}(B)=\true$.  Therefore, we focus on the topological correctness in the interior of $R$.
		
		By Proposition \ref{prop:nomissing}, every intersection between $\cV(f_1)$ and $\cV(f_2)$ corresponds to either a transversal crossing of $\cA(f_1)$ and $\cA(f_2)$ or a snake.  By investigating the proofs of Propositions \ref{prop:plus_sign} and \ref{prop:snake}, we see that these two features correspond to different types of crossings and cannot identify the same crossing twice.  Thus, by Lemma \ref{lem:snakecrossings}, we identify exactly when a crossing occurs.  This gives a bijection between crossings in the approximation and crossings in the varieties.  
		
		We begin the isotopy by deforming space so that each excursion is moved into a neighboring box.  This can be done as no intersection can involve two excursions as follows: Suppose, to the contrary, that there is a box $B$ such that $\cV(f_1)$ and $\cV(f_2)$ have excursions in $B$ which intersect.  Let the curves of these excursions be $\gamma_{f_1}$ and $\gamma_{f_2}$, respectively.  If these excursions are based on different edges of the subdivision, then $\gamma_{f_1}$ along with the segment between its endpoints forms a closed loop.  If the path $\gamma_{f_2}$ enters the region bounded by this closed loop, then it must also exit this region as the endpoints of $\gamma_{f_2}$ are outside of this region.  By assumption, no intersection can be on the segment between the endpoints of $\gamma_{f_1}$.  Hence, the second intersection must also be with the curve $\gamma_{f_1}$ This accounts for two intersections within $B$, which is not possible by Lemma \ref{cor:multiintersection}.  
		
		On the other hand, if these excursions are on the same edge of the subdivision, then the endpoints of $\gamma_{f_1}$ and $\gamma_{f_2}$ satisfy the conditions of Lemma \ref{lem:multicrossings}, which is also not possible.  Therefore, it is possible to apply this isotopy without introducing or removing any intersections between the curves. 
		
		Let $\widetilde{V}(f_1)$ and $\widetilde{V}(f_2)$ be the result of applying the isotopy from the paragraph above.  By the correctness of the Plantinga and Vegter algorithm, within each box $B$, $\widetilde{\cV}(f_1)\cap B$ is ambient isotopic to $\cA(f_1)\cap B$.  Similarly, $\widetilde{\cV}(f_2)\cap B$ is ambient isotopic to $\cA(f_2)\cap B$.  Let $\gamma_{f_1}$ be a component of $\widetilde{\cV}(f_1)$ and $\gamma_{f_2}$ a component of $\widetilde{\cV}(f_2)$.  If an endpoint of $\gamma_{f_1}$ and $\gamma_{f_2}$ are on the same edge of the subdivision, then the straightening step in the Plantinga and Vegter isotopy moves the endpoints to the same point, that is, it creates a snake.  
		
		On the other hand, suppose that no two components have endpoints on the same edge of the subdivision.  In addition, suppose that $\gamma_{f_1}$ and $\gamma_{f_2}$ intersect.  By Corollary \ref{cor:multiintersection}, these two curves cannot intersect more than once.  Therefore, the endpoints of $\gamma_{f_1}$ separate the endpoints of $\gamma_{f_2}$.  Moreover, since the edges containing the endpoints are different, we find, by the intermediate value theorem, that any path in $B$ between the edges containing the endpoints of $\gamma_{f_1}$ must intersect any path in $B$ between the edges containing the endpoints of $\gamma_{f_2}$.  In particular, this implies that the approximations intersect in $B$.
		
		Finally, for any components of $\widetilde{\cV}(f_1)\cap B$ and $\widetilde{\cV}(f_2) \cap B$ which do not intersect other curve components and whose endpoints lie on edges which are not shared by other curves, we observe that these components can be straightened without intersecting other components.  In particular, as a curve is straightened its endpoints do not leave the edges in which they start and the interior of the curve does not intersect the boundary of the box.  Hence, during the straightening step of the Plantinga and Vegter isotopy, each curve can be straightened individually while the rest of the curves are pushed off to the side.  This straightens all of the curve components within the box without introducing intersections.
	\end{proof}
	
	We note that small Hausdorff distance between the approximation and the variety can be achieved by making sure that the boxes containing the approximations are sufficiently small and that any snakes are also small.  The output of this algorithm on the polynomials generating Figures~\ref{fig:tangential_intersection_nonintersect}(a) and \ref{fig:tangential_intersection_nonintersect}(b) appears in Figure \ref{fig:tangential_intersection_nonintersect_after}.  These new versions correctly approximate the topology of the curves.
	
	\begin{figure}[htb!]
		\centering
		\input{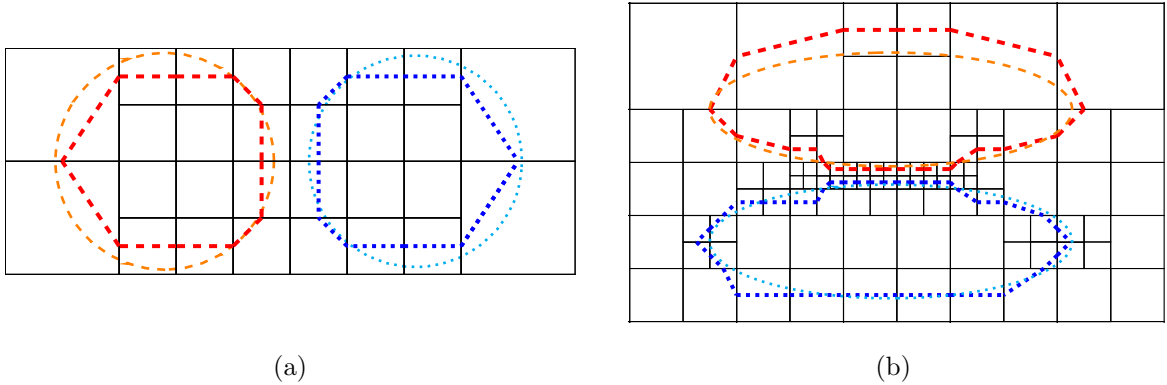}\\
		\caption{Topologically correct versions of (a) Figure \ref{fig:tangential_intersection_nonintersect}(a) and (b) Figure \ref{fig:tangential_intersection_nonintersect}(b) using Algorithm \ref{alg:full}.}
		\label{fig:tangential_intersection_nonintersect_after}
	\end{figure}
	
	\section{Complexity Analysis}\label{sec:complexity}
	
	We provide the complexity analysis of the approximation algorithm for two curves.  In this analysis, we compute the number of boxes that are used in the subdivision as well as the bit-complexity of the  algorithm.  We provide both adaptive bounds based on continuous amortization \citep{BGT:2020,B:2016,BK:2012} and non-adaptive worst-case bounds.  Our analysis relies heavily on the analysis in \citet{BGT:2020}.  For our bit-complexity analysis, we assume that Step 5 of Algorithm \ref{alg:full} is not a dominating term of the complexity.
	
	Our analysis assumes that all predicates are implemented using the {\em centered form}, see, for example \citet{RR:1984}.  In addition, we observe that all of our predicates are for tests for the form $0\not\in F(B)$ for some appropriate function $F$ and axis-aligned square or four-dimensional cube region $B$.  Using the centered form, these tests simplify to
	$$
	\left(\sum_{|\alpha|=1}^d
	\frac{|\partial f^\alpha F(m)|}{|F(m)|\alpha!}\left(\frac{w}{2}\right)^{|\alpha|}
	\right)<1
	$$
	where $m$ is the {\em midpoint} of $B$, that is, the average of the vertices, and $w$ is the {\em width} of $B$, that is, the length of a side, see \cite[Corollary 14]{BGT:2020} for additional details.
	
	In our complexity analysis, we use a slightly weaker version of Algorithm \ref{alg:subdivision}.  Instead of testing $C_1^\times(\cN_2(B)\cap 13B)$, we only test $C_1^\times(13B)$.  If $C_1^\times(13B)=\true$, then $C_1^\times(\cN_2(B)\cap 13B)=\true$.  This change can only increase the number of boxes found in the subdivision.
	
	\subsection{Adaptive bounds}\label{sec:adaptivebounds}
	We begin by recalling the basic facts of continuous amortization as developed in \citet{BGT:2020}, \citet{B:2016}, and \citet{BK:2012}.  
	
	\begin{definition}[{cf.} {\citet[Definition 4]{BGT:2020}}]
		Let $R=[a,b]\times[c,d]\subseteq\RR^2$ be a rectangle and $C$ a predicate.  A {\em local size bound} for $C$ is a function $F:R\rightarrow\RR_{\geq 0}$ with the property that 
		$$
		F(x)\leq \inf_{\substack{J\text{ a square}\\x\in J\\C(J)=\false}}\area(J).
		$$
	\end{definition}
	In other words, $F(x)$ is a lower bound on the area of any square containing $x$ which fails the predicate.  This local size bound provides a way to calculate the complexity of a subdivision-based algorithm.
	\begin{proposition}[{cf.} {\citet[Theorem 5]{BGT:2020}}]\label{prop:continuousamortization}
		Let $R=[a,b]\times[c,d]\subseteq\RR^2$ be a square, $C$ a predicate, and $F$ a local size bound for $C$.  Let $h:\RR_{\geq 0}\rightarrow \RR$ be a non-increasing function.  Let $P$ be the final partition formed by iteratively splitting $R$ into four smaller squares until $C$ holds on each square.  Then
		$$
		\sum_{B\in P}h(\area(B))\leq\max\left\{h(\area(R)),\int_R\frac{4h(F(x)/4)}{F(x)}dA\right\}.
		$$
		If $h\equiv 1$, then the integral bounds the number of regions formed by the subdivision.  If the subdivision does not terminate, then the integral is infinite.
	\end{proposition}
	
	In the current paper, there are five predicates to consider $C_0^{f_1}$, $C_0^{f_2}$, $C_1^{f_1}$, $C_1^{f_2}$, and $C_1^\times$.  Of these, there are local size bounds provided in \citet{BGT:2020} for the first four.  In particular, \citet[Corollary 16]{BGT:2020} gives a local size bound for $C_0^{f_i}$:
	$$
	G_0^{f_i}(x) = \left(\frac{ 2\ln\left(\frac{5}{4}\right)\dist_{\CC}(x, f_i) }{ 4\deg(f_i) + \sqrt{2}\ln\left(\frac{5}{4}\right) }\right)^2,
	$$
	where $\dist_{\CC}(x, f_i)$ denotes the distance from $x\in\RR^2$ to the complex variety $\cV_{\CC}(f_i)$.  Similarly, \citet[Corollary 19]{BGT:2020} gives a local size bound for $C_1^{f_i}$:
	$$
	G_1^{f_i} = \left(\frac{ 2\ln\left(\frac{65}{64}\right) \dist_{\CC}\left(\left(x,x\right), g_i\right)}{ 32\left(\deg(f_i) - 1\right) + 2\ln\left( \frac{65}{64} \right) }\right)^2,
	$$
	where $g_i(x_1, x_2, y_1, y_2) = \left\langle \nabla f_1(x_1, x_2), \nabla f_2(y_1, y_2) \right\rangle$, and $\dist_{\CC}((x,x),g_i)$ denotes the distance from the point $(x,x)$ in the diagonal of $\RR^2\times\RR^2\subseteq \CC^4$ to the complex variety $\cV_{\CC}(g_i)$. 
	
	It remains for us to compute a local size bound for $C_1^\times$.  Define $h(x_1,x_2,y_1,y_2)=\nabla f_1(x_1,x_2)\times \nabla f_2(y_1,y_2),$ where we omit the selection of the third (and only nonzero) entry in the cross product.  We observe that the $C_1^\times$ test is based on whether $\square h(B,B)$ contains $0$.  The construction in \citet{BGT:2020} shows how to develop a local size bound for any function of this form.  We apply these results to find a local size bound for $C_1^\times$.
	
	\begin{lemma}\label{lem:localsizeboundcross}
		Let $f_1,f_2 \in \RR\left[x, y\right]$ and define $h \in \RR\left[x_1, x_2, y_1, y_2\right]$ as $h(x_1, x_2, y_1, y_2) = \nabla f_1(x_1, x_2) \times \nabla f_2(y_1, y_2).$ Let $B \subseteq \RR^2,$ and suppose that there is a point $x \in B$ such that 
		$$
		13\Diam(B)\leq \frac{ 2\sqrt{2}\ln\left( \frac{65}{64} \right) \dist_{\CC}\left(\left(x,x\right), h\right)}{ 32\left(d - 1\right) + 2\ln\left( \frac{65}{64} \right) },
		$$
		where $d=\max\{\deg(f_1),\deg(f_2)\}$.  Then $C_1^\times(13B)=\true$.  Moreover, a local size bound for this test is
		$$
		G_1^\times(x) =  \left(\frac{ 2\ln\left( \frac{65}{64} \right) \dist_{\CC}\left(\left(x,x\right), h\right)}{ 416\left(d - 1\right) + 26\ln\left( \frac{65}{64}\right) }\right)^2.
		$$
	\end{lemma}
	\begin{proof}
		By inspection, the bounds arrived at in \citet[Corollaries 19 and 20]{BGT:2020} depend only on the degrees and numbers of variables of the polynomials.  We observe that the degree of $h$ is bounded by $2(d-1)$ and that $f_1$ and $f_2$ are bivariate polynomials.  These are the exact same estimates that are used to develop the bounds in \citet[Corollaries 19 and 20]{BGT:2020}.  Therefore, the bounds in \citet[Corollaries 19 and 20]{BGT:2020} apply in this case, and the bounds above follow directly.
	\end{proof}
	
	Since the predicates in Algorithm \ref{alg:subdivision} are compound predicates, we combine the local size bounds from above to accommodate these predicates.  We denote the predicates as $C_{00}$, $C_{01}$, $C_{10}$, and $C_{11}$, where the subscripts indicate whether $C_0^{f_i}$ or $C_1^{f_i}$ is used.  For example, $C_{01}(B)$ is $\true$ when both $C_0^{f_1}(B)$ and $C_1^{f_2}(B)$ are $\true$.  Hence, $C_{01}(B)$ is guaranteed to be $\true$ if $\area(B)\leq G_0^{f_1}(x)$ and $\area(B)\leq G_1^{f_2}(x)$ for some $x\in B$.  Therefore, the minimum of the two local size bounds is a local size bound for $C_{01}$.  More precisely, the local size bounds for the predicates in Algorithm \ref{alg:subdivision} follow.
	\begin{align*}
		G_{00} &= \min\left(G_0^{f_1}, G_0^{f_2}\right) &
		G_{01} &= \min\left(G_0^{f_1}, G_1^{f_2}\right) \\
		G_{10} &= \min\left(G_1^{f_1}, G_0^{f_2}\right) &
		G_{11} &= \min\left(G_1^{f_1}, G_1^{f_2}, G_1^\times\right) 
	\end{align*}
	A direct application of Proposition \ref{prop:continuousamortization} gives the following result on the complexity of the approximation algorithm for two curves.
	\begin{theorem}
		Suppose that $f_1, f_2 \in \ZZ[x, y]$.  Define $g_i(x_1,x_2,y_1,y_2)=\langle\nabla f_i(x_1,x_2),\nabla f_i(y_1,y_2)\rangle$ and $h(x_1,x_2,y_1,y_2)=\nabla f_1(x_1,x_2)\times\nabla f_2(y_1,y_2)$.  Suppose that $R=[a,b]\times[c,d]$ is a square subset of $\RR^2$ such that $a,b,c,d\in\ZZ$.  The number of regions after the subdivision step performed by Algorithm \ref{alg:subdivision} is bounded above by the maximum of $1$ and 
		$$4 \int_R \min\left\lbrace \frac{1}{G_{00}}, \frac{1}{G_{01}}, \frac{1}{G_{10}}, \frac{1}{G_{11}} \right\rbrace dA.$$ Moreover, the algorithm does not terminate if and only if the integral diverges. 
	\end{theorem}
	
	\subsection{Nonadaptive bounds}
	
	We begin by recalling the structure for complexity estimates for nonadaptive bounds from \citet{BGT:2020}.
	
	\begin{definition}[{cf.} {\citet[Definition 2]{BGT:2020}}]
		Let $R=[a,b]\times[c,d]\subseteq\RR^2$ a rectangle and $C$ a predicate.  $C$ is a {\em diameter distance test} if there is a closed set $V$ and positive constant $K$ such that for all squares $B$, we have 
		$$
		\text{If }\left(\Diam(B) < K \max_{x \in J}d(x, V) \right), \text{ then } C(B) = \textsc{True},
		$$
		where $d(x,V)=\min_{v \in V}d(x, v).$
	\end{definition}
	Suppose that there is a collection of predicates $\{C_i\}$, each with a corresponding closed set $V_i$, and $K$ a uniform constant for all tests.  Moreover, define the separation bound $\delta$ as $\displaystyle 0 < \delta \leq \min_{x\in R}\max_{i}d(x, V_i).$  The product $K\delta$ is a uniform bound on the diameter of a square which is guaranteed to be terminal.  Then \citet[Corollary 21]{BGT:2020} implies that when $R$ is square, the number of boxes produced by Algorithm \ref{alg:subdivision} is at most
	\begin{equation}\label{eq:boxcount}
		\max\left\{1,\frac{4\Diam(R)^2}{K^2\delta^2}\right\}.
	\end{equation}
	From our local size bounds, we see that a uniform bound on $K$ for our tests is
	\begin{equation}\label{eq:Kbound}
		K=\min\left\{\frac{ 2\sqrt{2}\ln\left(\frac{5}{4}\right)}{ 4\deg(f_i) + \sqrt{2}\ln\left(\frac{5}{4}\right) },
		\frac{ 2\sqrt{2}\ln\left(\frac{65}{64}\right)}{ 32\left(\deg(f_i) - 1\right) + 2\ln\left( \frac{65}{64} \right) },
		\frac{ 2\sqrt{2}\ln\left( \frac{65}{64} \right)}{ 416\left(d - 1\right) + 26\ln\left( \frac{65}{64}\right) }\right\}.
	\end{equation}
	This bound is immediate from our local size bounds from Section \ref{sec:adaptivebounds}, {cf.} \citet[Corollaries 17 and 20]{BGT:2020}.
	
	Since the tests in Algorithm \ref{alg:subdivision} are compound tests, the required closed sets are built out of multiple varieties.  For instance, for the $C_{01}$ test and a square $B$, if $\displaystyle\Diam(B)<K\max_{x\in B}d_{\CC}(x,\cV(f_1))$ and $\displaystyle\Diam(B)<K\max_{x\in B}d_{\CC}(x,\cV(g_2))$, then $C_{01}(B)$ is \true.  Therefore, we must find a separation bound $\delta$ so that
	\begin{multline}\label{eq:deltadef}
		0<\delta\leq\min_{x\in R}\max \left\lbrace \min \left\lbrace \dist_{\CC}(x, f_1), \dist_{\CC}(x, f_2) \right\rbrace, \min \left\lbrace \dist_{\CC}(x, f_1), \dist_{\CC}((x,x), g_2) \right\rbrace, \right.\\
		\left.\min \left\lbrace \dist_{\CC}((x,x), g_1), \dist_{\CC}(x, f_2) \right\rbrace, \min \left\lbrace \dist_{\CC}((x,x), g_1), \dist_{\CC}((x,x), g_2), \dist_{\CC}((x,x), h) \right\rbrace \right\rbrace.
	\end{multline}
	By exchanging the maximum and the minimums, we find that Inequality (\ref{eq:deltadef}) can be rewritten as
	\begin{multline*}
		0<\delta\leq\min_{x\in R}\min \left\lbrace \max \left\lbrace \dist_{\CC}(x, f_1), \dist_{\CC}((x, x), g_1) \right\rbrace, \max \left\lbrace \dist_{\CC}(x, f_2), \dist_{\CC}((x, x), g_2) \right\rbrace,\right.\\
		\left.\max \left\lbrace \dist_{\CC}(x, f_1), \dist_{\CC}(x, f_2), \dist_{\CC}((x,x), h)\right\rbrace  \right\rbrace.
	\end{multline*}
	
	For the remainder of this section, we heavily rely on the proofs in \citet{BGT:2020}.  In particular, we omit some details and cite results at the potential cost of some clarity.
	
	We use a direct application of \citet[Theorem 24]{BGT:2020} to compute lower bounds on both $\max \left\lbrace \dist_{\CC}(x, f_1), \dist_{\CC}((x, x), g_1) \right\rbrace$ and $\max \left\lbrace \dist_{\CC}(x, f_2), \dist_{\CC}((x, x), g_2) \right\rbrace$.  In particular, the theorem uses only the degrees and the number of variables to reach its result.  These same degree bounds and number of variables apply in this case.  Specializing the result to the bivariate case results in a bound from below of $2^{-O(d^{25}(d+\lg H))}$, where $H$ is the maximum absolute value of the coefficients of $f_1$, the coefficients of $f_2$, and the coordinates of the corners of the initial region $R$, and $d=\max\{\deg(f_1),\deg(f_2)\}$.
	
	We now focus on a lower bound for $\max \left\lbrace \dist_{\CC}(x, f_1), \dist_{\CC}(x, f_2), \dist_{\CC}((x,x), h)\right\rbrace$.  In order to bound this, we appeal to the computations in \citet{BGT:2020}.  As a first step, we make the ambient spaces of all three varieties the same by defining $\cV(f_1^\Delta)$ and $\cV(f_2^\Delta)$ to be the images of $\cV(f_1)$ and $\cV(f_2)$ in the diagonal of $\CC^4$, that is $\cV(f_i^\Delta)=\{(x_1,x_2,x_1,x_2):(x_1,x_2)\in\cV(f_i)\}.$  Since $f_1$ and $f_2$ define smooth real curves which intersect transversely in $R$, $f_1(x)$, $f_2(x)$, and $h(x,x)$ cannot all vanish simultaneously for $x\in R$.  This implies that $\cV(f_1^\Delta)$, $\cV(f_2^\Delta)$, and $\cV(h)$ cannot all intersect simultaneously at points $(x,x)\in R\times R$.  An application of \citet[Proposition 23]{BGT:2020} results in a bound for when these three varieties do not intersect.
	
	\begin{proposition}
		Let $f_1,f_2\in\ZZ[x_1,x_2]$ be of degree at most $d$ defining smooth real curves which intersect transversely.  Define $h\in\RR[x_1,x_2,y_1,y_2]$ as $h(x_1,x_2,y_1,y_2)=\nabla f_1(x_1,x_2)\times \nabla f_2(y_1,y_2)$.  Suppose that $R$ is an axis-aligned square whose corners have integral coordinates.  Let $H$ be the maximum absolute value of the coefficients of $f_1$, the coefficients of $f_2$, and the coordinates of the corners of $H$.  Let $f_i^\Delta=\{f_i,x_1-y_1,x_2-y_2\}$ be the polynomial system corresponding to the image of $\cV_{\CC}(f_i)$ in the diagonal of $\CC^4$ and $R^\Delta=\{(x,x):x\in R\}$ is the image of $R$ in the diagonal of $\CC^4$.  Let $(x,x)\in\cV_\CC(f_1^\Delta,f_2^\Delta,h)$, then the distance between $(x,x)$ and $R^\Delta$ is at least $2^{-O(d^8(d+\lg H))}$.
	\end{proposition}
	
	\begin{proof}
		The proof is nearly identical to the proof of \citet[Proposition 23]{BGT:2020}.  Namely, the only difference in the proof is in the number of inequalities and equalities needed to achieve the bound.  As in the original proof, $20$ inequalities are used to define the region, $6$ equalities each are used to define $f_1$ and $f_2$, and $2$ equalities are used to define $h$.  These equalities and inequalities have the same bit-sizes as those appearing in \citet[Proposition 23]{BGT:2020}.  This results in $6$ more equalities than in the original proof, but the resulting bound is unchanged when specialized to the bivariate case, that is, $2^{-O(d^8(d+\lg H))}$
	\end{proof}
	
	We may now follow the proofs of \citet[Proposition 22 and Theorem 24]{BGT:2020} to find a separation bound between the real parts of $\cV(f_1^\Delta,f_2^\Delta)$ and $\cV(h)$.  
	
	\begin{proposition}\label{prop:separatefh}
		Let $f_1,f_2\in\ZZ[x_1,x_2]$ be of degree at most $d$ defining smooth real curves which intersect transversely.  Define $h\in\RR[x_1,x_2,y_1,y_2]$ as $h(x_1,x_2,y_1,y_2)=\nabla f_1(x_1,x_2)\times \nabla f_2(y_1,y_2)$.  Suppose that $R$ is an axis-aligned square whose corners have integral coordinates.  Let $H$ be the maximum absolute value of the coefficients of $f_1$, the coefficients of $f_2$, and the coordinates of the corners of $H$.  Let $f_i^\Delta=\{f_i,x_1-y_1,x_2-y_2\}$ be the polynomial system corresponding to the image of $\cV_{\CC}(f_i)$ in the diagonal of $\CC^4$ and $R^\Delta=\{(x,x):x\in R\}$ is the image of $R$ in the diagonal of $\CC^4$.  The distance between $\cV_\CC(f_1^\Delta,f_2^\Delta)$ and $\cV_\CC(h)$ is $2^{-O(d^{25}(d+\lg H))}$.
	\end{proposition}
	\begin{proof}
		The proof is nearly identical to \citet[Proposition 22]{BGT:2020}, using $20$ inequalities to define the region, $6$ equalities for each $f_i$ and $2$ equalities for $h$.  This corresponds to $6$ extra equalities, but does not change the overall complexity of the result.  The scaled bit-sizes are also unchanged as was used in \citet{BGT:2020} ({cf.} the proof of Lemma \ref{lem:localsizeboundcross}).  After substituting the value of $k=O(d^8(d+\lg H))$, as in \citet[Proposition 24]{BGT:2020}, we arrive at the desired bound.
	\end{proof}
	
	By the triangle inequality we conclude that for any $x\in R$, either $\dist_\CC((x,x),h)=2^{-O(d^{25}(d+\lg H))}$ or $\dist_\CC((x,x),(f_1^\Delta,f_2^\Delta))=2^{-O(d^{25}(d+\lg H))}$.  We now bound the distance $\max \left\lbrace \dist_{\CC}(x, f_1), \dist_{\CC}(x, f_2)\right\rbrace$.  To do this, we follow \citet[Proposition 22]{BGT:2020}.  It is enough to focus on the distance $\dist_\CC(f_1,f_2)$ and not $\dist_\CC(f_1^\Delta,f_2^\Delta)$ since these two distances are related by a factor of $\sqrt{2}$.
	
	\begin{lemma}\label{lem:oldProp22}
		Let $f_1,f_2\in\ZZ[x_1,x_2]$ be of degree at most an even number $d$ and suppose that $B\subseteq \RR^2$ is a square whose corners are rational numbers.  Let $H$ be the maximum absolute value of the coefficients of $f_1$ and $f_2$ as well as the numerator of the corners of $B$.  Suppose that the denominators of the corners of $B$ are powers of $2$ bounded by $2^{O(d^8(d+\lg H))}$.  Let $k$ be a positive integer so that for any $x\in\cV(f_1,f_2),$ the distance between $x$ and $B$ is more than $\frac{\sqrt{2}}{2^{k-1}}.$  Then,
		$$
		\min_{x\in B}\max\{\dist_{\CC}(x,f_1),\dist_{\CC}(x,f_2)\}\geq \frac{1}{2^{k+O(d^8(d+\lg H))}}\left(2^{d(1+k+O(d^8(d+\lg H)))}Hd^8\right)^{-512d^8}.
		$$
	\end{lemma}
	\begin{proof}
		Let $\varepsilon=\frac{1}{2^k}$ and let $B_\varepsilon$ be a thickening of $B$ by $\varepsilon$.  More precisely, $B_\varepsilon$ is the Minkowski sum of $B$ with $([-\varepsilon,\varepsilon]\times[-i\varepsilon,i\varepsilon])^2.$  By assumption, $\cV_{\CC}(f_1,f_2)\cap B_\varepsilon$ is empty.  We now apply a homothety centered at the origin of a factor of $2^{k+O(d^8(d+\lg H))}$, which makes the coefficients of the image of $B$ 
		integers.  The resulting region has the following bounding inequalities
		\begin{align*}
			2^{k+O(d^8(d+\lg H))}H-2^{O(d^8(d+\lg H))}&\leq \Re(x_i)\leq 2^{k+O(d^8(d+\lg H))}H+2^{O(d^8(d+\lg H))}\\
			-2^{O(d^8(d+\lg H))}&\leq\Im(x_i)\leq 2^{O(d^8(d+\lg H))}.
		\end{align*}
		Moreover, after clearing fractions, the homothety changes the absolute value of the coefficients of $f_1$ and $f_2$ to $2^{dk+O(d^9(d+\lg H))}H$.  Identifying $\CC^2$ with $\RR^4$, we see that the box is defined by $8$ inequalities whose coefficients are bounded by $2^{k+O(d^8(d+\lg H))}H$.  Moreover, the polynomials $f_1$ and $f_2$ are split into their real and complex parts.  By splitting in this way, the coefficients may grow by a binomial coefficient which is trivially bounded by $2^d$.  Therefore, we have four equalities whose coefficients are bounded by $2^{d+dk+O(d^9(d+\lg H))}H$.
		
		Applying \citet[Theorem 1.2]{JPT:2013} to the inequalities for $B_\varepsilon$ with $f_1$ when compared to the inequalities for $B_\varepsilon$ with $f_2$, we get that the separation bound within the scaled copy of $B_\varepsilon$ is 
		$$
		\left(2^{d(1+k+O(d^8(d+\lg H)))}Hd^8\right)^{-512d^8}.
		$$
		Reducing this bound by the homothety results in the desired bound.
	\end{proof}
	
	Finally, to compute $\max \left\lbrace \dist_{\CC}(x, f_1), \dist_{\CC}(x, f_2), \dist_{\CC}((x,x), h)\right\rbrace$, we observe that by Proposition \ref{prop:separatefh} and the triangle inequality either $\dist_{\CC}((x,x), h)=2^{-O(d^{25}(d+\lg H))}$ or $\dist_{\CC}(x,(f_1^\Delta,f_2^\Delta))=2^{-O(d^{25}(d+\lg H))}$.  In the latter case, we substitute $k=O(d^{25}(d+\lg H))$ into Lemma \ref{lem:oldProp22} to reach a separation bound of $2^{-O(d^{34}(d+\lg H))}$.  Collecting these results, we have the following theorem.
	
	\begin{theorem}\label{thm:nonadaptivebound}
		Let $f_1,f_2\in\ZZ[x_1,x_2]$ be smooth and of degree at most $d$.  Define $g_1,g_2\in\RR[x_1,x_2,y_1,y_2]$ as $g_i(x_1,x_2,y_1,y_2)=\langle \nabla f_i(x_1,x_2),\nabla f_i(y_1,y_2)\rangle$ and $h\in\RR[x_1,x_2,y_1,y_2]$ as $h(x_1,x_2,y_1,y_2)=\nabla f_1(x_1,x_2)\times \nabla f_2(y_1,y_2)$.  Suppose that $R$ is an axis-aligned square whose corners have integral coordinates.  Let $H$ be the maximum absolute value of the coefficients of $f_1$, $f_2$,  and the coordinates of the corners of $H$.  The number of terminal regions produced by Algorithm \ref{alg:subdivision} is $2^{O(d^{34}(d+\lg H))}.$
	\end{theorem}
	\begin{proof}
		By Equation \ref{eq:boxcount}, the number of boxes produced by the algorithm is $2^{O(2\lg(\Diam(R))-2\lg K-2\lg \delta)}$.  Since the diameter of $R$ is $O(H)$, $K=O(1/d)$ by Equation (\ref{eq:Kbound}), and $\delta = 2^{-O(d^{34}(d+\lg H))}$, the result follows.
	\end{proof}
	
	\subsection{Bit-Complexity}
	
	As above, the bit-complexity calculation is very similar to the bit-complexity calculation from \citet{BGT:2020}.  In fact, since $h$, $g_1$, and $g_2$ all have the same degree bounds and same bounds on the sizes of their coefficients, the cost to compute all of the predicates needed for Algorithm \ref{alg:subdivision} is no more than three times the cost of the predicates for the Plantinga and Vegter algorithm as discussed in \citet{BGT:2020}.  Using the upper bound on the number of boxes from Theorem \ref{thm:nonadaptivebound}, and the corresponding lower bound on the minimum width of boxes, the bit-complexity analysis from \citet{BGT:2020} carries over to this case.
	\begin{theorem}[{cf.} {\citet[Corollary 28]{BGT:2020}}]
		Let $f_1,f_2\in\ZZ[x_1,x_2]$ be smooth and of degree at most $d$.  Define $g_1,g_2\in\RR[x_1,x_2,y_1,y_2]$ as $g_i=\langle \nabla f_i,\nabla f_i\rangle$ and $h\in\RR[x_1,x_2,y_1,y_2]$ as $h(x_1,x_2,y_1,y_2)=\nabla f_1(x_1,x_2)\times \nabla f_2(x_1,x_2)$.  Suppose that $R$ is an axis-aligned square whose corners have integral coordinates.  Let $\tau=\lg H$ be the maximum bit-size of the coefficients of $f_1$, $f_2$, and the corners of $R$.  The overall bit-complexity of Algorithm \ref{alg:subdivision} is 
		$$
		2^{O(d^{34}(d+\tau))}\widetilde{O}_B(d^{37}(d+\tau)),
		$$
		where the $\sim$ indicates that logarithmic factors have been suppressed and the $B$ indicates that this is measuring a bit-complexity.
	\end{theorem}
	
	\begin{remark}
		We expect that the complexity of Algorithm \ref{alg:subdivision} dominates the complexity of Algorithm \ref{alg:full}, but we do not address the additional subdivisions required for boxes on the boundary and to ensure that snakes are disks.
	\end{remark}
	
	There are also adaptive bounds for the bit-complexity of Algorithm \ref{alg:subdivision}.  Once again, since the predicates needed for Algorithm \ref{alg:subdivision} have the same data as the tests in the Plantinga and Vegter algorithm, the complexity bounds from \citet{BGT:2020} apply directly.  In particular the cost functions for computing $C_0^{f_1}$ and $C_0^{f_2}$ is 
	$$
	h_0(y) = \left( d^3 \lg(w(I)) - \frac{d^3}{2}\lg(y) + d^2\tau \right)k_0\left(d, \tau, 2\right).
	$$
	On the other hand, the cost function for computing $C_1^{f_1}$, $C_1^{f_2}$, and $C_1^\times$ is 
	$$
	h_1(y) = \left( 2^4d^{5} \lg(w(I)) - \frac{2^4d^{5}}{2}\lg(y) + 2^{4}d^4\tau \right)k_1\left(d, \tau, 2\right).
	$$
	In these expressions $k_0$ and $k_1$ are the maximum values of the suppressed terms of $\widetilde{O}_B$ over the region $R$, see \citet[Section 5.3]{BGT:2020} for additional details.
	
	Since, on every step of the subdivision algorithm, all of $C_0^{f_1}$, $C_0^{f_2}$, $C_1^{f_1}$, $C_1^{f_2}$, and $C_1^\times$ are applied, we let $h=2h_0+3h_1$.  Then, the computations in \citet{BGT:2020} lead directly to an overall adaptive bit-complexity bound.
	\begin{theorem}
		Let $f_1, f_2 \in \ZZ[x,y]$ be both smooth and $R \subseteq \RR^2$ be an axis-aligned square whose corners have integral coordinates. Let $\tau$ be the maximum bit-size of the coefficients of $f_1, $ the coefficients of $f_2,$ and the corners of $R.$ The overall bit-complexity of Algorithm \ref{alg:subdivision} is the maximum of $h(w(R)^n) $ and \[ 4\int_R \min\left\lbrace \frac{h\left( \frac{1}{4} G_{00}(x) \right)}{G_{00}(x)}, \frac{h\left( \frac{1}{4} G_{01}(x) \right)}{G_{01}(x)}, \frac{h\left( \frac{1}{4} G_{10}(x) \right)}{G_{10}(x)}, \frac{h\left( \frac{1}{4} G_{11}(x) \right)}{G_{11}(x)} \right\rbrace \, dA,  \] where $h(x) = 2h_0(x) + 3h_1(x).$
	\end{theorem}
	
	\begin{remark}
		We expect that the analyses for the Plantinga and Vegter algorithm in \citet{CET:2022} and \citet{TT:2020} can be adapted to provide the complexity for the average case, but leave this step to future work.
	\end{remark}
	
	\section{Approximating more than two curves}\label{sec:arbitrarycurves}
	
	We now consider the main problem of the paper, approximating any number of curves in the plane.  In particular, suppose that $f_1,\dots,f_n\in\ZZ[x,y]$ define $n$ smooth curves in the plane such that these curves intersect transversely and no three of these curves intersect simultaneously.  A first attempt to simultaneously approximate these curves might be first to run the subdivision step, Algorithm \ref{alg:subdivision}, on each pair of curves.  Second, perform the remaining steps of Algorithm \ref{alg:full} on the common refinement of these subdivisions.  Unfortunately, we cannot guarantee that the output of this algorithm is topologically correct since the isotopies guaranteed by Theorem \ref{thm:twocurves} might not be compatible.
	
	The main challenge with approximating more than two curves is that the ambient isotopies are not guaranteed to preserve the order of intersections between multiple curves.  In order to avoid this, we require that the intersections are far enough apart so that two intersections cannot interfere with each other through the homotopy.  By Propositions \ref{prop:plus_sign} and \ref{prop:snake}, all intersections of the varieties are within the 1-neighborhoods of a box containing an intersection of the approximations or a snake containing an intersection of the approximations.  The final step to make sure that the topology is correct is to ensure that these neighborhoods are all disjoint so that the ambient isotopies constructed by Theorem \ref{thm:twocurves} do not allow the intersections of the varieties to interact.
	
	\begin{algorithm}[hbt]
		\caption{Simultaneous approximation algorithm for any number of curves}
		\label{alg:morecurves}
		\begin{algorithmic}[1]
			\Require polynomials $f_1,f_2,\dots,f_n\in\ZZ[x,y]$, with $\cV(f_i)$ a smooth curve for all $i$ such that $\cV(f_i)$ and $\cV(f_j)$ intersect transversely when $i\not=j$, and no three of $\cV(f_i)$, $\cV(f_j)$, and $\cV(f_k)$ simultaneously intersect when $i$, $j$, and $k$ are all distinct.  A square region $R$ with integral corners such that $\cV(f_i)$ and $\cV(f_j)$ do not intersect on the boundary of $R$ for $i\not=j$.
			\Ensure approximations $\cA(f_1),\dots,\cA(f_n)$ such that 
			$(\cA(f_1),\dots,\cA(f_n))$ approximates $(\cV(f_1),\dots,\cV(f_n))$.
			\State For each pair $(f_i,f_j)$, with $i\not=j$, subdivide $R$ using Algorithm \ref{alg:subdivision}.
			\State Further subdivide boxes $B$ intersecting $\partial R$ until they satisfy $C_0^{f_i}(B)=\true$ for at least $n-1$ curves.
			\State Further subdivide boxes until the side lengths of neighboring boxes differ by at most a factor of two.
			\State Compute the Plantinga and Vegter curve approximations.
			\State Subdivide boxes which have shared approximations, but fail the condition of Definition \ref{def:snake}.
			\item Subdivide boxes or snakes with intersections of the approximations whose $1$-neighborhoods intersect.
			\State Recompute the Plantinga and Vegter curve approximations, if necessary.
			\State For any snakes, apply Lemma \ref{lem:snakecrossings}.
			\renewcommand{\algorithmicif}{\hspace{\algorithmicindent}\textbf{if}}
			\If{there is no crossing}
			\renewcommand{\algorithmicif}{\textbf{if}}
			\State \hspace{\algorithmicindent}slightly separate the edges of the snake so that the common edges do not overlap.
			\renewcommand{\algorithmicelse}{\hspace{\algorithmicindent}\textbf{else}}
			\Else\hspace{\algorithmicindent}
			\renewcommand{\algorithmicelse}{\textbf{else}}
			\State \hspace{\algorithmicindent} slightly separate the ends of the snake and add an explicit crossing in the middle of the snake.
			\EndIf
			\State\Return {the approximations}
		\end{algorithmic}
	\end{algorithm}
	
	\begin{theorem}\label{thm:morecurves}
		Suppose that $f_1,\dots,f_n\in\ZZ[x,y]$, $R=[a,b]\times[c,d]\subseteq\RR^2$ a rectangle such that $a,b,c,d\in\ZZ$.  Suppose also that $\cV(f_i)$ is nonsingular within $R$ for all $i$, there does not exist a pair $i\not=j$ such that $\cV(f_i,f_j)$ includes a point on the boundary of $R$, for every pair $i\not= j$, $\cV(f_i)$ and $\cV(f_j)$ intersect transversely within $R$, and there does not exist a triple $i\not=j$, $i\not=k$, and $j\not=k$ such that $\cV(f_i,f_j,f_k)$ includes a point in $R$.  Let $\cA(f_1),\dots,\cA(f_n)$ be the output of Algorithm \ref{alg:morecurves}. There exists a set $U$ containing $R$ and an ambient isotopy defined on $U$ that simultaneously takes $\cV(f_i)$ to $\cA(f_i)$ for all $i$.
	\end{theorem}
	\begin{proof}
		We begin with the same simplification of Theorem \ref{thm:twocurves}.  This reduces the problem to studying the case where $R$ is a square and allows us to focus on the topological correctness in the interior of $R$.
		
		We begin the isotopy by deforming space so that each excursion is moved into a neighboring box.  This can be done as no intersection can involve two excursions, as discussed in Theorem \ref{thm:twocurves}.  Therefore, it is possible to apply this isotopy without introducing or removing any intersections between the curves.  
		
		After applying this first isotopy, let $\widetilde{\cV}(f_i)$ be the result. By the correctness of the Plantinga and Vegter algorithm each $\widetilde{\cV}(f_i)\cap B$ is ambient isotopic to $\cA(f_i)\cap B$ within each box $B$.  We must now show that these isotopies are compatible.  By the assumption on the 1-neighborhoods of crossings of approximations and snakes and appealing to the proof of Theorem \ref{thm:twocurves}, we see that in each box $B$, there is at most one of the following two situations: 
		(1) a pair of $i\not=j$ such that $\widetilde{\cV}(f_i)\cap B$ and $\widetilde{\cV}(f_j)\cap B$ cross or (2) a pair of $i\not=j$ with endpoints of $\widetilde{\cV}(f_i)\cap B$ and $\widetilde{\cV}(f_j)\cap B$ which are on the same edge of $B$.  These two cases correspond to a crossing of the approximations and a snake, respectively.  
		
		By the isotopy argument in Theorem \ref{thm:twocurves}, if there is a pair of $i\not=j$ as identified above, these curves will deform to an intersection or a snake.  On the other hand, since the remaining curves do not intersect and are on different edges of the subdivision, the proof shows that they can be deformed, without introducing new intersections into the straight edges of the approximation.
	\end{proof}
	
	\section{Examples}\label{sec:examples}
	We end with a gallery of examples, illustrating the output of our algorithm.
	
	\begin{example}\label{ex:flower}
		Consider the pair of curves given by $f_1 = x^2 + y^2 - 1$ and $f_2 =  2000y^8 + 8000x^2y^6 + 12000x^4y^4 + 8000x^6y^2 + 2000x^8 - 3000y^6 + 9000x^2y^4 - 21000x^4y^2 - 1000x^6 + 1$, as seen in Figure \ref{fig:flower}(a).  Our algorithm correctly finds the twelve intersections between the two curves.  In addition, in Table \ref{table:flower}, we compare the number and types of boxes produced by our algorithm and what would be produced by applying the Plantinga and Vegter algorithm twice, that is, without invoking $C_1^\times$.  We observe that the new predicates only introduce a modest number of new boxes in this example.
		\begin{table}[htb]
			\begin{center}
				\begin{tabular}{|c|c|c|c|c|c|}\hline
					&$C_0^{f_1},C_0^{f_2}$&$C_0^{f_1},C_1^{f_2}$&$C_1^{f_1},C_0^{f_2}$&$C_1^{f_1},C_1^{f_2},C_1^\times$&$C_1^{f_1},C_1^{f_2}$\\\hline
					Algorithm \ref{alg:full}&$1087$&$489$&$168$&$24$&\\\hline
					Plantinga and Vegter&$852$&$340$&$60$&&$36$\\\hline
				\end{tabular}
			\end{center}
			\caption{The number of boxes produced by Algorithm \ref{alg:full} and the Plantinga and Vegter algorithm on the curves in Example \ref{ex:flower}.}
			\label{table:flower}
		\end{table}
		\begin{figure}[hbt]
			\centering
			\includegraphics{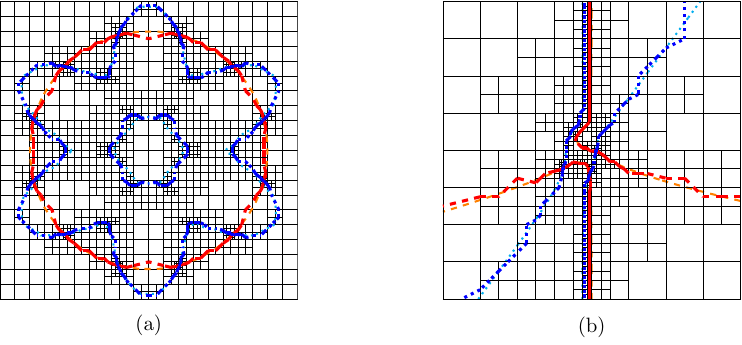}
			\caption{(a) Simultaneous approximation of $f_1 = x^2 + y^2 - 1$ (drawn in red) and $f_2 =  2000y^8 + 8000x^2y^6 + 12000x^4y^4 + 8000x^6y^2 + 2000x^8 - 3000y^6 + 9000x^2y^4 - 21000x^4y^2 - 1000x^6 + 1$ (drawn in blue). This example is inspired from a similar example in \citep{CWZ:2023}.  (b) Simultaneous approximation of $f_1 = 10xy^3 - x^2y^2 -6xy^2 + x^2y + 5x^3 + x^2 + 3y$ and $f_2 = 4xy^3 - x^2y^2 - 8x^4 + 2y^3 - 8x^3 + 4xy - 2x^2 - 7x + 2$. The predicate $C_1^\times$ induces intersections because the two curves are nearly parallel.}
			\label{fig:flower}
		\end{figure}  
	\end{example}
	
	\begin{example}\label{ex:cool_quartic}
		Consider the pair of curves $f_1 = 2y^4 + 3x^2y^2 + 8x^3y - 12y^2 - 18xy - 12x^2 + 18$ and $f_2 = y^4 + 4xy^3 + 3x^2y^2  + 4x^3y +  x^4 - 12y^2 - 18xy - 12x^2 + 18$, as seen in Figure \ref{fig:cool_quartic}.  There is a triple intersection of the two curves along each nearly parallel portion of the curves.  Our algorithm correctly finds and approximates all of these intersections.  In addition, in Table \ref{table:cool_quartic}, we compare the number and types of boxes produced by our algorithm and what would be produced by applying the Plantinga and Vegter algorithm twice, that is, without invoking $C_1^\times$.  We observe that the new predicates produce significantly more boxes due to the work needed to process the nearly parallel curves.
		\begin{table}[htb]
			\begin{center}
				\begin{tabular}{|c|c|c|c|c|c|}\hline
					&$C_0^{f_1},C_0^{f_2}$&$C_0^{f_1},C_1^{f_2}$&$C_1^{f_1},C_0^{f_2}$&$C_1^{f_1},C_1^{f_2},C_1^\times$&$C_1^{f_1},C_1^{f_2}$\\\hline
					Algorithm \ref{alg:full}&$1252$&$590$&$584$&$140$&\\\hline
					Plantinga and Vegter&$202$&$86$&$58$&&$72$\\\hline
				\end{tabular}
			\end{center}
			\caption{The number of boxes produced by Algorithm \ref{alg:full} and the Plantinga and Vegter algorithm on the curves in Example \ref{ex:cool_quartic}.}
			\label{table:cool_quartic}
		\end{table}
		\begin{figure}[htb]
			\centering
			\includegraphics{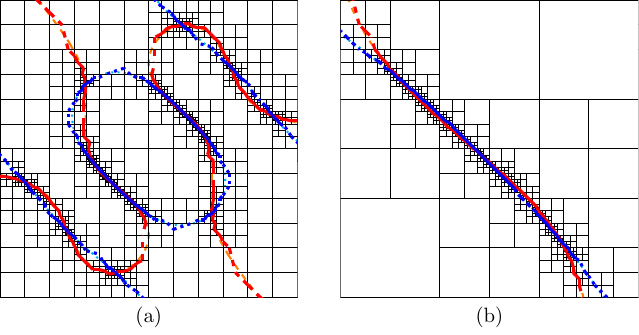}
			\caption{(a) Simultaneous approximation of $f_1 = 2y^4 + 3x^2y^2 + 8x^3y - 12y^2 - 18xy - 12x^2 + 18$ and $f_2 = y^4 + 4xy^3 + 3x^2y^2  + 4x^3y +  x^4 - 12y^2 - 18xy - 12x^2 + 18$. This example has several close curves, each with three intersections.  (b) An enlarged image of a portion of (a), illustrating the three intersections along one of the nearly parallel portions of the curves.}
			\label{fig:cool_quartic}
		\end{figure}
	\end{example}
	
	\begin{example}\label{ex:random_example}
		Consider the curves $f_1 = 10xy^3 - x^2y^2 -6xy^2 + x^2y + 5x^3 + x^2 + 3y$ and $f_2 = 4xy^3 - x^2y^2 - 8x^4 + 2y^3 - 8x^3 + 4xy - 2x^2 - 7x + 2$, as seen in Figure \ref{fig:flower}(b).  The two curves share an asymptote, and our algorithm correctly separates the two curves as they become nearly parallel.  In addition, in Table \ref{table:random_example}, we compare the number and types of boxes produced by our algorithm and what would be produced by applying the Plantinga and Vegter algorithm twice, that is, without invoking $C_1^\times$.  We observe that the new predicates produce significantly more boxes due to the work needed to separate the curves as they become nearly parallel.
		\begin{table}[htb]
			\begin{center}
				\begin{tabular}{|c|c|c|c|c|c|}\hline
					&$C_0^{f_1},C_0^{f_2}$&$C_0^{f_1},C_1^{f_2}$&$C_1^{f_1},C_0^{f_2}$&$C_1^{f_1},C_1^{f_2},C_1^\times$&$C_1^{f_1},C_1^{f_2}$\\\hline
					Algorithm \ref{alg:full}&$310$&$186$&$171$&$6$&\\\hline
					Plantinga and Vegter&$188$&$71$&$69$&&$39$\\\hline
				\end{tabular}
			\end{center}
			\caption{The number of boxes produced by Algorithm \ref{alg:full} and the Plantinga and Vegter algorithm on the curves in Example \ref{ex:random_example}.}
			\label{table:random_example}
		\end{table}

	\end{example}
	\begin{example}
		Consider the curves $f_1 = 16y^2 + 16x^2 - 24y - 16x - 3$, $f_2 = 16y^2 + 16x^2 - 24y + 16x - 3$ and $f_3 = 100y^2 + 100x^2 + 180y - 19$.  The output of Algorithm \ref{alg:morecurves} for these three curves appears in Figure \ref{fig:threecurves}.  Even though there are three nearby intersections, our algorithm correctly approximates both their locations and correctly computes the orders in which the curves intersect each other.
		
		\begin{figure}[hbt]
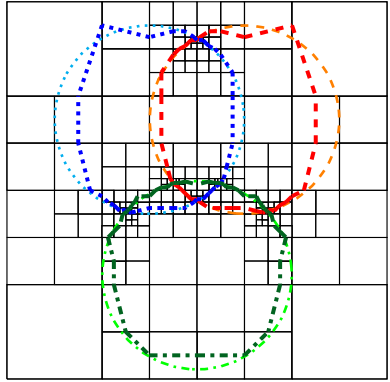

			\centering
			\include{figures_source/three_curve}
			\caption{Simultaneous approximation of $f_1 = 16y^2 + 16x^2 - 24y - 16x - 3$, $f_2 = 16y^2 + 16x^2 - 24y + 16x - 3$ and $f_3 = 100y^2 + 100x^2 + 180y - 19$. For approximating more than two curves, we see in this example that the intersections from distinct pairs are in separate neighborhoods.}
			\label{fig:threecurves}
		\end{figure}
	\end{example}
	\bibliographystyle{plainnat}
	\bibliography{approximation}
\end{document}